\newtheorem*{theorem*}{Theorem}
\newtheorem{definition}{Definition}
\newtheorem{lemma}{Lemma}
\newtheorem{proposition}{Proposition}
\newtheorem{theorem}{Theorem}
\newtheorem{remark}{Remark}
\newcommand{\Z}{{\mathbb Z}}
\newcommand{\R}{{\mathbb R}}
\newcommand{\D}{\mathcal{D}}
\newcommand{\varG}{{\bar G}}
\newcommand{\varV}{{\bar V}}
\newcommand{\varE}{{\bar E}}
\newcommand{\varS}{{\bar S}}
\newcommand{\vark}{{\bar k}}
\newcommand{\infmax}{{\sc InfMax}\xspace}
\newcommand{\SC}{{\sc SetCover}\xspace}
\newcommand{\VC}{{\sc VertexCover}\xspace}
\newcommand{\yes}{{\sf YES}\xspace}
\newcommand{\no}{{\sf NO}\xspace}
\DeclareMathOperator*{\E}{\mathbb{E}}
\DeclareMathOperator*{\argmax}{argmax}
\DeclareMathOperator*{\argmin}{argmin}
\DeclareMathOperator*{\sign}{sign}
\title{Beyond Worst-Case (In)approximability of Nonsubmodular Influence Maximization\thanks{The authors gratefully acknowledge the support of the National Science Foundation under Career Award 1452915 and AifT Award 1535912.}}
\author{
Grant Schoenebeck\\
University of Michigan\\
schoeneb@umich.edu
\and
Biaoshuai Tao\\
University of Michigan\\
bstao@umich.edu
}
\begin{document}
\date{}
\maketitle

\begin{abstract}
We consider the problem of maximizing the spread of influence in a social network by choosing a fixed number of initial seeds, formally referred to as the \emph{influence maximization problem}.
  It admits a $(1-1/e)$-factor approximation algorithm if the influence function is \emph{submodular}.
  Otherwise, in the worst case, the problem is NP-hard to approximate to within a factor of $N^{1-\varepsilon}$.
  This paper studies whether this worst-case hardness result can be circumvented by making assumptions about either the underlying network topology or the cascade model.
  All our assumptions are motivated by many real life social network cascades.

  First, we present strong inapproximability results for a very restricted class of networks called the \emph{(stochastic) hierarchical blockmodel}, a special case of the well-studied \emph{(stochastic) blockmodel} in which relationships between blocks admit a tree structure.
  We also provide a dynamic-programming-based polynomial time algorithm which optimally computes a directed variant of the influence maximization problem on hierarchical blockmodel networks.
  Our algorithm indicates that the inapproximability result is due to the bidirectionality of influence between agent-blocks.

  Second, we present strong inapproximability results for a class of influence functions that are ``almost'' submodular, called \emph{2-quasi-submodular}.
  Our inapproximability results hold even for any 2-quasi-submodular $f$ fixed in advance.
  This result also indicates that the ``threshold'' between submodularity and nonsubmodularity is sharp, regarding the approximability of influence maximization.
\end{abstract}

\section{Introduction}
\label{sect::intro}

A \emph{cascade} is a fundamental social network process in which a number of nodes, or agents, start with some property that they then may spread to neighbors.
The importance of network structure on  cascades has been shown to be relevant in a wide array of environments, including the adoption of products~\cite{Bass69,Brown87,GoldenbergLM01,MahajanMB90}, farming technology~\cite{ConleyU10}, medical practices~\cite{ColemanKM57}, participation in microfinancing~\cite{BanerjeeCDJ13}, and the spread of information over social networks~\cite{LermanG10}.

A natural question, known as \emph{the influence maximization problem} (\infmax), is how to place a limited number $k$ of initial seeds to maximize the spread of the resulting cascade~\cite{DomingosR01,RichardsonD02,KempeKT03,KempeKT05,MosselR10}.
To study influence maximization, we first need to understand how cascades spread.
Many cascade models have been proposed~\cite{Arthur89,Morris00,Watts02}, and two simple examples are the Independent Cascade model~\cite{KempeKT03,KempeKT05,MosselR10} and the Threshold model~\cite{Granovetter78}.
In the Independent Cascade model, each newly infected node infects each currently uninfected neighbor in the subsequent round with some fixed probability $p$.
In the Threshold model each node has a threshold (0, 1, 2, etc.) and becomes infected when the number of infected neighbors meets or surpasses that threshold.

In general, it is NP-hard even to approximate \infmax to within $N^{1-\epsilon}$ of the optimal expected number of infections~\cite{KempeKT03}.
However, assuming that we are using a particular class of cascades, called \emph{submodular} cascades, a straightforward greedy algorithm can efficiently find an answer that is at least a $(1- 1/e)$ fraction of the optimal answer.

In submodular cascade models, such as the Independent Cascade model, a vertex $v$'s marginal probability of becoming infected after a new neighbor $t$ is infected given $S$ is the set of $v$'s already infected neighbors is at least the marginal probability that $v$ is infected after $t$ is newly infected given $T\supseteq S$ is the set of $v$'s already infected neighbors~\cite{KempeKT03}.
Submodular cascade models are fairly well understood theoretically, and properties of these cascades are usually closely related to a network's degree distribution and conductance~\cite{Jackson08}.  Unfortunately, empirical research shows that many cascades are not submodular~\cite{RomeroMK11,BackstromHKL06,LeskovecAH06}.

Cascade models that violate the submodularity property are called \emph{nonsubmodular} cascades (or sometimes complex cascades).
In nonsubmodular contagion models, like the Threshold model, the marginal probability of being infected may increase as more neighbors are infected.
For example, if a vertex has a threshold of 2, then the first infected neighbor has zero marginal impact, but the second infected neighbor causes this vertex to become infected with probability 1.
Unlike submodular contagions, nonsubmodular contagions can require well-connected regions to spread~\cite{Centola10}.

Influence maximization becomes qualitatively different in nonsubmodular settings.
In the submodular case, seeds erode each other's effectiveness, and so should generally not be put too close together.
However, in the nonsubmodular case, it may be advantageous to place the initial adopters close together to create synergy and yield more adoptions.
The intuition that it is better to saturate one market first, and then expand implicitly assumes nonsubmodular influence in the cascades.

\textbf{Key Question: Can this worst-case inapproximability result of $N^{1-\epsilon}$  for nonsubmodular influence maximization be circumvented by making realistic assumptions about either the underlying network topology or the cascade model?}

We know a lot about what social networks look like, and previous hardness reductions make no attempt to capture realistic features of networks.
It is very plausible that by restricting the space of networks we might regain tractability.

In this paper, we consider two natural network topologies: the hierarchical blockmodel and the stochastic hierarchical blockmodel.
Each is a natural restriction on the classic \emph{(stochastic) blockmodel}~\cite{dimaggio1986structural,holland1983stochastic,WhiteBB76}  network structure.
In (stochastic) blockmodels, agents are partitioned into $\ell$ blocks.  The weight (or likelihood in the stochastic setting) of an edge between two vertices is based solely on blocks to which the vertices belong.
The weights (or probabilities) of edges between two blocks can be represented by an  $\ell \times \ell$ matrix.
In the (stochastic) hierarchical blockmodel, the structure of the $\ell \times \ell$ matrix is severely restricted to be ``tree-like''.\footnote{Previous work on community detection in networks~\cite{lyzinski2015community} defines a different, but related stochastic hierarchical blockmodel, where the hierarchy is restricted to two levels.}

Our (stochastic) hierarchical blockmodel describes the hierarchical structure of the communities, in which a community is divided into many sub-communities, and each sub-community is further divided, etc.
Typical examples include the structure of a country, which is divided into many provinces, and each province can be divided into cities.
Our model captures the natural observation that people in the same sub-community in the lower hierarchy tend to have tighter (or more numerous) bonds among each other~\cite{clauset2008hierarchical}.  Such a highly abstracted model necessarily fails to capture all features of social networks.  However, when we use this model as a lower bound, that is a strength as it shows that the problem is hard even in the case that communities structure can be represented by a tree.  Additionally, we feel that this is a very natural model which captures salient features of real-world networks, so our upper bounds in this model are still interesting.

We also consider restrictions on the cascade model.  The same research showing that cascades are often not submodular empirically also shows that the local submodularity often fails in one particular way---the second infected neighbor of an agent is, on average, more influential than the first.
When \citet{LeskovecAH06} studied the probability a person buys a book versus the number of incoming recommendations he receives, they observed a peak in the marginal probability of buying at $2$ incoming recommendations and then a slow drop.
While this work presents observational evidence, it suggests that if a person does not buy a book after
the first recommendation, but receives another, he is more likely to be persuaded by the second
recommendation. But thereafter, they are less likely to respond to additional recommendation

\citet{BackstromHKL06} made the same observation when they calculated the probability a person joins a community (e.g., LiveJournal and DBLP) as a function of the number $t$ of his friends already in the community.  
\citet{RomeroMK11} studied hashtag adoption in the Twitter network, and considered the fraction of users who adopt a hashtag after having $t$ neighbors' adoptions.  They coalesced their study's observations into a model where the marginal influence increases linearly from zero to two adopting neighbors and then linearly decreases thereafter.

These empirical studies motivate our study of the \emph{2-quasi-submodular} cascade model where the marginal effect of the second infected neighbor is greater than the first, but after that the marginal effect decreases.

\subsection{Our Results}
First, we present inapproximability results for \infmax in both the hierarchical blockmodel and the stochastic hierarchical blockmodel.
We show that \infmax is NP-hard to approximate within a factor of $N^{1-\varepsilon}$ for arbitrary $\varepsilon>0$.
Moreover, this result holds in the hierarchical blockmodel even if we assume all agents have unit threshold $\theta_v=1$.
We also extend this hardness result to the stochastic hierarchical blockmodel.

Moreover, for the hierarchical blockmodel, we present a dynamic-programming-based polynomial time algorithm for \infmax when we additionally assume the influence from one block to another is ``one-way''.
This provides insights to the above intractability result: the difficulty comes from the bidirectionality of influence between agent-blocks.

Secondly, we present an inapproximability result for the 2-quasi-submodular cascade model.
In particular, for \emph{any} 2-quasi-submodular influence function $f$, we show that it is NP-hard to approximate \infmax within a factor of $N^\tau$ when each agent has $f$ as its local influence function, where $\tau > 0$ is a constant depending on $f$.
This can be seen as a threshold result for approximability of \infmax, because if $f$ is submodular, then the problem can be approximated to within a $(1-1/e)$-factor, but if $f$ is just barely nonsubmodular the problem can no longer be approximated to within any constant factor.
We also show that, for any $\gamma\in(0,1)$, when only $N^\gamma$ agents have the fixed 2-quasi-submodular $f$ as their local influence functions and the remaining agents' local influence functions are submodular (or even identical to a fixed submodular function), \infmax is still NP-hard to approximate to within a factor of $N^\tau$, where $\tau>0$ is a constant depending on $f$ and $\gamma$.

Finally, we pose the open question of whether enforcing the aforementioned restrictions simultaneously on the network and the cascade renders the problem tractable.

\subsection{Related Work}
The influence maximization problem was posed by Domingos and Richardson~\cite{DomingosR01,RichardsonD02}.
Kempe, Kleinberg, and Tardos showed that a simple greedy algorithm obtains a $(1-1/e)$ factor approximation to the problem in the independent cascade model and linear threshold model~\cite{KempeKT03}, and extended this result to a family of submodular cascades which captures the prior results as a special case~\cite{KempeKT05}.
Mossel and Roch~\cite{MosselR10} further extended this result to capture all submodular cascades.

Perhaps most related to the present work, are several inapproximability results for \infmax.
If no assumption is made for the influence function, \infmax is NP-hard to approximate to within a factor of $N^{1 - \varepsilon}$ for any $\varepsilon > 0$~\cite{KempeKT03}.

Chen~\cite{chen2009approximability} found inapproximability results on a similar optimization problem: instead of maximizing the total number of infected vertices given $k$ initial targets, he considered the problem of finding a minimum-sized set of initial seeds such that all vertices will eventually be infected.
This work studied restrictions of this problem to various threshold models.

An important difference between our hardness result in Section~\ref{sect:hardness_2submod} and all the previous results is that our result holds for \emph{any} 2-quasi-submodular functions.
In particular, in this work, $f$ is fixed in advance before the NP-hardness reduction, while in previous work, specific influence functions were constructed within the reductions.

Several works looked at slightly different aspects of influence maximization.
Borgs,  Brautbar,  Chayes, and Lucier~\cite{borgs2012influence} provably showed fast running times when the influence function is the independent cascade model.
Lucier, Oren, and Singer~\cite{lucier2015influence} showed how to parallelize (in a model based on Map Reduce) the subproblem of determining the influence of a particular seed.
Seeman and Singer~\cite{seeman2013adaptive} studied the special case where only a subset of the nodes in the network are available to be infected.
They showed a constant factor approximation to the problem in their setting.
\citet{HeK2016} and \citet{chen2016robust} looked at a robust version of the problem where the exact parameters of the cascade are unknown.
Several works~\cite{Bharathi2007,GoyalK12} studied the problem as a game between two different infectors.

Following the work of Kempe, Kleinberg, and Tardos~\cite{KempeKT03,KempeKT05}, there were extensive works to solve \infmax based on the heuristic implementations of the greedy algorithm designed to be efficient and scalable~\cite{chen2009efficient,ChenYZ10,lucier2015influence}.

The notion of ``near submodularity'' was also proposed and studied in \cite{Horel16}.
Our definition differs from the one in \cite{Horel16} in that a 2-quasi-submodular function can be, intuitively, very far from being submodular (for example, the 2-threshold cascade model).
However, our reduction in Section~\ref{sect:hardness_2submod} works for all 2-quasi-submodular functions, and 2-quasi-submodular functions can be arbitrarily close to submodular functions.

Our algorithm in Section~\ref{sect::oneway} was further studied and generalized by Angell and Schoenebeck in~\cite{angell2016don}.
They showed that, empirically, this generalized algorithm works very well even for arbitrary graphs.
Specifically, they run dynamic programming on a hierarchical decomposition of general graphs, and, empirically, the algorithm effectively leverages the resultant hierarchical structures to return seed sets substantially superior to those of the greedy algorithm.


Similar to our inapproximability result for the 2-quasi-submodular cascade model in Sect.~\ref{sect:hardness_2submod} but independent to our work\footnote{Both \cite{li2017influence} and the WINE 2017 conference version of this paper were published in December, 2017.}, \citet{li2017influence} studies influence maximization with almost submodular local influence functions, and shows that, for any $\gamma,\varepsilon\in(0,1)$, \infmax is hard to approximate to within factor $1/N^{\frac{\gamma}{c}}$ even if the graph only contains $N^\gamma$ vertices that admit nonsubmodular local influence functions that are $\varepsilon$-almost submodular (and the remaining vertices admit submodular local influence functions), where $c=3+3/\log\frac2{2-\varepsilon}$.
When the number of vertices admitting $\varepsilon$-almost submodular local influence functions is a constant, \citet{li2017influence} provides a constant-factor approximation algorithm for \infmax.

Our result in Sect.~\ref{sect:hardness_2submod} can be seen as a generalization to the inapproximability result in Li et al.: their results construct a 2-quasi-submodular influence function $f$ (although $\varepsilon$ can be arbitrarily small and fixed in advance, making $f$ arbitrarily close to a submodular function); in contrast, our result holds for any $f$ that is fixed in advance and universal for all vertices.  In addition, our inapproximability result holds even for undirected graphs, while the graph constructed in the reduction in Li et al.\ is directed.

At a high level, the techniques of the two approaches are similar.  However, the gadgets used in our more general result require additional ideas.


In Appendix~\ref{sect:hardness_2submod_compare}, we show that our results seamlessly extend to the setting of Li et al.\ where only a sublinear fraction of vertices (e.g., $N^\gamma$) admit nonsubmodular local influence functions.

\section{Preliminaries}
\label{sect::prelim}
In general a \emph{cascade} on a graph is a stochastic mapping from a subset of vertices---the \emph{seed vertices}, to another set of vertices that always contain the seed vertices---the \emph{infected vertices}.
The cascades we study in this paper all belong to the general threshold model \cite{MosselR10}, which captures the local decision-making of vertices.

\begin{definition}\label{defi:GTM}
  The \emph{general threshold model} $I^{G}_{F,\D}$, is defined by a graph $G=(V,E)$ which may or may not be edge-weighted, and for each vertex $v$:
  \begin{enumerate}[noitemsep,nolistsep]
    \item[i.] a monotone local influence function $f_v:\{0, 1\}^{|\Gamma(v)|} \to\R_{\geq0}$ where $\Gamma(v)$ denotes the neighbor vertices of $v$ and $f_v(\emptyset) = 0$,  and
    \item[ii.]  a threshold distribution $\D_v$ whose support is $\R_{\geq0}$.  Let $F$  and $\mathcal{D}$ denote the collection of $f_v$ and $\mathcal{D}_v$ respectively.
  \end{enumerate}
  On input $S \subseteq V$,  $I^{G}_{F,\D}(S)$ outputs a set of vertices as follows:
  \begin{enumerate}[noitemsep,nolistsep]
    \item[1.] Initially only vertices in $S$ are infected, and for each vertex $v$ the threshold $\theta_v\sim\D_v$ is sampled from $\mathcal{D}_v$ independently.\footnote{The rationale of sampling thresholds \emph{after} the seeds' selection is to capture the scenario that the seed-picker does not have the full information on the agents in a social network, and this setting has been used in many other works \cite{KempeKT03,MosselR10}.}
    \item[2.]  In each subsequent round, a vertex $v$ becomes infected if the influence of its infected neighbors exceeds its threshold.
    \item[3.] The set of infected vertices is the output (after a round where no additional vertices are infected).
  \end{enumerate}
\end{definition}

We use $k$ to denote $|S|$, the number of seeds, and use $N$ to denote $|V|$, the total number of vertices in $G$.
Let
$$\sigma^{G}_{F,\D}(S) = \E\left[\left|I^{G}_{F,\D}(S)\right|\right]$$
be the \emph{expected} total number of infected vertices due to the influence of $S$, where the expectation is taken over the samplings of the thresholds of all vertices.
We refer to $\sigma^{G}_{F,\mathcal{D}}(\cdot)$ as the \emph{global influence function}.
Sometimes we write $\sigma(\cdot)$ with the parameters $G,F,\mathcal{D}$ omitted, when there is no confusion.
Because each $f_v$ is monotone, it is straightforward to see that $\sigma$ is monotone.

\begin{definition}\label{defi:infmax}
  The \infmax problem is an optimization problem which takes as inputs $G=(V,E)$, $F$, $\D$, and an integer $k$, and outputs $\argmax_{S \subseteq V: |S| = k} \sigma^{G}_{F,\mathcal{D}}(S)$, a seed set of size $k$ that maximizes the global influence.
\end{definition}

In this paper, we consider several special cases of the general threshold model $I^G_{F,\D}$ by making assumptions on the network topology $G$, or the cascade model\footnote{The phrase ``cascade model'' here, as well as in the abstract and Section~\ref{sect::intro}, refers to the description how each vertex is influenced by its neighbors, which is completely characterized by $F$ and $\D$ in the general threshold model.} $F,\D$.

\subsection{Assumptions on Graph $G$}
We consider two graph models---the \emph{hierarchical blockmodel} and the \emph{stochastic hierarchical blockmodel}, which are the special case of the well studied \emph{blockmodel}~\cite{WhiteBB76} and \emph{stochastic blockmodel}~\cite{holland1983stochastic} respectively.

\paragraph{The Hierarchical Blockmodel}\label{sect::Definition_HBGM}
\begin{definition}\label{defi:HBG}
  A \emph{hierarchical blockmodel} is an undirected \emph{edge-weighted} graph $G=(V,T)$, where $V$ is the set of all vertices of the graph $G$, and  $T=(V_T,E_T,w_T)$ is a node-weighted binary tree $T$ called a \emph{hierarchy tree}.
  In addition, $w_T$ satisfies $w_T(t_1)\leq w_T(t_2)$ for any $t_1,t_2\in V_T$ such that $t_1$ is an ancestor of $t_2$.\footnote{Since, as it will be seen later, each node in the hierarchy tree represents a community and its children represent its sub-communities, naturally, the relation between two persons is stronger if they are in a same sub-community in a lower level.}
  Each leaf node $t\in V_T$ corresponds to a subset of vertices $V(t)\subseteq V$, and the $V(t)$ sets partition the vertices of $V$.
  In general, if $t$ is not a leaf, we denote $V(t)=\cup_{t':\text{ a leaf, and an offspring of }t}V(t')$.

  For $u, v \in V$, the weight of the edge $(u, v)$ in $G$ is just the weight of the least common ancestor of $u$ and $v$ in $T$.
  That is $w(u, v) = \max_{t: u, v \in V(t)} w(t)$.
  If this weight is 0, then we say that the edge does not exist.
\end{definition}

To avoid possible confusion, we use the words \emph{node} and \emph{vertex} to refer to the vertices in $T$ and $G$ respectively.

Figure~\ref{fig:hierarchicalblockgraphexample} provides an example of how a hierarchy tree defines the weights of edges in the corresponding graph.

\begin{figure}
    \centerline{\includegraphics[width=\textwidth]{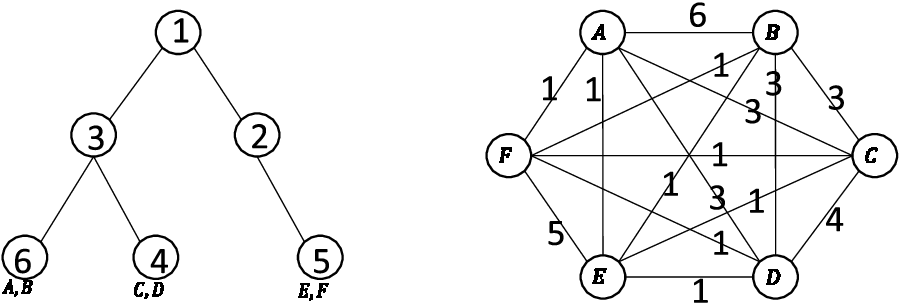}}
    \caption{An example of a hierarchy tree with its corresponding graph.  The number on each node of the hierarchy tree on the left-hand side indicates the weight of the node, which reflects the weight of the corresponding edges on the hierarchical block graph on the right-hand side in the above-mentioned way.}
    \label{fig:hierarchicalblockgraphexample}
\end{figure}

Additionally, we can assume without loss of generality that the hierarchy tree is a \emph{full} binary tree, as a node in $T$ having only one child plays no role at deciding the weights of edges in $G$.
For example, in Figure~\ref{fig:hierarchicalblockgraphexample}, the node having weight $2$ does not affect the weight configuration on the right-hand side.
We can delete this node and promote the node with weight $5$ to be a child of the root node.
We will keep the full binary tree assumption from now on.

\paragraph{The Stochastic Hierarchical Blockmodel}\label{sect::Definition_SHBGM}
The \emph{stochastic hierarchical blockmodel} is similar to the hierarchical blockmodel defined in the last section, in the sense that the structure of the graph is determined by a hierarchy tree.
Instead of assigning weights to different edges measuring the strength of relationships, here we assign a probability with which the edge between each pair of vertices appears.
Technically speaking, a stochastic hierarchical blockmodel is a distribution of unweighted undirected graphs, where each edge is sampled with a certain probability.

\begin{definition}\label{defi:SHBG}
  A \emph{stochastic hierarchical blockmodel} is a distribution $\mathcal{G}=(V,T)$ of unweighted undirected graphs where $V, T$ are the same as they are in Definition~\ref{defi:HBG} with the additional restriction that the node weights in $T$ belong to the interval $[0, 1]$.
  Let $H$ be the weighted graph defined by the hierarchical blockmodel $H=(V,T)$, and let $w(e)$ denote the weight of edge $e$ in $H$.
  Then $G=(V,E)$ is sampled by independently including each edge $e$ with probability $w(e)$.
\end{definition}

When it comes to the choices of $S$, the \infmax problem can be defined in two different ways, regarding whether we allow the seed-picker to see the sampling $G\sim\mathcal{G}$ \emph{before} choosing the seed set $S$.

\begin{definition}\label{defi:infmax_pre}
  \emph{Pre-sampling stochastic hierarchical blockmodel \infmax} is an optimization problem which takes as inputs $\mathcal{G}$, $F$, $\D$, an integer $k$ and outputs $$\argmax_{S \subseteq V: |S| = k} \E_{G \sim \mathcal{G}} \left[\sigma^{G}_{F,\mathcal{D}}(S)\right],$$
  a seed set of size $k$ that maximizes the expected global influence.
\end{definition}

\begin{definition}\label{defi:infmax_post}
  \emph{Post-sampling stochastic hierarchical blockmodel \infmax} is an average case version of \infmax which takes as input $\mathcal{G}$, $F$, $\D$, and an integer $k$, and outputs the solution of the \infmax instance $(G, F, \D, k)$ after sampling $G$ from $\mathcal{G}$.
\end{definition}

\subsection{Assumptions on Cascade Model $F,\D$}
We consider several generalizations of the well-studied \emph{linear threshold model}~\cite{KempeKT03}.
The linear threshold model is a special case of the general threshold model $I^G_{F,\D}$, with each $f_v$ being linear (see Definition~\ref{defi:linearLIF} below), and each $\D_v$ being the uniform distribution on $[0,1]$.

The cascade model in Definition~\ref{defi:linearLIF} generalizes the linear threshold model by removing the assumption on $\D_v$.
The \emph{universal local influence model} defined in Definition~\ref{defi:ulim}, generalizes the linear threshold model by allowing non-linear $f_v$, while it restricts our attention to unweighted graphs.
We also consider a special case where $f_v$ is \emph{2-quasi-submodular} in the last subsection.

\paragraph{Linear and Counting Local Influence Functions}
A natural selection of local influence function $f_v$ is the linear function, by which the influences from $v$'s neighbors are additive.
\begin{definition}\label{defi:linearLIF}
  Given a general threshold model $I^G_{F,\D}$ with a weighted graph $G$, we say that $F$ is \emph{linear} if for each $v\in V$ we have $f_v(S_v)=\sum_{u\in S_v}w(u,v)$ for any $S_v\subseteq\Gamma(v)$.
\end{definition}

For a general threshold model $I^G_{F,\D}$ with linear $F$, if we additionally assume each $\D_v$ is the uniform distribution on $[0,1]$, then this becomes the linear threshold model.

Definition~\ref{defi:linearLIF} defines a cascade model for weighted graphs.
We have the following definition which is the unweighted counterpart to Definition~\ref{defi:linearLIF}.

\begin{definition}\label{defi:countingLIF}
  Given a general threshold model $I^G_{F,\D}$ with an unweighted graph $G$, we say that $F$ is \emph{counting} if for each $v\in V$ we have $f_v(S_v)=|S_v|$ for any $S_v\subseteq\Gamma(v)$.
\end{definition}

\paragraph{Universal Local Influence Functions}
We say $f_v$ is \emph{symmetric} if $f_v(S_v)$ only depends on the \emph{number} of $v$'s infected neighbors $|S_v|$ so that each of $v$'s infected neighbors is of equal importance.
In this case, $f_v$ can be viewed as a function $f_v:\Z_{\geq0}\to\R_{\geq0}$ which takes an integer as input, rather than a set of vertices.
Thus, $f_v$ can be encoded by an increasing sequence of positive real numbers $a_0, a_1, a_2, \ldots$ so that $f_v(i)=a_i$.
Note that $f_v(0)= a_0 = 0$, as we have assumed $f_v(\emptyset)=0$.

For instance, the local influence function $f_v$ defined in Definition~\ref{defi:countingLIF} is symmetric, with $a_i=i$.
In contrast, $f_v$ in Definition~\ref{defi:linearLIF} is not symmetric, as the neighbors connected by heavier edges contribute more to $f_v(S_v)$.

\begin{definition}\label{defi:ulim}
  Given an increasing function $f:\Z_{\geq0}\to[0,1]$, the \emph{universal local influence model} $I_f^G$ is a special case of the general threshold model $I^G_{F,\D}$, such that for each $v\in V$ we have that
  \begin{itemize}
    \item $f_v$ is symmetric, and $f_v=f$ (such that all $f_v$'s are identical), and
    \item $\D_v$ is the uniform distribution on $[0,1]$.
  \end{itemize}
\end{definition}

Notice that we can assume without loss of generality that $G$ is unweighted in Definition~\ref{defi:ulim}, as each $f_v$ is fixed to be some increasing function $f$ which does not depend on the weights of edges.


As a final remark, for any general threshold model $I^G_{F,\D}$ with each $D_v$ being the uniform distribution on $[0,1]$, we can intuitively view $f_v(S_v)$ as the \emph{probability} that $v$ will be infected (where we take $f_v(S_v)>1$ as probability $1$).
In the universal local influence model, $a_i$ can be viewed as the probability that a vertex will be infected, given that it has $i$ infected neighbors.

\paragraph{Submodular and 2-Quasi-Submodular Functions} \label{sect::Definition_2submodular}
Let $g: \{0,1\}^{S} \to \R$ be a function which takes as input a subset of a set $S$.
Formally, $g$ is \emph{submodular} if $g(A\cup\{u\})-g(A)\geq g(B\cup\{u\})-g(B)$ for any $u \in S\setminus B$ and sets $A\subseteq B \subseteq S$.
Intuitively, this means that the marginal effect of each element decreases as the set increases.

The definition above can be applied to each local influence functions $f_v:\{0,1\}^{|\Gamma(v)|}\to\R_{\geq0}$, as well as the global influence function $\sigma_{F,\D}^G:\{0,1\}^{|V|}\to\R_{\geq0}$.
Given $G, F, \D$ we say that a general threshold model $I^G_{F,\D}(\cdot)$ is submodular if $\sigma^G_{F,\D}(\cdot)$ is.
In~\cite{MosselR10}, it has been shown that the local monotonicity and submodularity of all $f_v$'s implies the global monotonicity and submodularity of $I^G_{F,\D}(\cdot)$ for all $G$ when $\D_v$ is the uniform distribution on $[0, 1]$.

We are particularly concerned with the universal local influence model in Definition~\ref{defi:ulim}.
Here $f$ is submodular if the marginal gain of $f$ by having one more infected neighbor is non-increasing as the number of infected neighbors increases.
Formally, for $i_1<i_2$, we have
$$f(i_1+1) - f(i_1) \geq f(i_2+1) - f(i_2).$$
Intuitively, $f$ is submodular if its domain can be smoothly extended to $\R_{\geq0}$ to make $f$ concave.

We will consider \emph{2-quasi-submodular} local influence functions $f$, which is ``almost'' submodular such that the submodularity is only violated for the first two inputs of $f$.
In particular, we fail to have the submodular constraint $f(1)-f(0)\geq f(2)-f(1)$, and instead we have $f(1)-f(0)<f(2)-f(1)$, which is just $f(2)>2f(1)$ as $f(0)=0$.

\begin{definition}\label{defi:2quasisubmodular}
  $f:\Z_{\geq0}\to[0,1]$ is \emph{2-quasi-submodular} if $f(2)>2f(1)$ and $f(i)-f(i-1)$ is non-increasing in $i$ for $i\geq2$.
\end{definition}

In general, for any non-zero submodular function $f$, if we sufficiently decrease $f(1)$, $f$ becomes 2-quasi-submodular.
Thus, from any non-zero submodular function, we can obtain a 2-quasi-submodular function.

We note that the 2-threshold cascade model, where each vertex will be infected if it has at least $2$ infected neighbors, can be viewed as the universal local influence model with a 2-quasi-submodular $f$ (with $f(0)=f(1)=0$ and $f(i)=1$ for $i\geq2$, keeping the assumption that $\theta_v$ is drawn uniformly at random from $[0,1]$).

\section{Hierarchical Blockmodel Influence Maximization}
\label{sect::resultHBGM}
In this section, we provide a strong inapproximability result for \infmax problem for the hierarchical blockmodel cascade even when all vertices have a deterministic threshold $1$.
Specifically, we will show that it is NP-hard to approximate optimal $\sigma(S)$ within a factor of $N^{1-\varepsilon}$ for any $\varepsilon>0$ (recall that $N=|V|$ is the total number of vertices in the graph).
The same inapproximability result holds for the most general case where $\D$ is given as input to \infmax.


\begin{theorem}\label{hardnessHBGM}
  Consider the \infmax problem $(G,F,\D,k)$.
  For any constant $\varepsilon>0$, even if $G$ is a hierarchical blockmodel, $F$ is linear (see Definition~\ref{defi:linearLIF}), and $\D_v$ is the point-mass distribution with $\Pr_{\theta_v\sim\D_v}(\theta_v=1)=1$ for each $v\in V$, it is NP-hard to distinguish between the following two cases:
  \begin{itemize}
      \item \yes: there exists a seed set $S$ with $|S|=k$ such that $\sigma_{F,\D}^G(S)=\Theta(N)$;
      \item \no: for any seed set $S$ with $|S|=k$, we have $\sigma_{F,\D}^G(S)=O(N^\varepsilon)$.
  \end{itemize}
\end{theorem}

We will prove this by a reduction from the \VC problem, a well-known NP-complete problem.
\begin{definition}
  Given an undirected graph $\varG=(\varV,\varE)$ and a positive integer $\vark$, the \VC problem $(\varG,\vark)$ asks if we can choose a subset of vertices $\varS\subseteq\varV$ such that $|\varS|=\vark$ and such that each edge is incident to at least one vertex in $\varS$.
\end{definition}

\paragraph{The Reduction}
  Given a \VC instance $(\varG,\vark)$, let $n=|\varV|$ and $m=|\varE|$.
  We use $A_1,\ldots,A_n$ to denote the $n$ vertices and $e_1,\ldots,e_m$ to denote the $m$ edges.\footnote{We use the letter $A$ to denote the vertices in a \VC instance instead of commonly used $v$, while $v$ is used for the vertices in an \infmax instance. Since \VC can be viewed as a special case of \SC with vertices corresponding to subsets and edges corresponding to elements, the letter $A$, commonly used for subsets, is used here.}
  We make the assumptions $n>\vark$ is an integer power of $2$ and $m>n+\vark$.\footnote{For the assumption that $n$ is an integer power of $2$, we can just add isolated vertices to $\varG$. For the assumption $m>n+\vark$, notice that allowing the graph $\varG$ to be a multi-graph does not change the nature of \VC, we can ensure $m$ to be sufficiently large by just duplicating edges.}
  Let $W=nm$, $M=(n(2W+m)-1)^\frac{1}{\varepsilon}$, and $\delta>0$ be a sufficiently small real number.

  We will construct the graph $G=(V,E,w)$ by constructing a hierarchy tree $T$ which uniquely determines $G$ (see Definition~\ref{defi:HBG} in Section~\ref{sect::Definition_HBGM}).
  The construction of $T$ is shown in Figure~\ref{fig:reductionHBGM}.
  The first $\log_2n$ levels of $T$ is a full balanced binary subtree with $n$ leaves, and the weight of the nodes in all these levels is $\delta$.
  Each of those $n$ leaves is the root of a subtree corresponding to each vertex $A_i$ in the \VC instance.

  \begin{figure}
    \centerline{\includegraphics[width=\textwidth]{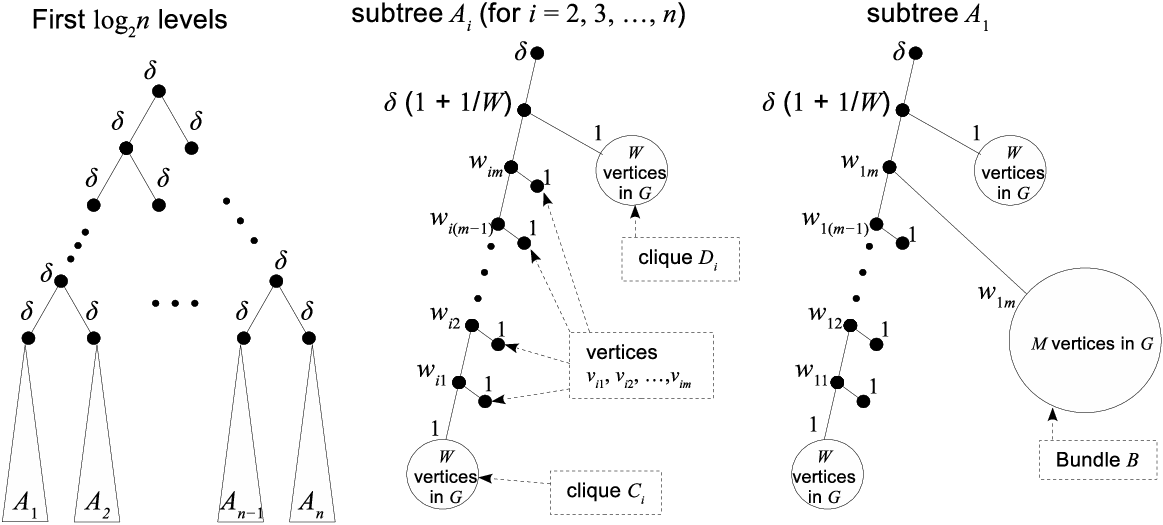}}
    \caption{The construction of the hierarchy tree $T$.}
    \label{fig:reductionHBGM}
  \end{figure}

  The structure of the subtrees corresponding to $A_2,\ldots,A_n$ and $A_1$ are shown on the right-hand side of Figure~\ref{fig:reductionHBGM}.
  The numbers on the tree nodes indicate the weights, and in particular
  \begin{equation}\label{eqn:wij}
  w_{ij}=\left\{\begin{array}{ll}
    \frac{[1-(n+\vark-1)W\delta-(n-1)(j-1)\delta-2\delta]+\delta}{W-1+j} & \mbox{if edge }e_j\mbox{ is incident to }A_i\\
    \frac{1-(n+\vark-1)W\delta-(n-1)(j-1)\delta-2\delta}{W-1+j} & \mbox{otherwise}\\
  \end{array}\right.,
  \end{equation}
  for each $i=1,\ldots,n$ and $j=1,\ldots,m$.

  The leaves of each subtree $A_i$ are the leaves of $T$, which, as we recall from Definition~\ref{defi:HBG} correspond to subsets of vertices in $G=(V,E,w)$.
  Among all the leaves shown on the right-hand side of Figure~\ref{fig:reductionHBGM}, each solid dot corresponds to a subset of $V$ containing only one vertex, and each hollow circle corresponds to a subset of $V$ containing many vertices with the corresponding number of vertices shown.

  For each subtree $A_i$ with $i=2,\ldots,n$, we have constructed $m+2$ leaves corresponding to $2W+m$ vertices in $G$.
  They are, in up-to-down order, a clique $D_i$ of $W$ vertices, vertices $v_{im},v_{i(m-1)},\ldots,v_{i1}$, and a clique $C_i$ of $W$ vertices.
  As each vertex has threshold $1$ and the leaf nodes corresponding to $C_i,D_i$ both have weight $1$, infecting any vertex in $C_i$ or $D_i$ will cause the infection of all $W$ vertices (which justifies the name ``clique'').

  The construction of $A_1$ is similar.
  The only difference is that, instead of connecting to a node corresponding to the vertex $v_{1m}$, the node with weight $w_{1m}$ is now connected to another node with the same weight and corresponding to a bundle $B$ in $G$ with $M$ vertices.
  We shall not call this large bundle $B$ a ``clique'', as the weight of the edge between each pair of these $M$ vertices is $w_{1m}\ll1$, which is much weaker.

  It is easy to calculate the total number of vertices in the construction: $N=M+M^\varepsilon$.

  We present a toy example illustrating the construction of $T$ in Fig.~\ref{fig:reductionHBGM_toy}, where the explicit construction of $T$ corresponding to a small graph with $4$ vertices and $4$ edges is given.

  \begin{figure}
    \centerline{\includegraphics[width=\textwidth]{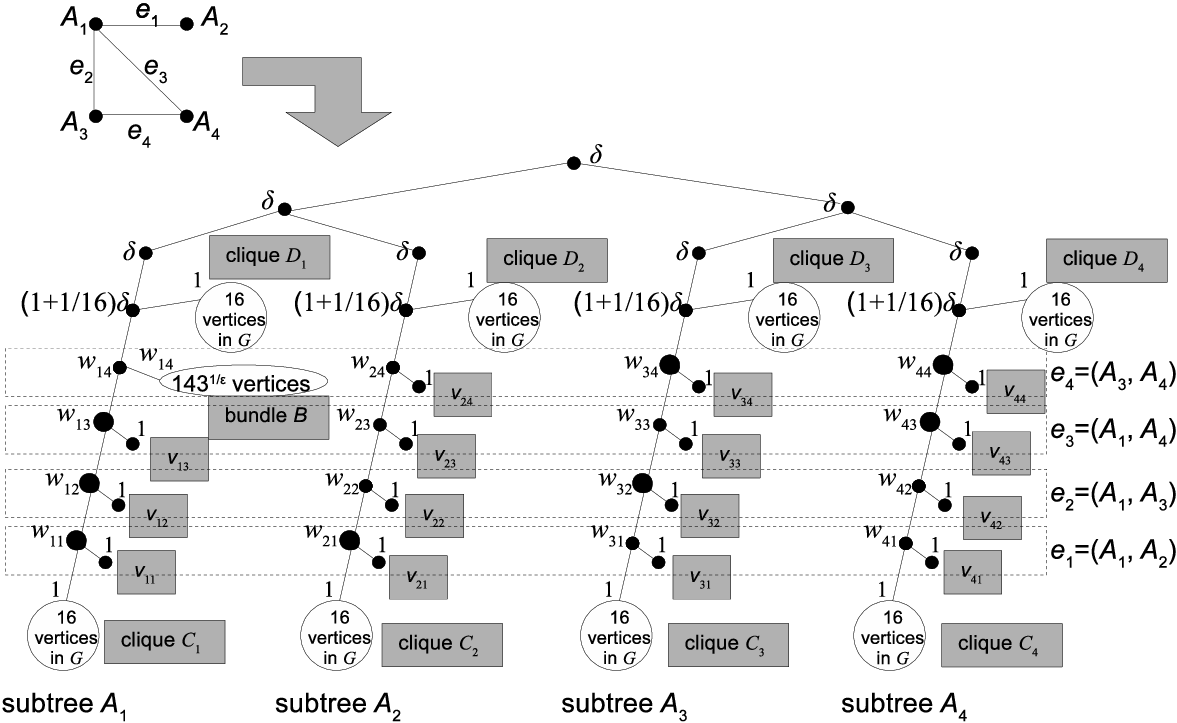}}
    \caption{A toy example illustrating the construction: in this example, $n=m=4$, $W=mn=16$, $\bar{k}=2$, $M=(n(2W+m)-1)^{1/\varepsilon}=143^{1/\varepsilon}$, and the values of $\varepsilon$ and $\delta$ are set sufficiently small and unassigned for clarity. The $w_{ij}$s, defined according to (\ref{eqn:wij}), are as follows: corresponding to the edge $e_1=(A_1,A_2)$, we have $w_{11}=w_{21}=\frac{1-81\delta}{16}$ (shown by larger dots) and $w_{31}=w_{41}=\frac{1-82\delta}{16}$ (shown by smaller dots); corresponding to the edge $e_2=(A_1,A_3)$, we have $w_{12}=w_{32}=\frac{1-84\delta}{17}$ (shown by larger dots) and $w_{22}=w_{42}=\frac{1-85\delta}{17}$ (shown by smaller dots); corresponding to the edge $e_3=(A_1,A_4)$, we have $w_{13}=w_{43}=\frac{1-87\delta}{18}$ (shown by larger dots) and $w_{23}=w_{33}=\frac{1-88\delta}{18}$ (shown by smaller dots); corresponding to the edge $e_4=(A_3,A_4)$, we have $w_{34}=w_{44}=\frac{1-90\delta}{18}$ (shown by larger dots) and $w_{14}=w_{24}=\frac{1-91\delta}{18}$ (shown by smaller dots). In this example, $\varG$ has a $2$-vertex cover $\{A_1,A_3\}$. Corresponding to this, the $k=m+\vark=6$ seeds should be put at $C_1,C_2,C_3,C_4,D_1,D_3$ respectively, so that the vertices in the large bundle $B$ (the one containing $143^{1/\varepsilon}$ vertices) will be eventually infected: firstly, all the vertices in $C_1,C_2,C_3,C_4,D_1,D_3$ will be infected; in the next step, the influence of these infected vertices is just enough to infect $v_{11}$, as $|C_1|w_{11}+|D_1|\cdot(1+\frac1{16})\delta+(|C_2|+|C_3|+|C_4|+|D_3|)\delta=1$; we can also check by calculation that the additional infection of $v_{11}$ will further infect $v_{21}$, and the additional infection of $v_{21}$ will further infect $v_{31},v_{41}$, making all the four vertices corresponding to $e_1$ infected; finally, it is easy to check that the cascade will carry on level-by-level and eventually reach the bundle $B$. In general, each level $i$ corresponding to the edge $e_i=(A_{j_1},A_{j_2})$ contains $n$ vertices $v_{i1},\ldots,v_{in}$, and two of them, $v_{ij_1},v_{ij_2}$, are connected to the tree by a weight heavier than that of the remaining $n-2$ vertices. If the vertices in at least one of $D_{j_1},D_{j_2}$ are infected (corresponding to the case the vertex $A_{j_1}$ or $A_{j_2}$ is included in the vertex cover), after the infection of the vertices in the first $i-1$ levels, the corresponding vertex in $v_{ij_1},v_{ij_2}$ will be infected, which will further infect all the remaining $n-1$ vertices in the $i$-th level. On the other hand, if none of the vertices in $D_{j_1},D_{j_2}$ is infected, even if the cascade reaches the $(i-1)$-th level, no vertex in the $i$-th level will be infected and the cascade will end here without reaching the bundle $B$ which contains most vertices of $G$.}
    \label{fig:reductionHBGM_toy}
  \end{figure}

\paragraph{The Reduction Correctness}
  For a \VC instance $(\varG,\vark)$, consider the \infmax instance $(G,F,\D,k)$ with $k=n+\vark$.
  We aim to show that,
  \begin{enumerate}
    \item If the \VC instance $(\varG,\vark)$ is a \yes instance, then there exists $S\subseteq V$ with $|S|=k$ such that $\sigma(S)\geq M$;
    \item If the \VC instance $(\varG,\vark)$ is a \no instance, then for any $S\subseteq V$ with $|S|=k$ we have $\sigma(S)\leq M^\varepsilon=n(2W+m)-1$.
  \end{enumerate}

\begin{proof}[Proof of (1)]
  Suppose we have a \yes~\VC instance $(\varG,\vark)$ with $\varS\subseteq\varV$ covering all edges in $\varE$.
  In the \infmax instance, we aim to show that at least $M$ vertices will be infected if we choose those $k=n+\vark$ seeds in the following way:
  \begin{itemize}
    \item choose an arbitrary seed in each of the cliques $C_1,\ldots,C_n$ (a total of $n$ seeds are chosen);
    \item for each $A_i\in\varS$, choose an arbitrary seed in the clique $D_i$ (a total of $\vark$ seeds are chosen).
  \end{itemize}

  By such a choice, in the first round of the cascade, all the $W$ vertices in each of $C_1,\ldots,C_n$ and each of those $\vark$ $(D_i)$'s are infected.
  We aim to show that all vertices in $B$ will be infected after at most $3m$ cascade rounds.
  We call the set of $n$ vertices $\{v_{1j},\ldots,v_{nj}\}$ \emph{the $j$-th level}, and we will show that the cascade carries on level by level.
  In particular, we will first show that all vertices in the first level will be infected in at most $3$ rounds.
  Next, given that all vertices in the first $j$ levels are infected, by similar calculations, we can show that all vertices in the $(j+1)$-th level will be infected.

  Consider the first level $\{v_{11},\ldots,v_{n1}\}$.
  Let $e_1=(A_{i_1},A_{i_1'})\in\varE$.
  Since the \VC instance is a \yes instance, either $A_{i_1}\in\varS$ or $A_{i_1'}\in\varS$, or both.
  Assume $A_{i_1}\in\varS$ without loss of generality, then all vertices in $D_{i_1}$ are already infected.
  In the coming round, the vertex $v_{i_11}\in V$ will be infected, as
  \begin{align*}
    f_{v_{i_11}}\left(\bigcup_{i=1}^{n}C_{i}\cup\bigcup_{A_{i}\in\varS}D_{i}\right)&=\delta\left|\bigcup_{i\neq i_1}C_i\cup\bigcup_{i\neq i_1,A_{i}\in\varS}D_{i}\right|+w_{i_11}|C_{i_1}|+\delta\left(1+\frac1W\right)|D_{i_1}|\\
   &=\delta((n-1)+(\vark-1))W+\frac{1-(n+\vark-1)W\delta-\delta}{W}\cdot W\\
   &\qquad+\delta\left(1+\frac1W\right)W\\
   &=1.
  \end{align*}
  If $A_{i_1'}\in\varS$ as well, then $v_{i_1'1}\in V$ will also be infected in the this round, due to the same calculation.
  On the other hand, if $A_{i_1'}\notin\varS$, $v_{i_1'1}$ will be infected in the next round, as
  \begin{align*}
    f_{v_{i_1'1}}\left(\bigcup_{i=1}^{n}C_{i}\cup\bigcup_{A_{i}\in\varS}D_{i}\cup\{v_{i_11}\}\right)&=\delta\left|\bigcup_{i\neq i_1'}C_i\cup\bigcup_{A_{i}\in\varS}D_{i}\cup\{v_{i_11}\}\right|+w_{i_1'1}|C_{i_1'}|\\
   &=\delta((n-1+\vark)W+1)+\frac{1-(n+\vark-1)W\delta-\delta}{W}\cdot W\\
   &=1.
  \end{align*}
  Therefore, both $v_{i_11}$ and $v_{i_1'1}$ will be infected in both cases.

  In the next round, the remaining $n-2$ vertices $\{v_{i_01}\}_{i_0\notin\{i_1,i_1'\};1\leq i_0\leq n}$ will be infected, as we have
  \begin{align*}
    f_{v_{i_01}}\left(\bigcup_{i=1}^{n}C_{i}\cup\bigcup_{A_{i}\in\varS}D_{i}\cup\{v_{i_11},v_{i_1'1}\}\right)
   &=\delta\left|\bigcup_{i\neq i_0}C_i\cup\bigcup_{A_{i}\in\varS}D_{i}\cup\{v_{i_11},v_{i_1'1}\}\right|+w_{i_01}|C_{i_0}|\\
   &=\delta((n-1+\vark)W+2)+\\
   &\qquad\frac{1-(n+\vark-1)W\delta-2\delta}{W}\cdot W\\
   &=1,
  \end{align*}
  in the case $A_{i_0}\notin\varS$ (such that no vertex in $D_{i_0}$ is infected at this moment), and
  \begin{align*}
    &f_{v_{i_01}}\left(\bigcup_{i=1}^{n}C_{i}\cup\bigcup_{i\neq i_0,A_{i}\in\varS}D_{i}\cup\{v_{i_11},v_{i_1'1}\}\right)\\
   =&\delta\left|\bigcup_{i\neq i_0}C_i\cup\bigcup_{i\neq i_0,A_{i}\in\varS}D_{i}\cup\{v_{i_11},v_{i_1'1}\}\right|+w_{i_01}|C_{i_0}|+\delta\left(1+\frac1W\right)|D_{i_0}|\\
   =&\delta((n-1+\vark-1)W+2)+\frac{1-(n+\vark-1)W\delta-2\delta}{W}\cdot W+\delta\left(1+\frac1W\right)W\\
   =&1+\delta>1,
  \end{align*}
  in the case $A_{i_0}\in\varS$ (such that all vertices in $D_{i_0}$ are infected at the first round).
  In conclusion, all the $n$ vertices $\{v_{i1}\}_{1\leq i\leq n}$ will be eventually infected in at most $3$ rounds.

  The analysis of the second level is similar.
  For $e_2=(A_{i_2},A_{i_2'})\in\varE$, we have either $A_{i_2}\in\varS$ or $A_{i_2'}\in\varS$ (or both), making one of $v_{i_22},v_{i_2'2}$ infected (or both), which further makes both $v_{i_22},v_{i_2'2}$ infected (if one of them is not infected previously), and which eventually makes all the $n$ vertices $\{v_{i2}\}_{1\leq i\leq n}$ infected.

  For each $j=1,\ldots,m$ with $e_j=(A_{i_j},A_{i_j'})$, we have either $A_{i_j}\in\varS$ or $A_{i_j'}\in\varS$ (or both).
  Similar as above, after either two or three rounds, all the vertices in $\{v_{ij}\}_{1\leq i\leq n}$ will be infected, if all the vertices in $\{v_{i1}\}_{1\leq i\leq n},\ldots,\{v_{i(j-1)}\}_{1\leq i\leq n}$ are already infected.

  Therefore, we can see that the cascade after the first round carries on in the following order:
  $$v_{i_11}\rightarrow v_{i_1'1}\rightarrow\{v_{i1}\}_{i\neq i_1,i_1'}\rightarrow v_{i_22}\rightarrow v_{i_2'2}\rightarrow\{v_{i2}\}_{i\neq i_2,i_2'}\rightarrow\cdots $$
  $$\rightarrow v_{i_mm}\rightarrow v_{i_m'm}\rightarrow\{v_{im}\}_{i\neq i_m,i_m'}\rightarrow B.$$
  Therefore, we conclude 1 as we already have $M$ infected vertices by just counting those in the bundle $B$.
\end{proof}

For the proof of (2), we present a general proof idea before the formal proof.

To show (2) by contradiction, we assume that we can choose a seed set $S\subseteq V$ such that $|S|=k=n+\vark$ and $\sigma(S)>M^\varepsilon$.
By a careful analysis, we can conclude that the only possible way to choose $S$ is as follows.
  \begin{itemize}
    \item an arbitrary vertex from each of $C_1,\ldots,C_n$ (a total of $n$ vertices are chosen);
    \item an arbitrary vertex from each of $D_{\pi_1},\ldots,D_{\pi_\vark}$ for certain $\{\pi_1,\ldots,\pi_k\}\subseteq\{1,\ldots,n\}$ (a total of $\vark$ vertices are chosen).
  \end{itemize}
The intuitive reason for this is the following:
firstly, choosing $k$ seeds among the $2n$ cliques $C_1,\ldots,C_n,D_1,\ldots,D_n$ is considerably more beneficial, as a seed would cause the infection of $W$ vertices;
secondly, if we cannot choose both $C_i$ and $D_i$, it is always better to choose $C_i$ because the weights $w_{i1},\ldots,w_{im}$ are considerably larger than $\delta(1+1/W)$, if $\delta$ is set sufficiently small.

Since the \VC instance is a \no instance, there exists an edge $e_j=(A_{i_j},A_{i_j'})$ such that no vertex in $D_{i_j}$ and $D_{i_j'}$ is chosen as seed.
By following similar analysis as in the proof of 1, we can see that the cascade would stop at the level $\{v_{ij}\}_{i=1,\ldots,n}$, which concludes (2).

\begin{proof}[Proof of (2)]
  Assume that we can choose seed set $S\subseteq V$ such that $|S|=k=n+\vark$ and $\sigma(S)>M^\varepsilon$.
  First notice that choosing any seeds from $B$ is at most as good as choosing seeds from $C_1\cup\{v_{1j}\}_{1\leq j\leq m-1}$.
  By our assumption $m>n+\vark=k$, we can assume without loss of generality that no seed is chosen in $B$.
  With this assumption, we will prove that none of these $M$ vertices will be infected in the cascade.
  Since the graph $G$ has a total of $N=M+M^\varepsilon$ vertices, this contradicts that $\sigma(S)>M^\varepsilon$.

  Suppose, for the sake of contradiction, a vertex $u\in B$ is infected in round $t$ of the cascade, and there is no infected vertex in $B$ in the first $t-1$ rounds.
  Let $I_u$ be the set of infected vertices before round $t$.
  Since $u$ is infected in round $t$, we have $f_u(I_u)\geq1$, which, by Definition~\ref{defi:HBG}, implies
  $$\sum_{v\in I_u}w(u,v)\geq1.$$
  We analyze the constituents of $I_u$.

  We set $\delta$ to be sufficiently (but still polynomially) small such that
  $$(n-1)(2W+m)\delta+\delta\left(1+\frac1W\right)\ll w_{1m}.\footnote{This is always possible: when $\delta\rightarrow0$, the left-hand side approaches to $0$, while we have $\lim_{\delta\rightarrow0}w_{1m}=\frac{1}{W+m-1}$ for the right-hand side.}$$
  Then the infection of each vertex in $C_1\cup\{v_{1j}\}_{1\leq j\leq m-1}$ has contribution $w_{1m}$ to $f_u(I_u)$, while the net contribution from the infections of all vertices in $V\setminus\{C_1\cup\{v_{1j}\}_{1\leq j\leq m-1}\cup B\}$ is much less than $w_{1m}$.
  On the other hand, even if all the $W+m-1$ vertices in $C_1\cup\{v_{1j}\}_{1\leq j\leq m-1}$ are included in $I_u$, the contribution to $f_u(I_u)$ is
  $$(W+m-1)w_{1m}\leq1-(n+\vark-1)W\delta-(n-1)(m-1)\delta-\delta<1,$$
  which is still not enough.
  Thus, we conclude that $C_1\cup\{v_{1j}\}_{1\leq j\leq m-1}\subseteq I_u$, and the vertices from $V\setminus\{C_1\cup\{v_{1j}\}_{1\leq j\leq m-1}\cup B\}$ should contribute at least $(n+\vark-1)W\delta+(n-1)(m-1)\delta+\delta$ to $f_u(I_u)$.
  From the term $(n+\vark-1)W\delta$, we can see that at least $n+\vark-1$ cliques from the $2n-1$ cliques $C_2,\ldots,C_n,D_1,\ldots,D_n$ must be included in $I_u$.
  Coupled with the observation $C_1\subseteq I_u$, we need at least $n+\vark$ infected cliques from $C_1,\ldots,C_n,D_1,\ldots,D_n$.

  On the other hand, the only way to infect a clique $C_i$ or $D_i$ is to seed one of its vertices.
  To see this for each $D_i$, it is enough to notice that the weight $\delta(1+1/W)$ is extremely small.
  To see this for each $C_i$, notice that only $v_{i1},\ldots,v_{im}$ have non-negligible influence to $C_i$, and
  $$\sum_{j=1}^mw_{ij}<\sum_{j=1}^m\frac1{W-1+j}<m\times\frac1W=\frac1n\ll1.$$
  Therefore, to have $u\in B$ infected in round $t$, the only possible way is to choose $k=n+\vark$ seeds from $n+\vark$ cliques, among all the $2n$ cliques $C_1,\ldots,C_n,D_1,\ldots,D_n$.
  Lastly, it is straightforward to check that the infection of an vertex in $D_i$ is less influential than the infection of an vertex in the corresponding $C_i$ (both $C_i$ and $D_i$ contain the same number of vertices, so their influences to the outside subtrees are the same; however, $C_i$ is connected to $v_{i1},\ldots,v_{im}$ with higher weights than $D_i$).
  Thus, we can assume without loss of generality that $S$ consists of
  \begin{itemize}
    \item an arbitrary vertex from each of $C_1,\ldots,C_n$ (a total of $n$ vertices are chosen);
    \item an arbitrary vertex from each of $D_{\pi_1},\ldots,D_{\pi_\vark}$ for certain $\{\pi_1,\ldots,\pi_k\}\subseteq\{1,\ldots,n\}$ (a total of $\vark$ vertices are chosen).
  \end{itemize}

  Since the \VC instance is a \no instance, for the choice $\varS=\{A_{\pi_1},\ldots,A_{\pi_\vark}\}$, there exists edge $e_j$ that is not covered by $\varS$.
  Let $j^\ast$ be the smallest $j$ such that $e_j$ is not covered by $\varS$.

  We first deal with the case $j^\ast=1$. The case where $j^\ast>1$ is dealt with subsequently.

  If $j^\ast=1$, for $e_1=(A_{i_1},A_{i_1'})$, we have $A_{i_1},A_{i_1'}\notin\varS$.
  In this case, $v_{i_11}$ will not be infected, as
  \begin{align*}
    f_{v_{i_11}}\left(\bigcup_{i=1}^{n}C_{i}\cup\bigcup_{A_{i}\in\varS}D_{i}\right)&=\delta\left|\bigcup_{i\neq i_1}C_i\cup\bigcup_{A_{i}\in\varS}D_{i}\right|+w_{i_11}|C_{i_1}|\\
   &=\delta(n-1+\vark)W+\frac{1-(n+\vark-1)W\delta-\delta}{W}\cdot W\\
   &=1-\delta<1,
  \end{align*}
  and $v_{i_1'1}$ will not be infected for the same reason.
  For $i_0\neq i_1,i_1'$, $v_{i_01}$ will not be infected either, as we have
  \begin{align*}
    f_{v_{i_01}}\left(\bigcup_{i=1}^{n}C_{i}\cup\bigcup_{A_{i}\in\varS}D_{i}\right)&=\delta\left|\bigcup_{i\neq i_0}C_i\cup\bigcup_{A_{i}\in\varS}D_{i}\right|+w_{i_01}|C_{i_0}|\\
   &=\delta(n-1+\vark)W+\frac{1-(n+\vark-1)W\delta-2\delta}{W}\cdot W\\
   &=1-2\delta<1,
  \end{align*}
  in the case $A_{i_0}\notin\varS$, and
  \begin{align*}
    f_{v_{i_01}}\left(\bigcup_{i=1}^{n}C_{i}\cup\bigcup_{A_{i}\in\varS}D_{i}\right)&=\delta\left|\bigcup_{i\neq i_0}C_i\cup\bigcup_{i\neq i_0,A_{i}\in\varS}D_{i}\right|+w_{i_01}|C_{i_0}|+\delta\left(1+\frac1W\right)|D_{i_0}|\\
   &=\delta(n-1+\vark-1)W+\frac{1-(n+\vark-1)W\delta-2\delta}{W}\cdot W\\
   &\qquad+\delta\left(1+\frac1W\right)W\\
   &=1-\delta<1,
  \end{align*}
  in the case $A_{i_0}\in\varS$.
  Thus, none of $\{v_{i1}\}_{1\leq i\leq n}$ will be infected.
  Since $w_{ij_1}>w_{ij_2}$ whenever $j_1<j_2$ for any $i$ (easy to see by observing $w_{ij}\approx\frac1{W-1+j}$), none of $\{v_{ij}\}_{1\leq i\leq n;2\leq j\leq m}$ will be infected.
  In particular, no vertex in $B$ can be infected, which leads to the desired contradiction.

  If $j^\ast>1$, by the similar analysis in the proof of 1 for the \yes instance case, after many cascade rounds, all vertices in $\{v_{ij}\}_{1\leq i\leq n;1\leq j\leq j^\ast-1}$ will be infected.
  For $e_{j^\ast}=(A_{i_{j^\ast}},A_{i_{j^\ast}'})$, we have $A_{i_{j^\ast}},A_{i_{j^\ast}'}\notin\varS$.
  In this case, $v_{i_{j^\ast}j^\ast}$ will not be infected, as
  \begin{align*}
    &f_{v_{i_{j^\ast}j^\ast}}\left(\bigcup_{i=1}^{n}C_{i}\cup\bigcup_{A_{i}\in\varS}D_{i}\cup\{v_{ij}\}_{1\leq i\leq n;1\leq j\leq j^\ast-1}\right)\\
   =&\delta\left|\bigcup_{i\neq i_{j^\ast}}C_i\cup\bigcup_{A_{i}\in\varS}D_{i}\right|+w_{i_{j^\ast}j^\ast}\left|C_{i_{j^\ast}}\cup\{v_{i_{j^\ast}j}\}_{1\leq j\leq j^\ast-1}\right|+\delta\left|\{v_{ij}\}_{i\neq i_{j^\ast},1\leq j\leq j^\ast-1}\right|\\
   =&\delta(n-1+\vark)W\\
   &\qquad+\frac{1-(n+\vark-1)W\delta-(n-1)(j^\ast-1)\delta-\delta}{W-1+j^\ast}\cdot(W+j^\ast-1)+\delta(n-1)(j^\ast-1)\\
   =&1-\delta<1,
  \end{align*}
  and $v_{i_{j^\ast}'j^\ast}$ will not be infected for the same reason.
  Following similar analysis, $v_{i_0j^\ast}$ will not be infected for $i_0\neq i_{j^\ast},i_{j^\ast}'$, and none of $\{v_{ij^\ast}\}_{1\leq i\leq n}$ will be infected.
  By the same observation $w_{ij_1}>w_{ij_2}$ whenever $j_1<j_2$, none of $\{v_{ij}\}_{1\leq i\leq n;j^\ast\leq j\leq m}$ will be infected.
  In particular, no vertex in $B$ can be infected, which again leads to the desired contradiction.
  We conclude (2) here.
\end{proof}

Since $M=\Theta(N)$, (1) and (2) imply Theorem~\ref{hardnessHBGM}.

\section{Stochastic Hierarchical Blockmodel Influence Maximization}
\label{sect::resultSHBGM}
In this section, we will present strong inapproximability results for both pre-sampling and post-sampling versions of stochastic hierarchical blockmodel \infmax.
A major difference between the results in Section~\ref{sect::resultHBGM} and this section is that the strong inapproximability result no longer holds if we assume $\theta_v=1$ for all $v\in V$ in the stochastic hierarchical blockmodel.
In fact, if all the thresholds are fixed to be 1 and $F$ is counting (see Definition~\ref{defi:countingLIF}), $\sigma(\cdot)$ in both Definition~\ref{defi:infmax_pre} and Definition~\ref{defi:infmax_post} become submodular, in which case we can have a simple greedy $(1-1/e)$-approximation algorithm~\cite{KempeKT03,Nemhauser78}.
In particular, assuming $\theta_v=1$ for all $v\in V$ makes post-sampling \infmax trivial: as an infected seed will eventually infect a whole connected component of $G$, the optimal way of choosing $S$ is to choose $k$ seeds from the first $k$ largest connected components, after seeing the sampling $G\sim\mathcal{G}$.
For pre-sampling \infmax, the model becomes the \emph{independent cascade model} \cite{KempeKT03}, which is known to be submodular.

The following two theorems are the same, except that Theorem~\ref{hardnessSHBGM_pre} corresponds to the hardness for pre-sampling model (see Definition~\ref{defi:infmax_pre}), while Theorem~\ref{hardnessSHBGM_post} show the same hardness result for the post-sampling model (see Definition~\ref{defi:infmax_post}) via a randomized Karp's reduction.



\begin{theorem}\label{hardnessSHBGM_pre}
  Consider the pre-sampling stochastic hierarchical blockmodel \infmax problem $(\mathcal{G},F,\D,k)$.
  For any $\varepsilon>0$, even if $F$ is counting and $\D_v$ is a point-mass distribution on certain integer $\theta_v$ for each $v\in V$, it is NP-hard to distinguish between the following two cases:
  \begin{itemize}
      \item \yes: there exists a seed set $S$ with $|S|=k$ such that
      $\displaystyle\E_{G\sim\mathcal{G}}\left[\sigma_{F,\D}^G(S)\right]=\Theta(N);$
      \item \no: for any seed set $S$ with $|S|=k$, we have
      $\displaystyle\E_{G\sim\mathcal{G}}\left[\sigma_{F,\D}^G(S)\right]=O(N^\varepsilon).$
  \end{itemize}
\end{theorem}

\begin{theorem}\label{hardnessSHBGM_post}
  Consider the post-sampling stochastic hierarchical blockmodel \infmax problem $(\mathcal{G},F,\D,k)$.
  For any $\varepsilon>0$ and $c>0$, even if $F$ is counting and $\D_v$ is a point-mass distribution on certain integer $\theta_v$ for each $v\in V$, it is NP-hard to distinguish between the following two cases with probability at least $N^{-c}$ (where the probability is taken over $G\sim\mathcal{G}$):
  \begin{itemize}
      \item \yes: there exists a seed set $S$ with $|S|=k$ such that
      $\displaystyle\sigma_{F,\D}^G(S)=\Theta(N);$
      \item \no: for any seed set $S$ with $|S|=k$, we have
      $\displaystyle\sigma_{F,\D}^G(S)=O(N^\varepsilon).$
  \end{itemize}
\end{theorem}

As a remark to Theorem~\ref{hardnessSHBGM_post}, the theorem says that if we have an oracle that outputs a solution which approximates $\max_{S\subseteq V,|S|\leq k}\sigma(S)$ within a factor of $N^{1-\varepsilon}$ for certain samples $G\sim\mathcal{G}$, and with probability at least $N^{-c}$ we receive a sample $G$ in the set of graphs for which the oracle outputs valid solutions, then we can use this oracle to solve any NP-complete problem as long as we have randomness to sample $G\sim\mathcal{G}$.

We will prove both Theorem~\ref{hardnessSHBGM_pre} and Theorem~\ref{hardnessSHBGM_post} by a reduction from \VC.
Given a \VC instance $(\varG=(\varV,\varE),\vark)$, we will construct a hierarchy tree $T$ which determines $\mathcal{G}$ for both proofs.

\paragraph{The Reduction}
Let $n=|\varV|$ and $m=\varE$ as usual.
Assume $m>n>\vark^2+2$, and $\log_2n$ is an integer.\footnote{Notice that we can assume $n\gg\vark$ is an integer power of $2$ by adding isolated vertices to $\varG$ which are never picked, and we can assume $m>n$ by duplicate each edge (which makes $\varG$ a multi-graph).}
In addition, we assume that $A_1\in\varS$ whenever the \VC instance is a \yes instance.\footnote{This assumption can be made without loss of generality because we can add two extra vertices named $A_1,A_2$ and one extra edge $(A_1,A_2)$ such that one of $A_1,A_2$ much be chosen to cover this edge, and we can assume $A_1$ is chosen.}

We define the following variables used in this section.
$$\delta=\frac1{10mn^2\vark},\qquad\mbox{and}\qquad\Delta=mn^2\delta=\frac1{10\vark},\qquad W=m^{10}n^{10}.$$
Let $M$ be an extremely large number whose value will be decided later.

The construction of $T$ is shown in Figure~\ref{fig:reductionSHBGM}.
$T$ is a full balanced binary tree with $\log_2n$ levels and $n$ leaves.
The weight of all non-leaf nodes is $1/W$, and the weight of all leaves is $1$.
The $i$-th leaf corresponds to $A_i\in\varV$ in the \VC instance.
Recall from Definition~\ref{defi:SHBG} that $\mathcal{G}=(V,T)$ is determined by $T$, and in particular each leaf of $T$ corresponds to a subset of $V$.
As the weight of each leaf is $1$, meaning each edge appear with probability $1$, its corresponding subset of vertices forms a clique in all $G\sim\mathcal{G}$.
We will call the clique corresponding to the $i$-th leaf \emph{the $i$-th clique} in the remaining part of this section.
For each clique $i$, we will first describe the vertices we have constructed in Figure~\ref{fig:reductionSHBGM}, and then define their thresholds.

For positive integers $x,y$, denote by $B(x,y)$ a bundle of $x$ vertices with threshold $y$.
For each $i=1,\ldots,n$, we construct the following vertices for the $i$-th clique:
\begin{itemize}
  \item a bundle of $\vark W^2$ vertices: $B_{i}:=B\left(\vark W^2,\infty\right)$, and
  \item $m(n-2)$ bundles of $W^3$ vertices: $B_{ij\imath}:=B\left(W^3,\theta_{ij\imath}\right)$ for $j=1,\ldots,m$ and $\imath=1,\ldots,n-2$.
\end{itemize}
For $i=1$, we add an extra bundle $C:=\left(M,\theta_{1(m+1)}\right)$.
The thresholds $\{\theta_{ij\imath}\}$ and $\theta_{1(m+1)}$ of those constructed vertices will be defined later.

By our construction, the $1$-st clique has $M+\vark W^2+m(n-2)W^3$ vertices, which is much more than the number of vertices $\vark W^2+m(n-2)W^3$ in each of the remaining cliques.
As a remark, we have constructed $N=M+nm(n-2)W^3+n\vark W^2$ vertices for $G$.
Moreover, for $M$ whose value we have not decided yet, we can make it arbitrarily close to $N$.

Denote by $B_{\cdot j\imath}:=\{B_{ij\imath}\}_{i=1,\ldots,n}$ the $n$ bundles in a horizontal level in Figure~\ref{fig:reductionSHBGM} (for example, in Figure~\ref{fig:reductionSHBGM}, after the top-level $\{B_1,\ldots,B_n\}$, there come levels $B_{\cdot11},B_{\cdot12},\ldots$).
We will call $B_{\cdot j\imath}$ a \emph{level} and abuse the word ``level'' to refer to the vertices in $B_{\cdot j\imath}$.

The correspondence between the \VC instance and the graph we constructed is as follows.
Recall that each vertex $A_i\in\varV$ corresponds to the $i$-th clique.
Now, for each edge $e_j\in\varE$, we have constructed $n-2$ levels $B_{\cdot j1},\ldots,B_{\cdot j(n-2)}$, which are $n(n-2)$ bundles of $W^3$ vertices.
For example, in Figure~\ref{fig:reductionSHBGM}, we have illustrated the $n-2$ levels corresponding to $e_1$ and the $n-2$ levels corresponding to $e_m$, while the levels corresponding to the remaining edges in $\varE$ are omitted.

For each $j=1,\ldots,m$ and $\imath=1,\ldots,n-2$, we denote by $B_{\prec j\imath}$ the union of the first $(j-1)(n-2)+\imath-1$ levels (where the levels are ordered from up to down in Figure~\ref{fig:reductionSHBGM}):
\begin{align*}
  B_{\prec j\imath}:=&\bigcup_{(n-2)j'+\imath'<(n-2)j+\imath}B_{\cdot j'\imath'}\\
  =&B_{\cdot11}\cup B_{\cdot12}\cup\ldots\cup B_{\cdot1(\imath-1)} \cup B_{\cdot1\imath}\cup\ldots\cup B_{\cdot1(n-3)}\cup B_{\cdot1(n-2)}\cup\\
  &B_{\cdot21}\cup B_{\cdot22}\cup\ldots\cup
  B_{\cdot2(\imath-1)} \cup B_{\cdot2\imath}\cup\ldots\cup B_{\cdot2(n-3)}\cup B_{\cdot2(n-2)}\cup\\
  &\cdots\\
  &B_{\cdot j1}\cup B_{\cdot j2}\cup\ldots\cup B_{\cdot j(\imath-1)}.\\
\end{align*}

Next, we define the thresholds $\{\theta_{ij\imath}\}$ and $\theta_{1(m+1)}$.
Denote
$$\omega_{j\imath}:=\left((j-1)(n-2)+(\imath-1)\right)W^3+(n-1)\left((j-1)(n-2)+(\imath-1)\right)W^2,$$
which is the expected number of neighbors of each $b_{ij\imath}\in B_{ij\imath}$ in $B_{\prec j\imath}$.
For each fixed $j$, denote by $i_j,i_j'$ the two indices such that $e_j=(A_{i_j},A_{i_j'})$ with $i_j<i_j'$, and all $\theta_{ij\imath}$'s are defined as follows.
$$
\left[\begin{array}{cccc}
  \theta_{1j1} & \theta_{2j1} & \cdots & \theta_{nj1}\\
  \theta_{1j2} & \theta_{2j2} & \cdots & \theta_{nj2}\\
  \vdots & \vdots & \ddots & \vdots\\
  \theta_{1jn} & \theta_{2jn} & \cdots & \theta_{njn}
\end{array}\right]:=\left[\begin{array}{cccc}
  \omega_{j1}+(1-\Delta)W^2 & \omega_{j1}+(1-\Delta)W^2 & \cdots & \omega_{j1}+(1-\Delta)W^2 \\
  \omega_{j2}+(1-\Delta)W^2 & \omega_{j2}+(1-\Delta)W^2 & \cdots & \omega_{j2}+(1-\Delta)W^2 \\
  \vdots & \vdots & \ddots & \vdots\\
  \omega_{jn}+(1-\Delta)W^2 & \omega_{jn}+(1-\Delta)W^2 & \cdots & \omega_{jn}+(1-\Delta)W^2 \\
\end{array}\right]+\\
$$
$$
\begin{blockarray}{cccccccccc}
 & & &  &\mbox{Column }i_j &  &\mbox{Column }i_j' &  &  &  \\
\begin{block}{[cccccccccc]}
  1W^2 & 2W^2 & 3W^2 & \cdots & 0 & \cdots & 0 & \cdots & (n-3)W^2 & (n-2)W^2\\
  (n-2)W^2 & 1W^2 & 2W^2 & \cdots & 0 & \cdots & 0 & \cdots & (n-4)W^2 & (n-3)W^2\\
  (n-3)W^2 & (n-2)W^2 & 1W^2 & \cdots & 0 & \cdots & 0 & \cdots & (n-5)W^2 & (n-4)W^2\\
  \vdots & \vdots & \vdots & \vdots & \vdots & \vdots & \vdots & \vdots & \vdots & \vdots\\
  2W^2 & 3W^2 & 4W^2 & \cdots & 0 & \cdots & 0 & \cdots & (n-2)W^2 & 1W^2\\
\end{block}
\end{blockarray}
$$
Notice that for different $\imath_1,\imath_2\in\{1,\ldots,n-2\}$, $\left(\theta_{1j\imath_1}-\omega_{j\imath_1},\theta_{2j\imath_1}-\omega_{j\imath_1},\ldots,w_{nj\imath_1}-\omega_{j\imath_1}\right)$ is a permutation of $\left(w_{1j\imath_2}-\omega_{j\imath_2},w_{2j\imath_2}-\omega_{j\imath_2},\ldots,w_{nj\imath_2}-\omega_{j\imath_2}\right)$.
Specifically, for the second matrix above, excluding the $i_j$-th and the $i_j'$-th columns, the first row is an arithmetic progression $1W^2,2W^2,(n-2)W^2$, and the $(\imath+1)$-th row is obtained by cyclically shifting the $\imath$-th row to the right by $1$ unit.

Finally, for the threshold $\theta_{1(m+1)}$ of each vertex in the bundle $C$. We define
$$\theta_{1(m+1)}:=m(n-2)W^3+(n-1)m(n-2)W^2+(1-\Delta) W^2.$$
As we will see later, $\theta_{1(m+1)}$ is slightly less than the expected number of neighbors of each $c\in C$ in $V\setminus C$, by an amount of $\Theta(\Delta W^2)$.

\begin{figure}
    \centerline{\includegraphics[width=\textwidth]{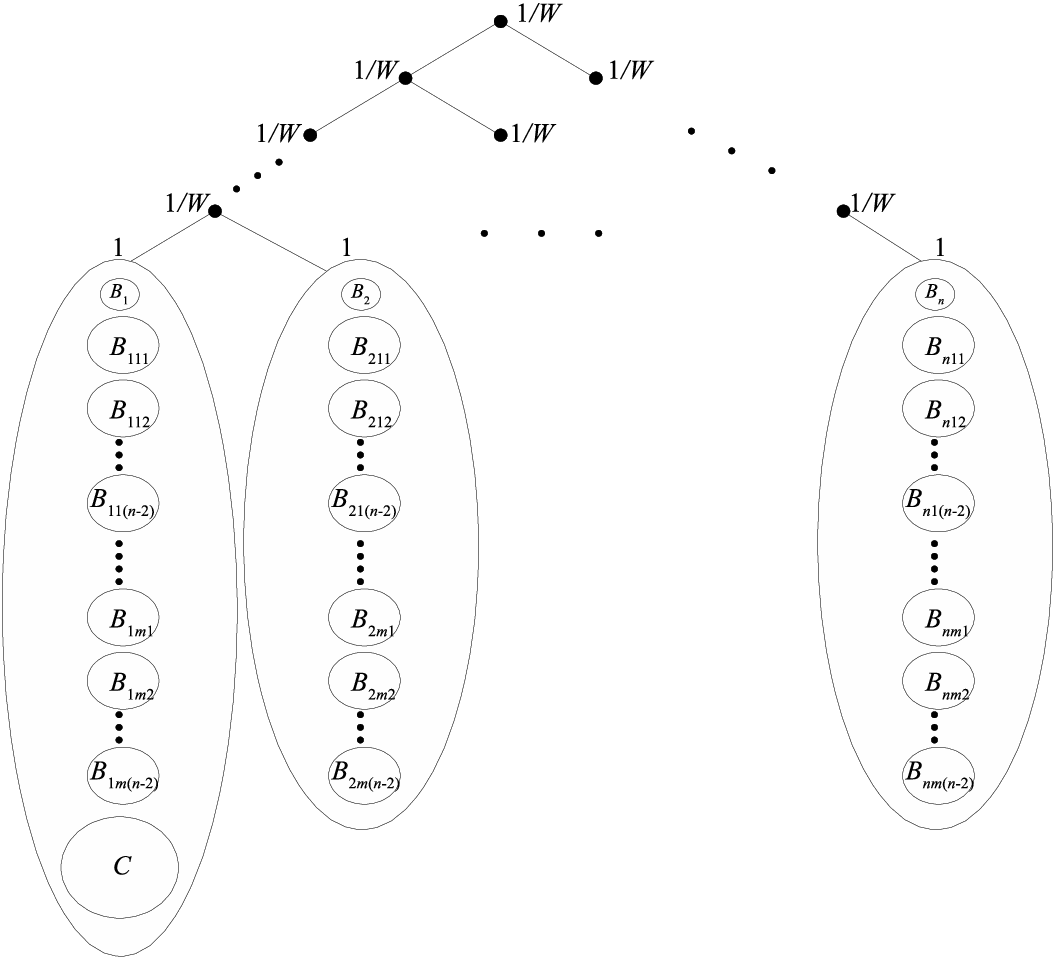}}
    \caption{The construction of the hierarchy tree $T$.}
    \label{fig:reductionSHBGM}
\end{figure}

\paragraph{The High-level Ideas}
Before presenting rigorous arguments, we provide high level ideas of the reduction in this subsection.

We have constructed the hierarchy tree $T$, which corresponds to a graph distribution $\mathcal{G}$ (refer to Definition~\ref{defi:SHBG}).
In the next subsection, we will show that a sample $G\sim\mathcal{G}$ can simulate the corresponding \VC instance with high probability.
In particular, we will say such samples are ``good'' samples, which we will define rigorously, and we will prove that a sample is good with probability $1-o(1)$.

Given a \VC instance $(\varG,\vark)$, we consider the \infmax instance $(G,F,\D,k)$, where $G$ is a good sample, $F,\D$ are as defined in Theorem~\ref{hardnessSHBGM_pre} (or Theorem~\ref{hardnessSHBGM_post}), and $k=\vark W^2$.

Suppose we have a good sample $G$.
If the \VC instance is a \yes instance,
we can find $\varS\subseteq\varV$ with $|\varS|=\vark$ such that $\varS$ covers all edges in $\varE$.
For each $A_i\in\varS$, we choose $W^2$ seeds from the bundle $B_i$, so a total of $\vark W^2=k$ seeds are chosen.

Similar to what happens in Section~\ref{sect::resultHBGM}, the cascade will flow level-by-level.
In particular, for the first edge $e_1\in\varE$ and $i_1,i_1'$ such that $e_1=(A_{i_1},A_{i_1'})$, the vertices in the bundles $B_{i_111}$ and $B_{i_1'11}$ have the lowest threshold in the level $B_{\cdot11}$.
On the other hand, by our choice of $k$ seeds, we have chosen $W^2$ seeds from one (or both) of $B_{i}$ and $B_{i'}$.
Calculations show that these seeds are just enough to infect all vertices in $B_{i_111}$ and $B_{i_1'11}$.
The infection of these vertices will eventually infected the entire level $B_{\cdot11}$,
and similar analysis shows that the levels $B_{\cdot12},B_{\cdot13},...$ will be infected one-by-one.
Finally, the cascade can reach the huge bundle $C$, and most vertices in $G$ will be infected.

If the \VC instance is a \no instance, we can assume all seeds are chosen from $\{B_1,\ldots,B_n\}$, as it is always a better idea to choose seeds from vertices having higher thresholds in a clique.\footnote{Rigorously, this may not be true in the post-sampling case, where the seed-picker can see the sample $G$. The vertices not in $\{B_1,\ldots,B_n\}$ may happen to have more neighbors across cliques, and the seed-picker can take advantage of this. We will reason about this later. However, for now, we assume all seeds are chosen from $\{B_1,\ldots,B_n\}$.}
We say that the $i$-th clique is activated if we have chosen almost $W^2$ seeds from $B_i$, or more than this number.
We can draw an analogy between activating the $i$-th clique in \infmax and picking the set $A_i$ in \VC.

Since the \VC instance is a \no instance, certain element $e_{j^\ast}$ is not covered, and we will show that the cascade will stop at one of the $n-2$ levels $B_{\cdot j^\ast1},\ldots,B_{\cdot j^\ast(n-2)}$.
Intuitively, the thresholds of vertices in these levels shift cyclically by our construction, and there exists a level whose vertices' thresholds are shifted to the position such that the cascade fails on all leaves.
In particular, even if we put all $k=\vark W^2$ seeds in a single bundle $B_i$, there exists a level $\imath$ such that $\theta_{ij^\ast\imath}$ is large enough, making the cascade still fail on leaf $i$.
On the other hand, there are only two leaves $i_{j^\ast},i_{j^\ast}'$ having lowest $\theta_{ij^\ast\imath}$ in all levels $\imath=1,\ldots,n-2$, which are exactly those $i_{j^\ast},i_{j^\ast}'$ with $e_{j^\ast}=(A_{i_{j^\ast}},A_{i_{j^\ast}'})$.
However, we have very few seeds (considerably fewer than $W^2$) on the $i_{j^\ast}$-th and the $i_{j^\ast}'$-th cliques, by our assumption that $e_{j^\ast}$ is not covered.

Since the cascade will fail on a certain intermediate level, it cannot reach the huge bundle $C$.
By making $C$ contain most vertices in $G$ (i.e., making $M$ large enough), we can see that the number of infected vertices corresponding to a \yes~\VC instance is significantly higher, which implies both Theorem~\ref{hardnessSHBGM_pre} and Theorem~\ref{hardnessSHBGM_post}.

In the next two subsections, we will rigorously prove the correctness of our reduction.

\paragraph{Good Samplings}
In this subsection, we define ``good'' samplings $G\sim\mathcal{G}$ which are useful in the reduction from \VC, in the sense that $G$ successfully simulates the \VC instance, and we show that a sample $G\sim\mathcal{G}$ is good with a high probability.

Firstly, consider a $W^3$ sized bundle $B_{ij\imath}$, and an arbitrary vertex $v$ not in the $i$-th clique.
Over all the samplings $G\sim\mathcal{G}$, $v$'s expected number of neighbors in $B_{ij\imath}$ is
$$\E_{G\sim\mathcal{G}}\left[|\Gamma(v)\cap B_{ij\imath}|\right]=\frac1W\cdot W^3=W^2.$$

Secondly, consider a set $D_i$ of $\delta W^2$ vertices in the $i$-th clique, and a set $D_{-i}$ of $(\vark+1)W^2$ vertices that are not in the $i$-th clique,
the expected total number of edges between $D_i$ and $D_{-i}$ is
$$\E_{G\sim\mathcal{G}}\left[|\{(u,v):u\in D_i, v\in D_{-i}\}|\right]=\frac1W\cdot \delta W^2\cdot(\vark+1) W^2=\delta(\vark+1)W^3.$$

We define a sampling $G\sim\mathcal{G}$ to be ``good'' if the above two numbers roughly concentrate on their expectations.
\begin{definition}\label{good_defi}
  A sampling $G\sim\mathcal{G}$ is \emph{good} if the following holds.
  \begin{enumerate}
    \item For all $i=1,\ldots,n$, $j=1,\ldots,m$ and $\imath=1,\ldots,n-2$, and any vertex $v$ not in the $i$-th clique,
    $$(1-\delta)W^2<|\Gamma(v)\cap B_{ij\imath}|<(1+\delta)W^2.$$
    \item For any set $D_i$ of $\delta W^2$ vertices in the $i$-th clique, and any set $D_{-i}$ of $(\vark+1)W^2$ vertices that are not in the $i$-th clique,
    the number of edges between $D_i$ and $D_{-i}$ is less than $W^{3.6}$:
    $$\left|\{(u,v):u\in D_i, v\in D_{-i}\}\right|<W^{3.6}.$$
  \end{enumerate}
\end{definition}
The following lemma shows that a sampling $G\sim\mathcal{G}$ is good with high probability.

\begin{lemma}\label{good_lemma}
  A sampling $G\sim\mathcal{G}$ is good with probability more than $1-e^{-\sqrt W}$.
\end{lemma}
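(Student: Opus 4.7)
The plan is to recognize each of the two conditions in Definition~\ref{good_defi} as a large-deviations event for a sum of independent Bernoulli random variables, apply a Chernoff bound to each instance of the event, and then take a union bound over all instances.

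For Condition~1, I fix indices $(i,j,\imath)$ and a vertex $v$ outside the $i$-th clique. The $W^3$ edge indicators from $v$ to $B_{ij\imath}$ are independent Bernoulli$(1/W)$ random variables, since by Definition~\ref{defi:SHBG} the least common ancestor of $v$ and any $u \in B_{ij\imath}$ in $T$ is an internal node of weight $1/W$. Thus $|\Gamma(v) \cap B_{ij\imath}|$ has mean exactly $W^2$, and a multiplicative Chernoff bound gives a failure probability at most $2\exp(-\delta^2 W^2/3)$. Plugging in $\delta = 1/(10 m n^2 \vark)$ and $W = m^{10} n^{10}$, the exponent is polynomially large in $W$, easily dominating $\sqrt{W}$. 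Union-bounding over the $O(n^2 m N)$ such tuples costs only $\log N$ in the exponent, negligible whenever $M$ (hence $N$) is at most sub-exponential in $W$.

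For Condition~2, I fix sets $D_i, D_{-i}$ as in the definition. The number of edges between them is a sum of $|D_i| \cdot |D_{-i}| = \delta(\vark+1) W^4$ independent Bernoulli$(1/W)$ variables (each such pair has its edge present with probability $1/W$, and these coin flips are all independent because no pair lies in a common clique). Its mean is $\delta(\vark+1) W^3 \leq W^3/5$, which is vastly smaller than the target $W^{3.6}$. A standard Chernoff upper-tail bound therefore drives the probability of exceeding $W^{3.6}$ down to $\exp(-\Omega(W^{3.6}))$. The number of candidate pairs $(D_i, D_{-i})$ is at most $\binom{N}{\delta W^2} \binom{N}{(\vark+1) W^2} \leq \exp(O(W^2 \log N))$, so the union bound still leaves a failure probability of $\exp(-\Omega(W^{3.6}) + O(W^2 \log N))$.

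Adding the two failure probabilities gives a bound well below $e^{-\sqrt{W}}$. The main technical point to watch is making sure the Chernoff savings dominate the entropy of the union bounds; for Condition~2 this requires $W^{3.6}$ to beat $W^2 \log N$, which is the mildest of constraints on the eventual choice of $M$ (since $M$ will only be chosen to inflate the inapproximability ratio, taking it polynomial in $W$ is consistent with the rest of the construction). The rest is a matter of routine Chernoff bookkeeping.
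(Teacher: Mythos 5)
Your proposal is correct and follows essentially the same strategy as the paper: a Chernoff/Hoeffding tail bound for each of the two conditions in Definition~\ref{good_defi}, followed by a union bound over the polynomially-many (for condition 1) or $\exp(O(W^2 \log N))$-many (for condition 2) instances, with the observation that $N = \mathrm{poly}(W)$ makes the entropy cost negligible against the Chernoff exponents. The only cosmetic differences are that you invoke the multiplicative Chernoff form for condition~1 where the paper uses an additive Hoeffding-type bound, and that you (correctly) write the union-bound factor for condition 2 as $\binom{N}{\delta W^2}\binom{N}{(\vark+1)W^2}$ where the paper has a harmless $\binom{N}{\vark W^2}$; both exponents comfortably dominate $\sqrt{W}$.
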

\begin{proof}
  We apply Chernoff-Hoeffding inequality and union bounds to show this lemma.
  In a random sample $G\sim\mathcal{G}$, for each $i=1,\ldots,n$; $j=1,\ldots,m$; $\imath=1,\ldots,n-2$ and $v$, requirement 1 in Definition~\ref{good_defi} fails with probability
  $$\Pr\left[\left|W^2-|\Gamma(v)\cap B_{ij\imath}|\right|\geq\delta W^2\right]\leq2\exp\left(-\frac12\left(\delta W^2\right)^2\frac1{W^3}\right)<e^{-W^{0.6}},$$
  where the last inequality is due to $(\delta W^2)^2=\frac1{\vark^2}m^{38}n^{36}>W^{3.6}$.

  For each $D_i$ and $D_{-i}$, requirement 2 in Definition~\ref{good_defi} fails with probability
  $$\Pr\left[\left|\{(u,v):u\in D_i, v\in D_{-i}\}\right|\geq W^{3.6}\right]\leq\exp\left(-\frac12\frac{\left(W^{3.6}-\delta(\vark+1)W^3\right)^2}{\delta W^2\cdot(\vark+1) W^2}\right)<e^{-W^3}.$$
  By a union bound, the probability that a sample $G\sim\mathcal{G}$ is not good is
  \begin{align*}
  \Pr[\mbox{not good}]&<nm(n-2)Ne^{-W^{0.6}}+\binom{N}{\delta W^2}\binom{N}{\vark W^2}e^{-W^3}\\
  &<N^2e^{-W^{0.6}}+N^{\delta W^2+\vark W^2}e^{-W^3}\\
  &=e^{2\log N}e^{-W^{0.6}}+e^{(\delta W^2+\vark W^2)\log N}e^{-W^3}\\
  &<e^{-\sqrt W},\tag{as $N=\text{poly}(W)$, which implies $\log N=o(W^c)$ for arbitrary $c>0$}
  \end{align*}
  which immediately implies the lemma.
\end{proof}

\paragraph{The Reduction Correctness}
In this section, we show that \infmax on a good sample $G\sim\mathcal{G}$ simulates the \VC problem.
\begin{lemma}\label{yesno_SHBGM}
  Consider \infmax with $k=\vark W^2$ seeds. For any good sample $G\sim\mathcal{G}$,
  \begin{enumerate}
    \item if the \VC instance is a \yes instance, a total of $\vark W^2+nm(n-2)W^3+M$ vertices can be infected by properly choosing the $k$ seeds;
    \item if the \VC instance is a \no instance, at most $N-M$ vertices can be infected for any choices of the $k$ seeds.
  \end{enumerate}
\end{lemma}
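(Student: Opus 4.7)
The plan is to prove the two parts of Lemma~\ref{yesno_SHBGM} separately. For Part 1 I exhibit a seed choice driving the cascade through every level $B_{\cdot j \imath}$ and eventually into $C$; for Part 2 I argue by contradiction that if any vertex in $C$ were infected then essentially every $B_{ij\imath}$ would have to be infected, and then use the cyclic-shift structure of the thresholds together with a counting/pigeonhole bound to conclude that some level is necessarily stranded.

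For Part 1, given a vertex cover $\varS$ with $|\varS| = \vark$, I seed $W^2$ vertices of $B_i$ for each $A_i \in \varS$, totaling $k$ seeds. I then induct on $(j, \imath)$ in lexicographic order to show that every level $B_{\cdot j \imath}$ becomes fully infected. For the inductive step, since $\varS$ covers $e_j$ some $i \in \{i_j, i_j'\} \cap \varS$ has $W^2$ seeds in $B_i$; every vertex of $B_{i j \imath}$ shares a clique with those seeds, and the cross-clique contributions from previously-infected levels total approximately $\omega_{j\imath}$ by Definition~\ref{good_defi}, so the infected-neighbor count reaches $\omega_{j\imath} + W^2$, beating the threshold $\omega_{j\imath} + (1-\Delta)W^2$ because the shift entry at column $i_j$ or $i_j'$ is $0$. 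Once that bundle is infected, every remaining bundle in the level picks up roughly $W^2$ extra cross-clique influence, and a chain argument propagating through increasing shift values---valid because of the slack $\Delta \gg \delta m n^2$---completes the level. After every $B_{ij\imath}$ has been infected, a similar computation for $c \in C$ (using the $W^2$ same-clique seeds in $B_1$, which is legitimate because the reduction assumes $A_1 \in \varS$) yields infected-neighbor count exceeding $\theta_{1(m+1)}$ by $\Theta(\Delta W^2)$, infecting all of $C$ and producing the count $\vark W^2 + nm(n-2)W^3 + M$.

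For Part 2, suppose for contradiction that some $c \in C$ is infected under some seed set of size $k$. Comparing $\theta_{1(m+1)}$ to the maximum achievable infected-neighbor count for $c$ shows this forces essentially every $B_{ij\imath}$ to be fully infected: missing one same-clique bundle $B_{1 j \imath}$ costs about $W^3$, and missing one cross-clique bundle $B_{ij\imath}$ (with $i \neq 1$) costs about $W^2$, each vastly exceeding the $\Delta W^2$ threshold slack. Let $n_i$ denote the total seeds contained in the $i$-th clique across all bundles (including $B_i$, any $B_{ij\imath}$, or $C$), so $\sum_i n_i \leq \vark W^2$; since same-clique edges are always present and cross-clique contributions from scattered individual seeds are $O(\vark W) \ll W^2$, only the aggregate $n_i$ matters for initial-trigger calculations. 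Let $\varS' = \{A_i : n_i \geq (1-\Delta)W^2\}$; by $\sum n_i \leq \vark W^2$ we have $|\varS'| \leq \vark/(1-\Delta) \leq \vark$, so the \no assumption supplies some uncovered edge $e_{j^*}$ with $n_{i_{j^*}}, n_{i_{j^*}'} < (1-\Delta) W^2$. Assuming by induction that levels preceding $(j^*, \imath^*)$ are already fully infected, a bundle $B_{i j^* \imath^*}$ can be the first in its level to become infected only if $n_i \geq (1 - \Delta + s_{i j^* \imath^*})W^2$. For $i \notin \{i_{j^*}, i_{j^*}'\}$ the shift $s_{i j^* \imath}$ cycles through $\{1, 2, \ldots, n-2\}$ exactly once as $\imath$ varies, so clique $i$ can initially trigger at most $\lceil n_i/W^2 \rceil - 1 \leq n_i/W^2$ values of $\imath$; summing gives at most $\sum_i n_i/W^2 \leq \vark$ triggerable values of $\imath$. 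Since $n - 2 > \vark$ (using $n > \vark^2 + 2$), some $\imath^*$ has no initial trigger; and without any bundle getting infected first in level $(j^*, \imath^*)$, no chain reaction can begin there (subsequent rounds produce no new infected neighbors for these bundles). In particular $B_{1 j^* \imath^*}$ is never infected, contradicting the necessity derived above. Hence no $c \in C$ is ever infected, capping total infections at $N - M$.

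The main obstacle will be making the ``no chain without initial trigger'' step airtight, because in Part 1 a single initially-infected bundle cascades through the entire level via the cyclic shifts (and notably awakens the shift-$0$ bundles $B_{i_{j^*} j^* \imath^*}$ and $B_{i_{j^*}' j^* \imath^*}$ even though they have no own-clique seeds of their own). Blocking the Part 2 cascade therefore genuinely requires the stronger statement that \emph{no} bundle whatsoever in the entire level is initially triggered, which is what the pigeonhole bound $\sum_i n_i/W^2 \leq \vark < n-2$ secures. A secondary subtlety is reconciling the $(1 \pm \delta) W^2$ tolerance from goodness condition~1 with the $\Delta W^2$ threshold slack so that each comparison goes the correct way in both directions; this is handled by the design choice $\Delta = m n^2 \delta$, which strictly dominates the accumulated $\delta$-errors summed across all bundles.
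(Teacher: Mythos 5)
Your Part~1 argument tracks the paper's proof essentially step by step (seeding $W^2$ vertices per $B_i$ for $A_i \in \varS$, showing level $B_{\cdot 11}$ infects by catching the shift-$0$ bundles first and chaining around the cyclic shifts, then inducting level by level and into $C$), so that part is fine apart from the stray ``$\Delta \gg \delta m n^2$'' (these are in fact equal; the right comparison is $\Delta W^2 \gg (n-1)m(n-2)\delta W^2$). Your Part~2 pigeonhole --- summing per-clique ``trigger capacity'' $\lfloor n_i/W^2 \rfloor$ and observing $\sum_i n_i/W^2 \leq \vark < n-2$ --- is a nice, arguably cleaner alternative to the paper's version, which uses a coarser binary activated/unactivated classification (threshold $(1-9\Delta)W^2$) and then locates $\vark$ \emph{consecutive} unactivated cliques among the first $n-2$ columns. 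Both routes isolate the stranded level $B_{\cdot j^\ast \imath^\ast}$.

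However, Part~2 has a real gap. You dismiss cross-clique seed influence with ``cross-clique contributions from scattered individual seeds are $O(\vark W) \ll W^2$,'' treating this as a per-vertex deterministic fact. It is not one, and crucially it is \emph{not} implied by requirement~1 or requirement~2 of Definition~\ref{good_defi}: a per-vertex concentration bound over all $\binom{N}{\vark W^2}$ possible seed sets cannot be union-bounded (the exponent $\vark W^2 \log N$ dwarfs the Chernoff exponent $\Theta(\vark W)$). Consequently, even at a stranded level some atypical vertices in each bundle $B_{ij^\ast\imath^\ast}$ can pick up an anomalously large number of edges to the $\leq \vark W^2$ cross-clique seeds and become infected, and these can in turn seed a slow within-level cascade --- exactly the phenomenon your Part~1 relies on. Your closing parenthetical ``subsequent rounds produce no new infected neighbors for these bundles'' therefore asserts precisely what needs to be proved. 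The paper closes this hole with requirement~2: it shows that forcing even $\delta W^2$ non-seeded vertices in one bundle to cross threshold would require $> W^{3.6}$ edges between a $\delta W^2$-set and a $(\vark+1)W^2$-set, violating goodness, and then iterates this argument round by round to cap the total infections per bundle at $o(W^3)$. Your proposal never invokes requirement~2 at all. A secondary (fixable) issue: your activation cutoff $(1-\Delta)W^2$ leaves no slack once you subtract the $\Delta W^2$ cross-clique concentration error from requirement~1, which is exactly why the paper uses $(1-9\Delta)W^2$ to leave a $7\Delta W^2$ margin for the edge-counting step. Tightening the cutoff and grafting in the requirement-2 iteration argument would repair the proof.
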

\begin{proof}[Proof of (1)]
  Suppose the \VC instance is a \yes instance.
  Let $\varS$ be the choice of $\vark$ vertices in \VC instance that covers all edges in $\varE$.
  As mentioned earlier, we can assume $A_1\in\varS$.
  For each $A_i\in\varS$, we choose $W^2$ seeds from the bundle $B_i$, so a total of $\vark W^2=k$ seeds are chosen.

  We show that all vertices in the level $B_{\cdot11}$ will be infected.
  Consider $e_1=(A_{i_1},A_{i_1'})$ with $i_1<i_1'$.
  By the fact the \VC instance is a \yes instance and the way we choose the seeds, $W^2$ vertices in either $B_{i_1}$ or $B_{i_1'}$, or both, are seeded.
  Assume without loss of generality that $W^2$ vertices from $B_{i_1}$ are seeded, then all the vertices in the bundle $B_{i_111}$, having threshold $\theta_{i_111}=\omega_{11}+(1-\Delta)W^2+0=(1-\Delta)W^2<W^2$ will be infected.
  As for the vertices in $B_{i_1'11}$, they will be infected in the same way if $W^2$ vertices from $B_{i_1'}$ are also seeded.
  On the other hand, if no vertex in $B_{i_1'}$ is seeded, all vertices in $B_{i_1'11}$ will be infected due to the influence of $B_{i_111}$.
  This is because 1) each vertex in $B_{i_1'11}$ has more than $(1-\delta)W^2$ infected neighbors in $B_{i_111}$ by requirement 1 of Definition~\ref{good_defi}, and 2) each vertex in $B_{i_1'11}$ has threshold $(1-\Delta)W^2<(1-\delta)W^2$.
  In the next $n-2$ iterations, by a careful calculation and based on requirement 1 of Definition~\ref{good_defi}, all vertices in the remaining $n-2$ bundles $\{B_{i11}\}_{i\neq i_1,i_1'}$ will be infected in the following order:
  \begin{equation}\label{eqn:level1_yes}
  B_{111}\rightarrow B_{211}\rightarrow\cdots B_{(i_1-1)11}\rightarrow B_{(i_1+1)11}\rightarrow\cdots B_{(i_1'-1)11}\rightarrow B_{(i_1'+1)11}\rightarrow\cdots\rightarrow B_{n11}.
  \end{equation}
  Therefore, the entire level $B_{\cdot11}$ will be infected.

  By similar analysis, we will show that the next level $B_{\cdot12}$ will be infected after the previous level $B_{\cdot11}$.
  Again, assume without loss of generality that $W^2$ seeds in $B_{i_1}$ are chosen. (Remember that the first $n-2$ levels are for edge $e_1\in\varE$, so we are still working on $e_1$.)
  Each vertex in $B_{i_112}$ has $(W^2+W^3)$ infected neighbor in the $i_1$-th clique, and has more than $(n-1)(1-\delta)W^2$ infected neighbors in $\{B_{i11}\}_{i\neq i_1}$, which is a total of more than $W^3+nW^2-(n-1)\delta W^2$ neighbors.
  Moreover, each vertex in $B_{i_112}$ has threshold $\theta_{i_112}=\omega_{12}+(1-\Delta)W^2+0=W^3+(n-1)W^2+(1-\Delta)W^2=W^3+nW^2-\Delta W^2$ which is less than the number of infected neighbors, as $-\Delta<-(n-1)\delta$.
  Therefore, all vertices in $B_{i_112}$ will be infected.
  As for the vertices in $B_{i_1'12}$, following the analysis in the last paragraph, they will be infected at the same iteration if $W^2$ vertices in $B_{i_1'}$ are seeded, and they will be infected at the next iteration due to the extra influence from $B_{i_112}$ if not.
  Finally, the remaining $n-2$ bundles $\{B_{i12}\}_{i\neq i_1,i_1'}$ will be infected in the following order:
  \begin{equation}\label{eqn:level2_yes}
  B_{212}\rightarrow B_{312}\rightarrow\cdots B_{(i_1-1)12}\rightarrow B_{(i_1+1)12}\rightarrow\cdots B_{(i_1'-1)12}\rightarrow B_{(i_1'+1)12}\rightarrow\cdots B_{n12}\rightarrow B_{112},
  \end{equation}
  which is similar to (\ref{eqn:level1_yes}), but is cyclically shifted to the left by $1$ unit, due to our cyclic construction of the thresholds.
  Thus, we have shown that the level $B_{\cdot12}$ will be infected after the previous level $B_{\cdot11}$.

  Following the same analysis, we can conclude that all levels will be infected in the following order:
  $$B_{\cdot11}\rightarrow B_{\cdot12}\rightarrow\cdots\rightarrow B_{\cdot1(n-2)}\rightarrow$$
  $$B_{\cdot21}\rightarrow B_{\cdot22}\rightarrow\cdots\rightarrow B_{\cdot2(n-2)}\rightarrow$$
  $$\cdots$$
  $$B_{\cdot m1}\rightarrow B_{\cdot m2}\rightarrow\cdots\rightarrow B_{\cdot m(n-2)}.$$

  Lastly, each vertex $c\in C$ has $W^2+m(n-2)W^3$ infected neighbors in the $1$-st clique (notice that we assume $A_1\in\varS$, which implies $W^2$ vertices in $B_1$ are seeded, which contributes $W^2$ infected neighbors), and more than $(n-1)\cdot m(n-2)(1-\delta)W^2$ infected neighbors from the other $n-1$ cliques, which is a total of $m(n-2)W^3+(n-1)m(n-2)W^2+W^2-(n-1)m(n-2)\delta W^2$ neighbors.
  In addition, $c$ has threshold $\theta_{1(m+1)}=m(n-2)W^3+(n-1)m(n-2)W^2+(1-\Delta)W^2$, which is less than the number of infected neighbors, as we have $-\Delta=-mn^2\delta<-(n-1)m(n-2)\delta$.
  Consequently, all vertices in $C$ will be infected.
  By summing up the total number of infected vertices, we conclude the first part of this lemma.
\end{proof}

\begin{proof}[Proof of (2)]
  Suppose the \VC instance is a \no instance.
  For those $n\vark W^2$ vertices in $\{B_i\}_{i=1,\ldots,n}$ having threshold $\infty$, they will not be infected unless being seeded, which means at least $(n-1)\vark W^2$ of them will not be infected.
  To show that the total number of infected vertices cannot exceed $N-M$, it is enough to show that at most $(n-1)\vark W^2$ vertices can be infected in the bundle $C$ of $M$ vertices.
  We will show the following stronger claim.
  \begin{proposition}\label{prop:no1}
    If the \VC instance is a \no instance, all vertices in $C$ will not be infected unless being seeded.
  \end{proposition}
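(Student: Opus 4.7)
The plan is a proof by contradiction. Suppose some non-seeded $c^* \in C$ becomes cascade-infected. Each threshold $\theta_{ij\imath} = \omega_{j\imath} + (1-\Delta)W^2 + \mathrm{shift}(i,j,\imath)$ is calibrated so that $\omega_{j\imath}$ equals the expected number of $B_{ij\imath}$-neighbors in $B_{\prec j\imath}$. Using Requirement~1 of Definition~\ref{good_defi} together with the relation $(n-1)(n-2)m\delta < \Delta$, a vertex in $B_{ij\imath}$ cannot accumulate $\theta_{ij\imath}$ infected neighbors before essentially all of $B_{\prec j\imath}$ is infected; consequently $c^*$ forces every intermediate level $B_{\cdot j\imath}$ to fire, including all $n-2$ levels associated with any edge.

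Second, I would lower-bound the seed cost to fire each individual level. Let $s_i$ denote the total seed count in cluster $i$ (summed over $B_i$, all $B_{ij'\imath'}$, and $C$ when $i=1$). For the \emph{first} bundle $B_{i^{\dagger}j\imath}$ of level $(j,\imath)$ to become cascade-infected no same-level cross-clique boost is yet available, so combining the in-clique contribution from $B_{\prec j\imath}$, the cross-clique contribution bounded by Requirement~1, the $s_{i^{\dagger}}$ in-clique seeds, and the residual out-of-clique help from seeds in future-level bundles (bounded via Requirement~2 of Definition~\ref{good_defi}), the only way to meet $\theta_{i^{\dagger}j\imath}$ is
$$ s_{i^{\dagger}} \;\geq\; (1-2\Delta)W^2 + \mathrm{shift}(i^{\dagger},j,\imath) . $$

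Third, I would apply the cyclic-shift construction at $j = j^*$. Because $(\varG,\vark)$ is a \no instance, the set $\Pi := \{i : s_i \geq (1-2\Delta)W^2\}$ has $|\Pi|\leq\vark$ and misses both endpoints of some edge $e_{j^*}=(A_{i_{j^*}},A_{i_{j^*}'})$; hence the zero-shift columns of the threshold matrix at $j^*$ lie outside $\Pi$, and for each $i\in\Pi$ the shift at $(j^*,\imath)$ takes every value in $\{1W^2,\ldots,(n-2)W^2\}$ exactly once as $\imath$ varies. Setting $q_i := \max\{0,\lfloor s_i/W^2 - 1 + 2\Delta\rfloor\}$, cluster $i$ can fire at most $q_i$ of the $n-2$ levels, so firing every level requires $\sum_{i\in\Pi}q_i\geq n-2$. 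Using $s_i\geq(q_i+1-2\Delta)W^2$ for each contributing $i$ together with $\sum_i s_i \leq \vark W^2$, this yields
$$\vark W^2 \;\geq\; \sum_{i\in\Pi} s_i \;\geq\; \bigl(|\Pi|(1-2\Delta)+(n-2)\bigr)W^2 \;>\; (n-2)W^2 ,$$
contradicting $n > \vark^2 + 2 \geq \vark + 2$.

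The main obstacle is the bookkeeping in the second step: rigorously bounding the extra help that seeds placed in $C$ or in future-level bundles $B_{i'j'\imath'}$ with $i'\neq i^{\dagger}$ can provide to the first bundle of a given level. Requirement~2 of Definition~\ref{good_defi} is the essential tool controlling these cross-clique contributions from small seed subsets, and the choices $\delta = 1/(10mn^2\vark)$ and $\Delta = mn^2\delta = 1/(10\vark)$ are calibrated so that all slack terms line up with the clean inequality of the third step.
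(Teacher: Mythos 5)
Your proposal takes a genuinely different route from the paper's. The paper first argues (using $n > \vark^2 + 2$) that there is a \emph{single} level $B_{\cdot j^\ast\imath^\ast}$ at which the $\vark$ smallest nonzero-shift columns are all unactivated cliques and the two zero-shift columns are the uncovered edge's endpoints, and then shows the cascade stalls at exactly that level. You instead run a \emph{global budget} argument: if every level of every edge fires, the per-level seed cost forced by the cyclic shifts adds up to more than $\vark W^2$, needing only $n > \vark + 2$. If your step~2 bound could be made rigorous, this would actually be cleaner and would work under a weaker hypothesis. Both proofs rest on the same good-sampling tools (Requirements 1 and 2 of Definition~\ref{good_defi}), so the gap is in the execution, not in the available ingredients.

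The gap is the one you already flagged, but it is larger than a bookkeeping nuisance. Your step~2 asserts that for the \emph{first} bundle $B_{i^\dagger j\imath}$ of a level to cascade-infect, the single cluster $i^\dagger$ must by itself carry $s_{i^\dagger} \ge (1-2\Delta)W^2 + \mathrm{shift}(i^\dagger,j,\imath)$ seeds. But Requirement~2 does not give a per-vertex or per-cluster bound of this form; it only bounds the \emph{total} number of edges from any $\delta W^2$-subset of a clique into any $(\vark+1)W^2$ external vertices. This is exactly why the paper's proof of Proposition~\ref{prop:no2} has to proceed iteration by iteration: it first shows that strictly fewer than $\delta W^2$ unseeded vertices per bundle can appear at iteration $t+1$, then bootstraps the same bound to every later iteration $t^\ast$ by observing that the set $D_{-i^\ast}$ of newly infected cross-clique vertices stays small enough ($< W^2$) that $|D_{-i^\ast} \cup S|$ remains within the $(\vark+1)W^2$ regime of Requirement~2. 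Without this induction you have not excluded the scenario in which a handful of lucky vertices in each bundle fire early (no single cluster paying the full $(1-2\Delta)W^2 + \mathrm{shift}$ cost), these partial infections cross-pollinate between bundles, and the level eventually fires through accumulated cross-clique help rather than through any one cluster's seed budget. The threshold definition of ``fires'' is also elided: your counting implicitly attributes each level to a single cluster, but that attribution only makes sense after one shows that partial firings cannot self-sustain --- precisely the content of the paper's inductive argument. In short, the combinatorial skeleton of your proof is sound and arguably sharper, but the probabilistic core that makes the per-level cost claim true is missing and cannot be dispatched by invoking Requirement~2 in one line.
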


  To show Proposition~\ref{prop:no1}, we show that the cascade will stop at an intermediate level.
  We will first identify this level, and then show this claim in Proportion~\ref{prop:no2}.

  Consider an arbitrary seed set $S$ (with $|S|=k$).
  Let $S_i$ be the seeds chosen from the $i$-th clique, and $k_i=|S_i|$ so that $\sum_{i=1}^nk_i=k$.
  We say that the $i$-th clique is activated if $k_i\geq(1-9\Delta)W^2$.
  Since
  $(\vark+1)(1-9\Delta)W^2=\vark W^2+\left(1-\frac9{10}-\frac{9}{10\vark}\right)W^2>k,$
  at most $\vark$ cliques can be activated.

  If we draw an analogy between activating a clique and picking a vertex in \VC, by the fact that the \VC instance is a \no instance, there exists $j^\ast$ where $e_{j^\ast}=(A_{i_{j^\ast}},A_{i_{j^\ast}'})$ such that both $i_{j^\ast}$-th and $i_{j^\ast}'$-th cliques are not activated.
  For the ease of illustration, assume without loss of generality that $i_{j^\ast}=n-1$ and $i_{j^\ast}'=n$.
  Since we have assumed $n>\vark^2+2$, there exists $\imath^\ast\leq n-1-\vark$ such that the $\imath^\ast$-th, the $(\imath^\ast+1)$-th, ..., and the $(\imath^\ast+\vark-1)$-th cliques are not activated.
  (If we have an activated clique within any $\vark$ consecutive cliques in the first $n-2$ cliques, the total number of activated cliques is at least $\frac{n-2}\vark>\vark$, which is a contradiction.)
  We will show that the cascade stops at the level $B_{\cdot j^\ast\imath^\ast}$.
  That is, there are only $o(W^3)$ infected vertices in
  $$\left(\bigcup_{(n-2)j+\imath\geq(n-2)j^\ast+\imath^\ast}B_{\cdot j\imath}\right)\cup C=V\setminus\left(B_1\cup\cdots\cup B_n\cup B_{\prec j^\ast\imath^\ast}\right).$$
  We will show that this is true even in the case that all vertices in the previous $(n-2)(j^\ast-1)+\imath^\ast-1$ levels (i.e., those in $B_{\prec j^\ast\imath^\ast}$) are infected.
  \begin{proposition}\label{prop:no2}
    There are only $o(W^3)$ infected vertices in the level $B_{\cdot j^\ast\imath^\ast}$, given that all vertices in $B_{\prec j^\ast\imath^\ast}$ and at most $\vark W^2$ vertices elsewhere (i.e., in $V\setminus B_{\prec j^\ast\imath^\ast}$) are infected.
  \end{proposition}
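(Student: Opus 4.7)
The plan is to bound $s_i:=|\{v\in B_{ij^\ast\imath^\ast}: v\text{ infected}\}|$ for each $i$ and then sum over the $n$ cliques. Throughout, let $\alpha_i$ denote the total infected vertices in clique $i$ outside $B_{\prec j^\ast\imath^\ast}$, so by hypothesis $\sum_i\alpha_i\leq\vark W^2$.

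First, for any $v\in B_{ij^\ast\imath^\ast}$, its neighbors in $B_{\prec j^\ast\imath^\ast}$ contribute exactly $((j^\ast-1)(n-2)+(\imath^\ast-1))W^3$ within clique $i$ (every such vertex is a neighbor since within-clique edges have probability $1$), and by Requirement~1 of Definition~\ref{good_defi} at most $(1+\delta)(n-1)((j^\ast-1)(n-2)+(\imath^\ast-1))W^2$ from the other cliques. Their sum is at most $\omega_{j^\ast\imath^\ast}+\Delta W^2$. Since $\theta_{ij^\ast\imath^\ast}=\omega_{j^\ast\imath^\ast}+(1-\Delta)W^2+c_{ij^\ast\imath^\ast}$, infection of $v$ forces at least $(1-2\Delta)W^2+c_{ij^\ast\imath^\ast}$ further units of influence to arrive from infected vertices in $V\setminus B_{\prec j^\ast\imath^\ast}$.

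The combinatorial content of the cyclic construction is what identifies level $\imath^\ast$ as the cascade-killer. With the WLOG choice $\{i_{j^\ast},i_{j^\ast}'\}=\{n-1,n\}$, row $\imath^\ast$ of the shift matrix places the two zeros on the non-activated cliques $n-1,n$, and by the choice of $\imath^\ast$ places the small values $W^2,2W^2,\ldots,\vark W^2$ precisely on the $\vark$ consecutive non-activated cliques $\imath^\ast,\ldots,\imath^\ast+\vark-1$; every remaining clique---in particular every activated one---receives $c_{ij^\ast\imath^\ast}\geq(\vark+1)W^2$. Next, summing the residual-threshold inequality over the $s_i$ infected vertices of $B_{ij^\ast\imath^\ast}$, the in-clique part contributes $s_i\alpha_i$ and the cross-clique part equals $E_i$, the number of edges from these $s_i$ vertices to the at most $(\vark+1)W^2$ infected vertices outside clique $i$. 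Partitioning the $s_i$ vertices into $\lceil s_i/(\delta W^2)\rceil$ blocks of size $\delta W^2$ and applying Requirement~2 of Definition~\ref{good_defi} to each block yields $E_i\leq\lceil s_i/(\delta W^2)\rceil\cdot W^{3.6}$. Rearranging gives the key bound
\[
s_i\ \leq\ \frac{W^{3.6}}{(1-2\Delta)W^2+c_{ij^\ast\imath^\ast}-\alpha_i-W^{1.6}/\delta}
\]
whenever the denominator is positive. For activated cliques the term $(\vark+1)W^2$ dominates the worst-case $\alpha_i\leq\vark W^2$, so the denominator is $\Theta(W^2)$ and $s_i=O(W^{1.6})$. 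For non-activated cliques, $k_i<(1-9\Delta)W^2$, and any non-seed contribution to $\alpha_i$ outside $B_{ij^\ast\imath^\ast}$ must come from cascade in subsequent levels of clique $i$, which an inductive argument on the levels $(j,\imath)>(j^\ast,\imath^\ast)$ controls at $o(W^2)$; this gives $\alpha_i\leq(1-9\Delta)W^2+s_i+o(W^2)$, and solving the resulting fixed-point inequality again yields $s_i=O(W^{1.6})$. Summing over the $n$ cliques, $\sum_is_i=O(nW^{1.6})=o(W^3)$.

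The main obstacle is the self-referential nature of this bound: $\alpha_i$ itself contains the $s_i$ we are trying to control, so naively a large $s_i$ could appear to relax its own constraint. I would handle this by induction on cascade rounds---at the first round in which $s_i$ would cross the target $O(W^{1.6})$, the newly infected vertex draws its influence only from vertices infected in strictly earlier rounds, whose count still satisfies the target bound, so the displayed inequality applies with those smaller $\alpha_i$-values and produces a contradiction, closing the induction.
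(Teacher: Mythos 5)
Your proposal is correct and follows essentially the same route as the paper's proof: both arguments exploit the cyclic offset to pin down level $\imath^\ast$ as the one where every activated clique carries offset $\geq(\vark+1)W^2$ and the remaining small offsets $0,W^2,\ldots,\vark W^2$ land only on non-activated cliques, then use requirement~1 of the good-sample definition to peel off the baseline $\omega_{j^\ast\imath^\ast}+\Delta W^2$, and finally use requirement~2 (the $W^{3.6}$ edge bound on $\delta W^2$-by-$(\vark+1)W^2$ bipartite sets) to show the leftover residual demand cannot be met. The chief presentational difference is that you fold the case analysis into a single algebraic bound $s_i\leq W^{3.6}/(\text{residual})$, whereas the paper argues directly by contradiction that no bundle accrues $\delta W^2$ not-seeded infections, separately for the cases $c_{ij^\ast\imath^\ast}=0$, $c_{ij^\ast\imath^\ast}\in\{W^2,\ldots,\vark W^2\}$, and $c_{ij^\ast\imath^\ast}\geq(\vark+1)W^2$; these are the same computation. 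Two small imprecisions worth noting: the per-bundle bound you derive is really $O(\vark W^{1.6})$ rather than $O(W^{1.6})$ (the denominator for non-activated cliques is only $\Theta(\Delta W^2)=\Theta(W^2/\vark)$), though $n\vark W^{1.6}$ is still comfortably $o(W^3)$; and the self-referential loop you flag must be broken not per-clique but simultaneously across all $n$ bundles (and must also rule out infection in later levels), which is exactly what the paper's round-by-round contradiction does by showing the set $D_{-i^\ast}$ of newly infected vertices outside clique $i$ stays below $W^2$, keeping the outside set under $(\vark+1)W^2$ so requirement~2 continues to apply.
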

  The ``$\vark W^2$ vertices elsewhere'' mentioned in Proposition~\ref{prop:no2} refer to the $k=\vark W^2$ seeds.
  Notice that the seed-picker may choose the seeds outside $B_{\prec j^\ast\imath^\ast}$, and Proposition~\ref{prop:no2} holds even if all vertices in $B_{\prec j^\ast\imath^\ast}$ are infected and the $k$ seeds are all outside $B_{\prec j^\ast\imath^\ast}$.

  Before proving Proposition~\ref{prop:no2}, we remark that Proposition~\ref{prop:no2} immediately implies Proposition~\ref{prop:no1}: the vertices in the later levels $B_{\cdot j^\ast(\imath^\ast+1)},B_{\cdot j^\ast(\imath^\ast+2)},\ldots,B_{\cdot m(n-2)}$ have thresholds even higher than the thresholds of vertices in $B_{\cdot j^\ast\imath^\ast}$, and the thresholds increase by $\Theta(W^3)$ for each next level.

  Proposition~\ref{prop:no2} can be proved by just a sequence of calculations.

\begin{proof}[Proof of Proposition~\ref{prop:no2}]
  Suppose all vertices in $B_{\prec j^\ast\imath^\ast}$ and at most $\vark W^2$ vertices elsewhere are infected after a certain cascade iteration $t$.
  We will first show that less than $\delta W^2$ not-seeded vertices can be infected in each bundle $B_{ij^\ast\imath^\ast}$ for $i=1,\ldots,n$ in the next cascade iteration $t+1$.
  Specifically, we will show this separately for (i) the $2$ bundles $B_{nj^\ast\imath^\ast}$ and $B_{(n-1)j^\ast\imath^\ast}$, (ii) the $\vark$ bundles $B_{\imath^\ast j^\ast\imath^\ast},B_{(\imath^\ast+1)j^\ast\imath^\ast},\ldots,B_{(\imath^\ast+\vark-1)j^\ast\imath^\ast}$, and (iii) the remaining $n-2-\vark$ bundles.
  Then, we will show the same claim for later iterations.

  (i) For each vertex in the bundle $B_{nj^\ast\imath^\ast}$, by requirement 1 of Definition~\ref{good_defi}, the number of infected neighbors among the vertices in $B_{\prec j^\ast\imath^\ast}$ is less than
  \begin{equation}\label{eqn:prop2a}
   \underbrace{((n-2)(j^\ast-1)+\imath^\ast-1)W^3}_{\text{from the }n\text{-th clique}}+\underbrace{(n-1)\cdot((n-2)(j^\ast-1)+\imath^\ast-1)\cdot(1+\delta)W^2}_{\text{from the other }n-1\text{ cliques}}<\omega_{j^\ast\imath^\ast}+\Delta W^2.
  \end{equation}

  For each vertex in the bundle $B_{nj^\ast\imath^\ast}$, we have already counted the number of infected neighbors in $B_{\prec j^\ast\imath^\ast}$.
  Next, we consider the infected neighbors in $V\setminus B_{\prec j^\ast\imath^\ast}$.
  There are at most $\vark W^2$ of them by our assumption, and they are the seeds $S=\bigcup_{i=1}^nS_n$.

  The number of infected neighbors among seed set $S_n$ contributes at most $k_n<(1-9\Delta)W^2$, as we have assumed the $n$-th clique is not activated.
  Summing up this and (\ref{eqn:prop2a}), the total number of infected neighbors in $B_{\prec j^\ast\imath^\ast}\cup S_n$ is at most $\omega_{j^\ast\imath^\ast}+(1-8\Delta)W^2$.
  Since by our construction $\theta_{nj^\ast\imath^\ast}=\omega_{j^\ast\imath^\ast}+(1-\Delta)W^2+0$, to have $\delta W^2$ not-seeded vertices infected, the number of edges between each of these $\delta W^2$ vertices and $\bigcup_{i=1}^{n-1}S_i$ should be more than
  $$7\Delta W^2,7\Delta W^2-1,7\Delta W^2-2,\ldots,7\Delta W^2-\delta W^2+1$$
  respectively.
  This requires a total of
  $$\sum_{t=0}^{\delta W^2-1}(7\Delta W^2-t)>\delta W^2(7\Delta W^2-\delta W^2+1)>W^{3.6}$$
  edges, where the last inequality is based on the fact $\delta W^2=\frac1{10\vark}m^{19}n^{18}\gg W^{1.6}$.
  Since $\sum_{i=1}^{n-1}k_i<(\vark+1)W^2$, this is a contradiction to requirement 2 of Definition~\ref{good_defi}.

  For exactly the same reason, we can only have less than $\delta W^2$ not-seeded vertices infected in the bundle $B_{(n-1)j^\ast\imath^\ast}$, as $\theta_{(n-1)j^\ast\imath^\ast}=\theta_{nj^\ast\imath^\ast}$.

  (ii) Next, we consider these $\vark$ bundles: $B_{\imath^\ast j^\ast\imath^\ast},B_{(\imath^\ast+1)j^\ast\imath^\ast},\ldots,B_{(\imath^\ast+\vark-1)j^\ast\imath^\ast}$, whose corresponding cliques $\imath^\ast,\imath^\ast+1,\ldots,\imath^\ast+\vark-1$ are not activated by our assumption.
  Based on our construction, the vertices in these bundles have thresholds
  $$\omega_{j^\ast\imath^\ast}+(1-\Delta)W^2+1W^2,\omega_{j^\ast\imath^\ast}+(1-\Delta)W^2+2W^2,\ldots,\omega_{j^\ast\imath^\ast}+(1-\Delta)W^2+\vark W^2$$
  respectively, which are all more than $\theta_{nj^\ast\imath^\ast}$.
  By the same arguments, we can show that having $\delta W^2$ not-seeded vertices infected in any of these bundles requires even more edges, which contradicts requirement 2 of Definition~\ref{good_defi}.

  (iii) For each of the remaining $n-2-\vark$ bundles $B_{ij^\ast\imath^\ast}$ with $i\neq\imath,\imath+1,\ldots,\imath+\vark-1,n-1,n$, although the corresponding $i$-th clique may be activated, the threshold $\theta_{ij^\ast\imath^\ast}$ is at least $\omega_{j^\ast\imath^\ast}+(1-\Delta)W^2+(\vark+1)W^2$.
  The number of seeds chosen in the $i$-th clique $k_i$ cannot offset the term $(\vark+1)W^2$.
  Therefore, applying the same arguments shows us that less than $\delta W^2$ not-seeded vertices can be infected in each of these bundles.

  We have shown that less than $\delta W^2$ not-seeded vertices can be infected in each bundle $B_{ij^\ast\imath^\ast}$ in iteration $t+1$.
  To show this claim for future iterations, assume for the sake of contradiction that 1) at iteration $t^\ast>t+1$, less than $\delta W^2$ not-seeded vertices are infected in each bundle $B_{ij^\ast\imath^\ast}$, and 2) at iteration $t^\ast+1$, for certain $i^\ast$ we have at least $\delta W^2$ not-seeded vertices infected in the bundle $B_{i^\ast j^\ast\imath^\ast}$.
  Denote by $D_{-i^\ast}$ the set of those vertices outside the $i$-th clique which are infected during the iterations $t+1,t+2,\ldots,t^\ast$, and $D_{i^\ast}$ be the set of those vertices in the $i$-th clique which are infected during the iterations $t+1,t+2,\ldots,t^\ast,t^\ast+1$.
  Following the same arguments, for some $\delta W^2$ vertices from $D_{i^\ast}$, the number of edges between each of these $\delta W^2$ vertices and $D_{-i^\ast}\cup S$ should be more than
  $$7\Delta W^2,7\Delta W^2-1,7\Delta W^2-2,\ldots,7\Delta W^2-\delta W^2+1$$
  respectively, whose summation is more than $W^{3.6}$.
  On the other hand, since $|D_{-i^\ast}|<(n-1)\cdot\delta W^2<W^2$, we have $|D_{-i^\ast}\cup S|<(1+\vark)W^2$, which again contradicts to requirement 2 of Definition~\ref{good_defi}.
  Therefore, we conclude Proposition~\ref{prop:no2}.
\end{proof}
As we have remarked that Proposition~\ref{prop:no2} implies Proposition~\ref{prop:no1}, we conclude the second part of Lemma~\ref{yesno_SHBGM}.
\end{proof}

Finally, by making $M$ sufficiently large, both Theorem~\ref{hardnessSHBGM_pre} and Theorem~\ref{hardnessSHBGM_post} follow from Lemma~\ref{good_lemma} and Lemma~\ref{yesno_SHBGM}.

\section{Hierarchical Blockmodel with One-Way Influence}
\label{sect::oneway}
In this section, we consider a variant to the hierarchical blockmodel in which the influence between any two vertex-blocks can only be ``one-way''.
To each node in the hierarchy tree, a \emph{sign} is assigned deciding the directions of the edges between the two vertex-blocks associated to its two children.
For example, let $t$ be a node in the hierarchy tree, and $t_L,t_R$ be its left child and right child respectively.
If $t$ has a positive sign, then all edges between $V(t_L)$ and $V(t_R)$ are from $V(t_L)$ to $V(t_R)$; otherwise, these edges are from $V(t_R)$ to $V(t_L)$.
In this manner, the influence between $V(t_L)$ and $V(t_R)$ is one-way.

In \infmax, the seed-picker needs to decide not only the choice of those $k$ seeds, but also the sign at each tree node.
That is, the algorithm to \infmax problem should also output the optimal directions of influence between each pair of vertex-blocks.

Our algorithm also works in the more restrictive, but, perhaps, more practical setting where the signs for all tree nodes are fixed as input and the seed-picker only needs to decide the choice of $k$ seeds.
The directed influence between two communities may be observed in our real life for multiple reasons.
In some scenarios (e.g.,  Twitter), the network itself is directed.  Status differences between members of different communities could create a uniform direction of influence.  
Another reason of directed influence may be government regulations.
For example, in the cellphone market, many Chinese users adopt iPhone products due to the influence of American users, while Huawei cellphones, adopted by many Chinese users, are banned in the United States of America.

\subsection{A Dynamic Programming Algorithm}
We present a dynamic-programming-based algorithm for \infmax for this variant of the hierarchical blockmodel, when the thresholds of the vertices are deterministic.
Our algorithm makes use of the following observation:
\emph{for a tree node $t$, the influence from the infected vertices in the vertex-block $V(t)$ to each vertex in $V\setminus V(t)$ only depends on the \textbf{number} of infected vertices in $V(t)$.}
This is formally described in Definition~\ref{defi:oneway} and Lemma~\ref{lem:oneway} below.

\begin{definition}\label{defi:oneway}
  Given a set $I\subseteq V$ of infected vertices and a vertex $v\in V\setminus I$, the \emph{influence} from $I$ to $v$ is defined by $\sum_{u\in I}w(u,v)$, where $w(u,v)$ is the weight of the edge $(u,v)$ which is the weight of the deepest node $t\in V_T$ such that $V(t)$ contains both $u$ and $v$.
\end{definition}
By our definition, if the influence from the set of all infected vertices to an uninfected vertex $v$ exceeds $\theta_v$, $v$ will be infected.
\begin{lemma}\label{lem:oneway}
  Consider an arbitrary node $t\in V_T$. The influence from a set of infected vertices $I_1\subseteq V(t)$ in $V(t)$ to a vertex $u\in V\setminus V(t)$ only depends on $|I_1|$. Moreover, for any $v_1,v_2\in V(t)$ and an arbitrary set of infected vertices outside $V(t)$, $I_2\subseteq V\setminus V(t)$, the influences from $I_2$ to $v_1$ and $v_2$ are the same.
\end{lemma}
\begin{proof}
For any $v_1,v_2\in V(t)$ and $u\in V\setminus V(t)$, let $t_{v_1},t_{v_2},t_u$ be the leaves such that $v_1\in V(t_{v_1})$, $v_2\in V(t_{v_2})$ and $u\in V(t_u)$.
The least common ancestor of $t_{v_1}$ and $t_u$ is the same as the least common ancestor of $t_{v_2}$ and $t_u$, which is the least common ancestor of $t$ and $t_u$.
This implies that the edges $(v_1,u)$ and $(v_2,u)$ have the same weight, and the lemma follows easily from this observation.
\end{proof}

For each tree node $t\in V_T$, each $i=1,\ldots,k$, and each $\nu=0,1,\ldots,|V|$, define $H[t,i,\nu]$ be the smallest positive real number $\gamma$ satisfying the following:
\begin{itemize}
  \item given that the threshold of each vertex is updated to $\theta_v\leftarrow\theta_v-\gamma$, where we assume the vertex with $\theta_v-\gamma\leq0$ is infected immediately, we can choose $i$ seeds in $V(t)$ such that at least $\nu$ vertices in $V(t)$ will be infected (due to the influence of these $i$ seeds).
\end{itemize}
Intuitively, this means we can infect $\nu$ vertices by $i$ seeds, given that the influence from infected vertices outside $V(t)$ is $H[t,i,\nu]$.
Correspondingly, let $\Sigma[t,i,\nu]\subseteq V(t)$ store the seeding strategy that allocate $i$ seeds in $V(t)$ such that, given that the influence from certain set of infected vertices in $V\setminus V(t)$ to each vertex in $V(t)$ is $H[t,i,\nu]$, those $i$ seeds infect at least $\nu$ vertices in $V(t)$.

If $t$ is a leaf, the subgraph induced by $V(t)$ is a clique in which all the $|V(t)|(|V(t)|-1)$ edges have the equal weight.
Obviously, the optimal strategy is to place the $i$ seeds on those vertices with the highest thresholds.
We propose Algorithm~\ref{alg:init} to calculate $\Sigma[t,i,\nu]$ and $H[t,i,\nu]$ for each leaf $t$.
\begin{algorithm}[ht!]
\KwIn{vertex set $V(t)$, weight of each edge $w(t)$, threshold set $\{\theta_v\}_{v\in V(t)}$, integers $i,\nu$}%
\KwOut{$\Sigma[t,i,\nu]$ and $H[t,i,\nu]$ for leaf $t$} %
set $\Sigma[t,i,v]$ be the $i$ vertices in $V(t)$ having the highest thresholds (set $\Sigma[t,i,v]=V(t)$ if $i\geq |V(t)|$)\;
\For{each vertex $v\in V(t)$} {
    update $\theta_v\leftarrow \theta_v-i\cdot w(t)$
}
\eIf{$\nu\leq|\{\theta_v:\theta_v\leq0\}|+i$}{
    set $H[t,i,\nu]=0$
}{
    set $H[t,i,\nu]$ be the $(\nu-i)$-th smallest threshold in $\{\theta_v\}_{v\in V(t)}$
}
\Return{$\Sigma[t,i,\nu]$ and $H[t,i,\nu]$}
\caption{Initialization for a Leaf $t$}\label{alg:init}
\end{algorithm}

If $t$ is not a leaf, we aim to find a recurrence between $H[t,i,\nu]$ and $H[t_L,i_L,\nu_L],H[t_R,i_R,\nu_R]$.
Suppose the sign of $t$ is positive, and there are $\nu_L$ infected vertices in $V(t_L)$.
Their influence to $V(t_R)$ is $\nu_L\cdot w(t)$ where $w(t)$ is the weight of $t$ reflecting the weight of all edges from $V(t_L)$ to $V(t_R)$.
We have a similar observation in the case that the sign of $t$ is negative.

By considering all decompositions $i=i_L+i_R$ and $\nu=\nu_L+\nu_R$, if the sign of $t$ is positive, we have
\begin{equation}\label{eqn:oneway_plus}
H^+[t,i,\nu]=\min_{i_L=0,\ldots,i;\quad\nu_L=0,\ldots,\nu}\Big\{\max\big(H[t_L,i_L,\nu_L],H[t_R,i-i_L,\nu-\nu_L]-\nu_L\cdot w(t)\big)\Big\};
\end{equation}
if the sign of $t$ is negative, we have
\begin{equation}\label{eqn:oneway_minus}
H^-[t,i,\nu]=\min_{i_R=0,\ldots,i;\quad\nu_R=0,\ldots,\nu}\Big\{\max\big(H[t_L,i-i_R,\nu-\nu_R]-\nu_R\cdot w(t),H[t_R,i_R,\nu_R]\big)\Big\},
\end{equation}
where we set $H[t,i,\nu]=\infty$ if $\nu>|V(t)|$.
Finally, we decide the sign of $t$:
\begin{equation}\label{eqn:oneway_sign}
H[t,i,\nu]=\min\big(H^+[t,i,\nu],H^-[t,i,\nu]\big).
\end{equation}

The recurrence between $\Sigma[t,i,v]$ and $\Sigma[t_L,i_L,\nu_L],\Sigma[t_R,i_R,\nu_R]$ can be obtained in a natural way.
The sign of $t$, $\sign(t)\in\{+,-\}$, is defined naturally by (\ref{eqn:oneway_sign}).
If $\sign(t)=+$, we have $\Sigma[t,i,\nu]=\Sigma[t_L,i_L^\ast,\nu_L^\ast]\cup\Sigma[t_R,i-i_L^\ast,\nu-\nu_L^\ast]$, where $(i_L^\ast,\nu_L^\ast)$ is the minimizer for (\ref{eqn:oneway_plus}); if $\sign(t)=-$, we have $\Sigma[t,i,\nu]=\Sigma[t_L,i-i_R^\ast,\nu-\nu_R^\ast]\cup\Sigma[t_R,i_R^\ast,\nu_R^\ast]$, where $(i_R^\ast,\nu_R^\ast)$ is the minimizer for (\ref{eqn:oneway_minus}).

Define the \emph{height} of $t \in V_T$ be the length of the path to $t$'s deepest descendant.
The following Algorithm~\ref{alg:dp} solves \infmax for the hierarchical blockmodel with one-way influence.
\begin{algorithm}
\KwIn{hierarchical blockmodel $G=(V,T)$, threshold set $\{\theta_v\}_{v\in V}$, integer $k$} 
\KwOut{1) $S \subseteq V$ such that $|S| = k$ and $S$ maximizes $\sigma(S)$, and 2) the sign of each internal node $t$: $\sign(t)$} 
\For{each height $i = 0, 1, \ldots, h$} {
  \For{each node $t \in V_T$ with height $i$} {
    \eIf{$t$ is a leaf}{
      initialize $\Sigma[t,i,\nu]$ and $H[t,i,\nu]$ by Algorithm~\ref{alg:init} for all $i=0,1,\ldots,k$ and $\nu=0,1\ldots,N$
    }{
      \For{for each $i=0,1,\ldots,k$ and $\nu=0,1\ldots,N$}{
        $\displaystyle H^+[t,i,\nu]=\min_{i_L=0,\ldots,i;\nu_L=0,\ldots,\nu}\Big\{\max\big(H[t_L,i_L,\nu_L],H[t_R,i-i_L,\nu-\nu_L]-\nu_L\cdot w(t)\big)\Big\}$\;
        $\displaystyle H^-[t,i,\nu]=\min_{i_R=0,\ldots,i;\nu_R=0,\ldots,\nu}\Big\{\max\big(H[t_L,i-i_R,\nu-\nu_R]-\nu_R\cdot w(t),H[t_R,i_R,\nu_R]\big)\Big\}$\;
        $\displaystyle H[t,i,\nu]=\min\big(H^+[t,i,\nu],H^-[t,i,\nu]\big)$\;
        set $\displaystyle\sign(t)=\argmin_{s\in\{+,-\}}H^s[t,i,\nu]$\;
        \eIf{$\sign(t)=+$}{
            set $\Sigma[t,i,\nu]=\Sigma[t_L,i_L^\ast,\nu_L^\ast]\cup\Sigma[t_R,i-i_L^\ast,\nu-\nu_L^\ast]$, where $(i_L^\ast,\nu_L^\ast)$ minimizes $H^+[t,i,\nu]$
        }{
            set $\Sigma[t,i,\nu]=\Sigma[t_L,i-i_R^\ast,\nu-\nu_R^\ast]\cup\Sigma[t_R,i_R^\ast,\nu_R^\ast]$, where $(i_R^\ast,\nu_R^\ast)$ minimizes $H^-[t,i,\nu]$
        }
      }
    }
  }
}
set $\nu^\ast$ be the maximum $\nu$ such that $H[r,k,\nu]=0$, where $r$ is the root of $T$\;
\Return{$\Sigma[r,k,\nu^\ast]$ and $\sign(t)$ for each internal node $t$}
\caption{Dynamic Programming Algorithm for Hierarchical Blockmodel \infmax with One-Way Influence}\label{alg:dp}
\end{algorithm}
It is straightforward to check that Algorithm~\ref{alg:dp} runs in time $O\left(N^3k^2\right)$.

\begin{remark}\label{remark:dp}
Algorithm~\ref{alg:dp} can be easily adapted to the variant of the \infmax problem where each $\sign(t)$ is fixed as input (instead of being a part of the output).
Instead of computing both $H^+[t,i,\nu]$ and $H^-[t,i,\nu]$, and setting $H[t,i,\nu]=\min\big(H^+[t,i,\nu],H^-[t,i,\nu]\big)$, we only need to have $H[t,i,\nu]=H^{\sign(t)}[t,i,\nu]$. Corresponding, we have either $\Sigma[t,i,\nu]=\Sigma[t_L,i_L^\ast,\nu_L^\ast]\cup\Sigma[t_R,i-i_L^\ast,\nu-\nu_L^\ast]$ or  $\Sigma[t,i,\nu]=\Sigma[t_L,i-i_R^\ast,\nu-\nu_R^\ast]\cup\Sigma[t_R,i_R^\ast,\nu_R^\ast]$ depending on $\sign(t)$ which is now given by the input.
\end{remark}

\subsection{Further Discussions}
We have seen inapproximability results in Section~\ref{sect::resultHBGM} and Section~\ref{sect::resultSHBGM} for \infmax on the (stochastic) hierarchical blockmodel.
Our algorithm in this section reveals the intrinsic reason why these problems are difficult.

In the hard \infmax instances in Figure~\ref{fig:reductionHBGM} and Figure~\ref{fig:reductionSHBGM}, we constructed the hierarchy tree by creating $n$ branches corresponding to the $n$ vertices in \VC.
In the case the \VC instance is a \yes instance, the influence of the properly chosen seeds passes through these $n$ branches ``back-and-forth'' frequently:
the infected vertices in branch $A_i$ make vertices in branch $A_j$ infected, while these newly infected vertices in $A_j$ may have backward influence to $A_i$, and cause more infected vertices in $A_i$.
This bidirectional effect is not considered in Algorithm~\ref{alg:dp}, and is exactly why \infmax is hard.
On the other hand, when there is no such bidirectional effect, even if the algorithm needs to decide the optimal directions at all internal nodes (with exponentially many choices $2^{\Theta(|V_T|)}$), \infmax becomes easy on the hierarchial blockmodel, as our algorithm in this section suggests.

As mentioned in the related work section, Angell and Schoenebeck \cite{angell2016don} show that a generalization of this algorithm works well empirically.
This perhaps indicates that the bidirectional influence is, in the average case, not often so important in realistic settings.

\section{2-Quasi-Submodular Influence Maximization}
\label{sect:hardness_2submod}
We prove the following theorem in this section which says that, for any fixed 2-quasi-submodular $f$, there exists a constant $\tau$ depending on $f$ such that \infmax with the universal local influence model $I^G_f$ is NP-hard to approximate to within factor $N^\tau$, where $N$ is the number of vertices of the graph.


\begin{theorem}\label{hardnessproof}
  Consider the \infmax problem with the universal local influence model $I^G_f$ for any fixed 2-quasi-submodular $f$. There exists a constant $\tau$ depending on $f$ such that it is NP-hard to distinguish between the following two cases:
  \begin{itemize}
      \item \yes: there exists a seed set $S$ with $|S|=k$ such that $\sigma_f^G(S)=\Theta(N)$;
      \item \no: for any seed set $S$ with $|S|=k$, we have $\sigma_f^G(S)=O(N^{1-\tau})$.
  \end{itemize}
\end{theorem}

The sequence notation $(a_i)_{i=0,1,2\ldots}$ is used to represent $f$ in this section.
Because $f$ is 2-quasi-submodular, we have $a_0=0$ and $a_2>2a_1$.
We denote $p^\ast=\lim_{i\rightarrow\infty}a_i$, which exists because $(a_i)$ is increasing and bounded by $1$ (see Definition~\ref{defi:ulim} and Definition~\ref{defi:2quasisubmodular}).
We consider two cases: $a_1>0$ and $a_1=0$.
We note that we have $a_2>0$ by the 2-quasi-submodular assumption.
In the case $a_1>0$, we will first assume the graph is directed, and later we will show that this assumption is not essential.

The remaining part of this section is organized as follows: Sect.~\ref{sect:proofsketch} provides a sketch of the proof of Theorem~\ref{hardnessproof} for the case $a_1>0$, with arguments presented in an intuitive level,
Sect.~\ref{sect:proof1} to Sect.~\ref{sect:directededgegadgets} prove the theorem rigorously for the case $a_1>0$, and Sect.~\ref{sect:proof3} prove the theorem rigorously for the case $a_1=0$.
Finally, in a similar style to the result in \cite{li2017influence}, we prove a variant of Theorem~\ref{hardnessproof} in Appendix saying that the inapproximability also holds if only $N^\gamma$ (for some fixed $\gamma\in(0,1)$) vertices admit the fixed 2-quasi-submodular function $f$ while the remaining vertices admit certain fixed non-zero submodular function $g$.

\subsection{Proof Sketch of Theorem~\ref{hardnessproof} for $a_1>0$}
\label{sect:proofsketch}
We prove the theorem by a reduction from the \textsc{SetCover} problem.
\begin{definition}\label{defi:SetCover}
  Given a universe $U$ of $n$ elements, a set of $K$ subsets $A=\{A_i\mid A_i\subseteq U\}$, and a positive integer $k$, the \textsc{SetCover} problem asks if we can choose $k$ subsets $\{A_{i_1},\ldots,A_{i_k}\}\subseteq A$ such that $A_{i_1}\cup\cdots\cup A_{i_k}=U$.
\end{definition}

We construct a graph $G$ which consists of two parts: the set cover part and the verification part,
where the set cover part simulates \textsc{SetCover} and the verification part verifies if all the elements in the \textsc{SetCover} instance are covered.
The construction is shown in Fig.~\ref{fig:highlevel_short}.
We first assume that the graph $G$ is directed, and then we show that this assumption is not essential by constructing a \emph{directed edge gadget} to simulate directed edges.

\begin{figure}
\centering
  \includegraphics[width=0.7\textwidth]{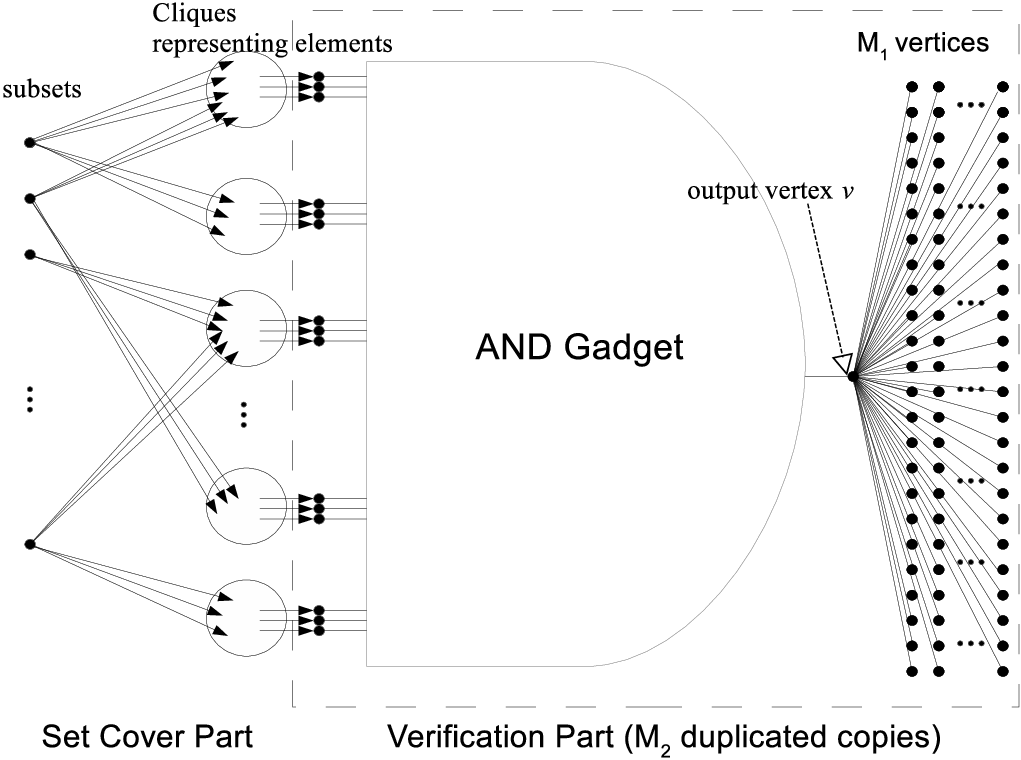}
  \caption{The high-level structure of the reduction for the proof of Theorem~\ref{hardnessproof}}
  \label{fig:highlevel_short}
\end{figure}

Given a \textsc{SetCover} instance, in the set cover part, we use a single vertex to represent a subset $A_i$ and a clique of size $m$ to represent each element in $U$.
If an element is in a subset, we create $m$ directed edges from the vertex representing the subset to each the $m$ vertices in the clique representing the element.
If a vertex representing a subset is picked, then all vertices in the cliques corresponding to the elements contained in this subset will be infected with probability close to $p^\ast$, by choosing $m$ large enough.
We call such cliques as being activated.
In a \yes instance of \textsc{SetCover}, we can choose $k$ seeds such that all cliques are activated.

In the verification part, we construct a \emph{AND gadget}, simulating the logical AND operation, to verify if all the cliques are activated.
The AND gadget takes $n$ inputs, each of which is a set of vertices from each of the $n$ cliques.
The output of the AND gadget is a vertex $v$, such that it will only be infected with a positive constant probability if all the $n$ cliques are activated.

We connect the output vertex $v$ of this AND gadget to a huge bundle of $M_1$ vertices, such that a constant fraction of those $M_1$ vertices will be infected only if all the cliques are activated (which corresponds to the case the \textsc{SetCover} is a \yes instance).
By making $M_1$ large enough, we can achieve a hardness of approximation ratio $N^\tau$.
To avoid the seed-picker bypassing the set cover game by directed seeding the output vertex $v$, we duplicate the verification part by $M_2$ times for some sufficiently large $M_2$.

Finally, we replace all directed edges in Fig.~\ref{fig:highlevel_short} by directed edge gadgets, including those connecting the vertices representing subsets and the cliques representing elements, and those connecting the set cover part and the verification part.
To complete the proof of Theorem~\ref{hardnessproof}, we present the construction of the AND gadget and the directed edge gadget in the next few subsections.

\subsubsection{The Probability Filter Gadget}
In this section, we present the construction of a gadget called \emph{probability filter gadget}, which is the key component in the constructions of both AND gadget and directed edge gadgets mentioned above.

Given a set of vertices that will be infected with a same probability $x$, the probability filter gadget tests if $x$ is larger than certain threshold $p_1$.
It outputs a vertex infected with probability almost $0$ if $x<p_1$, and with certain non-negligible probability $p_2$ if $x>p_1$.

\paragraph{The probability scaling down gadget}
Firstly, we need to construct the \emph{probability scaling down gadget} which takes a vertex $u$ with infection probability $p_u$ as input, and output a vertex $v$ such that $v$ is infected with probability $p_v=\alpha p_u$, where $\alpha\leq p^\ast$ is an adjustable parameter.
The construction of this gadget is shown in Fig.~\ref{fig:scalingdown_short}: we add many paths of different lengths from $u$ to $v$, and we can achieve $p_v=\alpha p_u$ by adjusting the number of paths and the length of each path.

\begin{figure}
\centering
\begin{minipage}{\textwidth}
  \centering
  \includegraphics[width=0.5\textwidth]{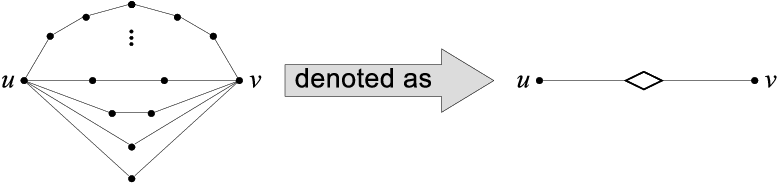}
\captionof{figure}{The probability scaling down gadget}
\label{fig:scalingdown_short}
\end{minipage}
\begin{minipage}{.5\textwidth}
  \centering
  \includegraphics[width=.8\linewidth]{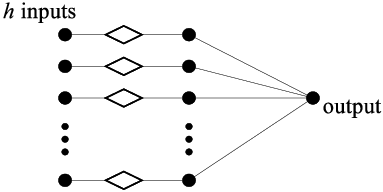}
  \captionof{figure}{The probability separation block}
  \label{fig:filter_short}
\end{minipage}%
\begin{minipage}{.5\textwidth}
  \centering
  \includegraphics[width=.45\linewidth]{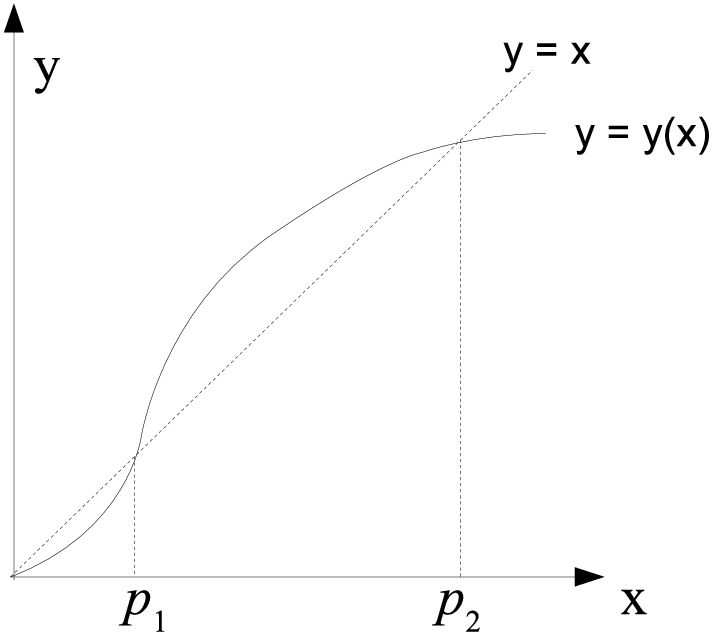}
  \captionof{figure}{The output probability $y$ versus the input probability $x$}
  \label{fig:filtercurve_short}
\end{minipage}
\end{figure}

\paragraph{The probability separation block}
Next, we construct a \emph{probability separation block}, which is the building block to the probability filter gadget.
The probability separation block takes $h$ vertices as input and outputs one vertex such that
\begin{enumerate}
  \item if each input is infected independently with a same probability that is greater than certain threshold $p_1$, then the output vertex will be infected with a slightly higher probability;
  \item if each input is infected independently with a same probability that is less than $p_1$, then the output vertex will be infected with a slightly lower probability.
\end{enumerate}

The construction of the probability separation block is shown in Fig.~\ref{fig:filter_short}, in which the $h$ inputs' infection probabilities are scaled down by a certain factor $\alpha$ by the probability scaling down gadgets, and then they are connected to the output vertex.
It is exactly the 2-quasi-submodularity of $f$ which enables us to adjust the two parameters $h$ and $\alpha$ such that (1) and (2) above hold.

Suppose each of the $h$ vertices in the input are infected with probability $x$,
and let $y=y(x)$ be the probability that the output vertex is infected.
We claim that we can tune the values of $\alpha$ and $h$ such that the graph of $y(x)$ looks like Fig.~\ref{fig:filtercurve_short}.

By considering the number of infected neighbors of the output vertex, we have
$$
y=\sum_{i=1}^h\binom{h}{i}a_i(\alpha x)^i(1-\alpha x)^{h-i},
$$
which is
$y=h\alpha a_1x+\frac{h(h-1)}2\alpha^2(a_2-2a_1)x^2+o(x^2)$
for sufficiently small $x$.
Choosing a sufficiently small constant $\delta>0$ and choosing $\alpha,h$ to satisfy $h\alpha a_1=1-\delta$, we have
$$y-x=-\delta x+\frac{h(h-1)}2\alpha^2(a_2-2a_1)x^2+o(x^2).$$
Since $y-x=-\delta x+o(x)$, we can see that $y<x$ for small enough $x$.

On the other hand, for sufficiently large $h$ and sufficiently small $\delta$ (and adjusting $\alpha$ such that $h\alpha a_1=1-\delta$ still holds), the second order derivative of $y-x$, which is $$\frac{d^2(y-x)}{dx^2}=h(h-1)\alpha^2(a_2-2a_1)+o(1)\approx\frac1{a_1^2}(a_2-2a_1)>0,$$ can be considerably more significant than its first order derivative $-\delta$.
Therefore, $y-x$, starting from $0$ at $x=0$ and being negative for very small $x$, will soon become positive after $x$ increases.
This proves our claim.
Notice that the 2-quasi-submodularity of $f$ makes sure $a_2-2a_1>0$.


\paragraph{The probability filter gadget}
The probability filter gadget consists of $\ell$ layers such that the $i$-th layer consists of $h^{\ell-i}$ probability separation blocks,
where the output vertices of every $h$ probability separation blocks in the $i$-th layer are the inputs of a single probability separation block in the $(i+1)$-th layer.
Because there are $h^{\ell-1}$ probability separation blocks in the first layer,
the probability filter gadget takes $\Lambda=h^\ell$ vertices as input.
The probability filter gadget outputs a single vertex after $\ell$ layers.

From Fig.~\ref{fig:filtercurve_short}, if we make $\ell$ large enough, we conclude that the probability filter gadget does the following job, which tests if the input vertices are infected with a probability larger than the threshold value $p_1$.
\begin{enumerate}
  \item if each vertex in the $\Lambda$ inputs is infected independently with a same probability less than $p_1$, then the vertex on the output end will be infected with a probability close to $0$;
  \item if each vertex in the $\Lambda$ inputs is infected independently with a same probability in $(p_1,p_2]$, then the vertex on the output end will be infected with a probability close to $p_2$.
\end{enumerate}


\subsubsection{The AND Gadget and the Directed Edge Gadget}
Both the AND gadget and the directed edge gadget can be constructed by using a single probability filter gadget as the core.

\begin{figure}
\centering
\begin{minipage}{.42\textwidth}
  \centering
  \includegraphics[width=\linewidth]{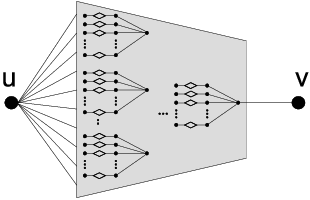}
  \captionof{figure}{The directed edge gadget $\langle u,v\rangle$}
  \label{fig:directededgegadget_short}
\end{minipage}
\begin{minipage}{.45\textwidth}
  \centering
  \includegraphics[width=\linewidth]{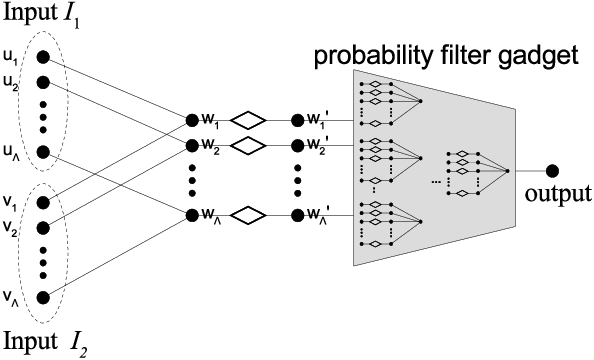}
  \captionof{figure}{The AND gadget with two inputs}
  \label{fig:AND_2inputs}
\end{minipage}
\end{figure}

\paragraph{The directed edge gadget}
The construction of the directed edge gadget $\langle u,v\rangle$ is shown in Fig.~\ref{fig:directededgegadget_short}.
It uses a single probability filter gadget, whose input vertices are connected to $u$, and whose output vertex is connected to $v$.
By adjusting $h$ and $\alpha$ making $p_1$ small enough\footnote{if we further check the calculations in the subsection where we construct the probability separation block, we can see that $p_1$ can be made arbitrarily small, by choosing small enough $\delta=1-h\alpha a_1$. Detailed calculations and justifications are in the later sections.}, we can make the output $v$ infected with noticeable probability (almost $p_2$) if $u$ is infected.
On the other hand, if $v$ is infected, then the expected number of infected vertices among those $h$ vertices on the input end of the top layer probability separation block is $h\alpha a_1=1-\delta<1$, which suggests that the cascade process will die out after a few layers from right to left.
In particular, the influence of $v$ cannot be passed to $u$.

\paragraph{The AND gadget}
The AND gadget in Fig.~\ref{fig:highlevel_short} takes $n$ sets of vertices as input.
It tests if all cliques are activated, that is, if each vertex in each input set is infected with probability almost $p^\ast$.

Here, we first construct a smaller AND gadget which only takes two input sets.
Let $I_1=\{u_1,u_2,\ldots,u_{\Lambda}\}$ and $I_2=\{v_1,v_2,\ldots,v_{\Lambda}\}$ be the two input sets.
The AND gadget should do the following:
\begin{enumerate}
  \item if each vertex in $I_1$ and $I_2$ is infected with probability $p^\ast$, the AND gadget outputs a vertex which is infected with a notable probability;
  \item if all vertices in at least one of $I_1,I_2$ are infected with probability $0$, the AND gadget outputs a vertex which is infected with a negligible probability.
\end{enumerate}

We create $\Lambda$ vertices $w_1,w_2,\ldots,w_{\Lambda}$ and create two edges $(u_i,w_i),(v_i,w_i)$ for each $i=1,2,\ldots,\Lambda$.
In case (1), each $w_i$ will be infected with probability $q_1=a_2(p^\ast)^2+2a_1p^\ast(1-p^\ast)$; in case (2), each $w_i$ will be infected with probability at most $q_2=a_1p^\ast$.
Obviously, $q_1>q_2$, and the AND gadget needs to ``amplify'' the gap between $q_1$ and $q_2$.

This naturally reminds us the probability filter gadget.
In particular, if the threshold $p_1$ of the probability filter gadget is in between: $q_1>p_1>q_2$, we can just make $\{w_1,\ldots,w_\Lambda\}$ the inputs of the probability filter gadget, and we are done.
However, by our discussion about probability separation block in the last subsection, $p_1$ is only guaranteed to exist, which may not be in $(q_2,q_1)$.
To settle this, we use probability scaling down gadgets to rescale the infection probability of $w_i$ such that $p_1$ will be in between after rescaling $q_1,q_2$.\footnote{It seems worrying that $q_1$ and $q_2$ may be both less than $p_1$, in which case the construction fails as we can only scale probabilities ``down''. However, as we have remarked, we can make $p_1$ arbitrarily small such that $p_1\ll q_2<q_1$}
Fig.~\ref{fig:AND_2inputs} shows the construction of this AND gadget.

To construct the AND gadget allowing $n$ input sets, we can use this AND gadget as a building block and construct an \emph{AND circuit} with $\log_2n$ levels of AND gadgets.
The last level contains a single AND gadget, whose output is connected to the $M_1$ vertices on the right-hand side of Fig.~\ref{fig:highlevel_short}.
For each AND gadget in Level $i$, its output become one input of a certain AND gadget in Level $i+1$.
The inputs of the AND gadgets in Level $1$ are exactly those associated to the $n$ cliques representing elements of the \VC instance.

We conclude the proof sketch here. In the remaining sections, we present the full proof of Theorem~\ref{hardnessproof} which realizes the intuitions and ideas in this section.

\subsection{Proof of Theorem~\ref{hardnessproof} for $a_1>0$ with Directed Graphs}
\label{sect:proof1}
We first define the following AND gadget which simulates the logical AND operation.
The construction of this AND gadget is deferred to Section~\ref{sect:othergadgets}---\ref{sect:ANDgadgets}.
We note that the nonsubmodularity property $a_2>2a_1$ plays an important role in the construction of the AND gadget.
In particular, the construction of the AND gadget uses a smaller gadget called the ``probability filter gadget'' as a building block (see Figure~\ref{fig:gadgetrelation}), and 2-quasi-submodularity is essential for constructing the probability filter gadget (refer to Section~\ref{subsect:pfg} for details).

\begin{definition}\label{defi:AND}
  An \emph{$(I, \Lambda, p_0, p_2, \varepsilon_1,\varepsilon_2, f)$-AND gadget} takes $I$ sets which each contains $\Lambda$ vertices as input, and outputs one vertex such that
  \begin{enumerate}
    \item if all the vertices in all $I$ sets are infected independently with probabilities less than $\frac{11}{10}p_0$, and moreover the infection probabilities of the vertices in at least one input set are less than $\frac12p_0$, then the output vertex will be infected with probability less than $\varepsilon_1$;
    \item if all the vertices in all $I$ sets are infected independently with probabilities in the interval $(p_0,\frac{11}{10}p_0)$, the output vertex will be infected with probability in $(p_2-\varepsilon_2,p_2]$,
  \end{enumerate}
\end{definition}

We remark that the choices for both factors of $p_0$ in 1 of the above definition, $\frac{11}{10}$ and $\frac12$, are only required to be close enough to $1$ and $0$ respectively.
We aim to simulate the case where at least one of the inputs is not ``active'' (being far from the threshold $p_0$) and the other ones are not ``too active'' (being at most somewhere around the threshold $p_0$), in which case the AND gadget outputs ``false'' (such that the output vertex is infected with negligible probability $\varepsilon_1$).

With the choice of the seven parameters satisfying the relation in the below lemma, we can construct the AND gadget.
\begin{lemma}\label{ANDlemma}
  Given any 2-quasi-submodular function $f$ with $a_1>0$, any constant threshold $p_0>0$ and any $I=2^\ell$ that is an integer power of $2$, there exists a constant $p_2>0$ depending on $p_0$ and $f$ such that for any $\varepsilon_1>0$ and any constant $\varepsilon_2>0$, we can construct an $(I, \Lambda, p_0, p_2, \varepsilon_1,\varepsilon_2, f)$-AND gadget with $\Lambda=O\left((1/\varepsilon_1)^{c_1}I^{c_2}\right)$,
  and the numbers of vertices and edges in this AND gadget are both $O\left((1/\varepsilon_1)^{c_1}I^{c_2+1}\right)$,
  where $c_1$ and $c_2$ are two constants.
\end{lemma}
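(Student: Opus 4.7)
The plan is to construct the $(I, \Lambda, p_0, p_2, \varepsilon_1, \varepsilon_2, f)$-AND gadget recursively, as a balanced binary tree of depth $\ell = \log_2 I$ whose internal nodes are ``AND of two'' gadgets built from a smaller primitive, the \emph{probability filter gadget} (as indicated by Figure~\ref{fig:gadgetrelation}). The filter is the only step that makes essential use of 2-quasi-submodularity; once it is in place, the rest is a combinatorial combination.

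The filter exploits the inequality $a_2 > 2 a_1$ as follows. For a vertex $v$ with a small number $d$ of neighbors, each independently infected with probability $p$, one can expand the infection probability of $v$ as
\[
  \Pr[v \text{ infected}] \;=\; \sum_{k=0}^d \binom{d}{k} p^k (1-p)^{d-k} a_k \;=\; d\, a_1\, p \;+\; \binom{d}{2}\,(a_2 - 2 a_1)\, p^2 \;+\; O(p^3).
\]
The coefficient $\binom{d}{2}(a_2 - 2 a_1)$ is strictly positive and is the extra ``handle'' provided by 2-quasi-submodularity: it means that pairwise correlation of infected neighbors boosts the infection probability \emph{more} than a purely linear combination would. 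By chaining several such small sub-gadgets, one can iteratively sharpen the contrast between the inactive regime ($p \leq p_0/2$) and the active regime ($p \in (p_0, 11 p_0/10)$) until the inactive output drops below $\varepsilon_1$ while the active output stays near some constant $p_2$ depending only on $p_0$ and $f$.

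Given the filter, I would build the base ``AND of two'' gadget by routing each of the two input groups through its own filter and then combining the two filtered outputs at a common vertex whose infection requires both sides to be active. Iterating this gadget along a balanced binary tree yields the AND of $I$ groups in $\log_2 I$ levels. The error analysis has two ingredients: (i) Chernoff concentration for the random number of infected vertices in each $\Lambda$-sized input set (this requires $\Lambda$ polynomial in $1/\varepsilon_1$); and (ii) propagation of multiplicative error from level to level across the $\log_2 I$ levels of the tree (this contributes a factor polynomial in $I$). Balancing these gives the claimed $\Lambda = O\bigl((1/\varepsilon_1)^{c_1} I^{c_2}\bigr)$, with constants $c_1, c_2$ depending only on $a_1$, $a_2 - 2 a_1$, and $p^* = \lim_i a_i$; the gadget's vertex and edge counts are then roughly $I \cdot \Lambda$ plus internal wiring, matching $O((1/\varepsilon_1)^{c_1} I^{c_2 + 1})$.

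The main obstacle is the filter construction itself: one has to verify that a single filter level \emph{strictly and quantifiably} sharpens the gap between the two input regimes, and that iterating it drives the inactive-regime probability to any requested $\varepsilon_1$ without inflating the gadget size beyond the stated polynomial. A further subtlety is that the bound must hold uniformly across all 2-quasi-submodular $f$ with $a_1 > 0$, so the quantitative estimates must be controlled purely by $a_1$, $a_2 - 2 a_1 > 0$, and $p^*$, rather than by any sharp threshold already present in $f$.
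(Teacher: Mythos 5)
Your plan matches the paper's architecture — a probability filter gadget exploiting $a_2 > 2a_1$, assembled into a balanced binary tree of two-input AND gadgets — and your Taylor expansion $\Pr[v\ \text{infected}] = d a_1 p + \binom{d}{2}(a_2-2a_1)p^2 + O(p^3)$ is exactly the mechanism that powers the filter. The gap is in the two-input AND gadget, where you reverse the crucial order of operations. You propose to filter each of the two input groups separately and then combine the two single filter outputs at a common vertex $w$. In the inactive case (one group with probability $\leq p_0/2$, the other near $p_0$) the filters produce outputs near $0$ and near $p_2$ respectively; then $w$ has one neighbor infected with probability $\approx p_2$, so $\Pr[w\ \text{infected}] \approx a_1 p_2$, a positive \emph{constant} since $a_1 > 0$, not $< \varepsilon_1$ as required by Definition~\ref{defi:AND}. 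A single output wire cannot be pushed back down to $\varepsilon_1$: any further filtering requires a set of $\Lambda$ vertices, not one. The paper therefore performs the combination first: it pairs up $(u_i, v_i)$ at vertices $w_i$, scales each $w_i$ down by a factor $\beta$ chosen so that the two regimes land on opposite sides of the filter's unstable fixed point $p_1$, and only then feeds the resulting $\Lambda$ vertices $w'_1, \ldots, w'_\Lambda$ into a single filter. This order is essential: combining first yields a full-width $\Lambda$-vertex signal carrying the AND, and the filter can then pinch its common probability to $\lesssim \varepsilon_1$ or to $\gtrsim p_2 - \varepsilon_2$.

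A secondary imprecision: the bound $\Lambda = O\bigl((1/\varepsilon_1)^{c_1}\bigr)$ is not obtained via Chernoff concentration on the $\Lambda$-sized input set. A flat fan-in of $\Lambda$ wires into one vertex would saturate its infection probability near $p^\ast = \lim_i a_i$ in both regimes and fail to discriminate. Instead, the paper's filter stacks $\ell = O(\log(1/\varepsilon_1))$ layers of constant-fan-in blocks, each preceded by a probability-scaling gadget tuned so that $h\alpha a_1 = 1-\delta < 1$; each layer contracts the inactive probability by a factor $(1-\delta)$ while driving the active probability toward the stable fixed point $p_2$, and $\Lambda = h^\ell$ is what gives the stated polynomial in $1/\varepsilon_1$. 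Likewise, the $I^{c_2}$ factor comes from the multiplicative blowup $(2\Lambda_0)^{\log_2 I}$ in the number of constant-size two-input gadgets across the tree's levels, not from multiplicative error propagation — the error tolerance is held at the constant $\tfrac13(p_2-\varepsilon_2)$ at every intermediate level and only tightened to $\varepsilon_1$ at the root.
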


The following lemma is needed in the next section for the proof of Theorem~\ref{hardnessproof} for undirected graphs.
\begin{lemma}\label{ANDlemma_p2}
  Given any 2-quasi-submodular function $f$ with $a_1>0$ and any $I=2^\ell$ that is an integer power of $2$, there exists $p_2>0$ such that for any $\varepsilon_1>0$ and any constant $\varepsilon_2>0$, we can construct an $(I, \Lambda, p^\ast(p_2-\varepsilon_2), p_2, \varepsilon_1,\varepsilon_2, f)$-AND gadget.
  We have $\Lambda=O\left((1/\varepsilon_1)^{c_1}I^{c_2}\right)$ and the AND gadget contains $O\left((1/\varepsilon_1)^{c_1}I^{c_2+1}\right)$ vertices and $O\left((1/\varepsilon_1)^{c_1}I^{c_2+1}\right)$ edges,
  where $c_1$ and $c_2$ are two constants.
\end{lemma}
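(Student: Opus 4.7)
The plan is to obtain Lemma~\ref{ANDlemma_p2} from Lemma~\ref{ANDlemma} by fixing the output level $p_2$ through a self-consistency (fixed-point) condition, and then instantiating Lemma~\ref{ANDlemma} with input threshold $p_0 := p^\ast(p_2 - \varepsilon_2)$. The key point is that in Lemma~\ref{ANDlemma} the output $p_2$ is determined once $p_0$ is chosen, so to make the input threshold appear in the special form $p^\ast(p_2-\varepsilon_2)$ we have to solve for $p_2$ implicitly.

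First I would denote by $\phi(p_0)$ the output probability produced by the construction of Lemma~\ref{ANDlemma} given input threshold $p_0$, with $f$ and $I$ held fixed. Consider the equation $p_2 = \phi(p^\ast p_2)$; a root $p_2^\star$ of this equation, which is independent of both $\varepsilon_1$ and $\varepsilon_2$, picks out the equilibrium output level at which the ``active'' input threshold demanded by the next stage ($p^\ast p_2^\star$) exactly matches what Lemma~\ref{ANDlemma} produces. I would establish existence of $p_2^\star \in (0, p^\ast)$ via the intermediate value theorem: $\phi$ is continuous in $p_0$ on a suitable subinterval (since the construction of Lemma~\ref{ANDlemma} is built from finitely many layers whose behavior, in particular the probability filter subgadget that exploits $a_2 > 2a_1$, depends continuously on the threshold), $\phi(p_0) < p^\ast$ always, and $\phi$ is bounded away from $0$ on the interval, so that $p_2 - \phi(p^\ast p_2)$ changes sign.

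With $p_2 := p_2^\star$ fixed once and for all, for any $\varepsilon_1, \varepsilon_2 > 0$ I would apply Lemma~\ref{ANDlemma} at input threshold $p_0 := p^\ast(p_2^\star - \varepsilon_2)$, with error parameter $\varepsilon_1$ unchanged and output slack $\varepsilon_2' \ll \varepsilon_2$. Condition (1) of Definition~\ref{defi:AND} at the declared threshold $p^\ast(p_2^\star - \varepsilon_2)$ then transfers verbatim from Lemma~\ref{ANDlemma}. For condition (2), the output probability guaranteed by Lemma~\ref{ANDlemma} lies in $(\phi(p_0) - \varepsilon_2', \phi(p_0)]$, and I would verify that this interval is contained in the target window $(p_2^\star - \varepsilon_2, p_2^\star]$ by invoking monotonicity of $\phi$ (giving $\phi(p_0) \le \phi(p^\ast p_2^\star) = p_2^\star$) together with a quantitative stability bound at the equilibrium (giving $\phi(p_0) \ge p_2^\star - \varepsilon_2 + \varepsilon_2'$). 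The parameter bounds on $\Lambda$ and the gadget size then transfer unchanged from Lemma~\ref{ANDlemma}.

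The main obstacle will be establishing the quantitative stability, namely $\phi(p^\ast p_2^\star) - \phi(p^\ast(p_2^\star - \varepsilon_2)) \le \varepsilon_2 - \varepsilon_2'$. This amounts to showing that $\phi$ has local Lipschitz constant at most $1/p^\ast$ near the equilibrium, i.e., that the constructed gadget responds subproportionally to small perturbations of its input threshold, so that the fixed point $p_2^\star$ is \emph{stable} in the discrete-dynamical-system sense. This requires tracing through the layered construction behind Lemma~\ref{ANDlemma} and checking that each layer (in particular the probability filter, where the $a_2 > 2a_1$ nonsubmodularity gives the amplification margin) contracts rather than amplifies perturbations in a neighborhood of $p^\ast p_2^\star$. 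Once this contraction is verified, together with the continuity already used for the fixed point, Lemma~\ref{ANDlemma_p2} follows immediately.
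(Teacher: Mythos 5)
Your proposal takes a genuinely different route from the paper, and in its current form has a real gap.

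The paper does not need a fixed-point argument at all. Its Lemma~\ref{ANDlemmap2} observes that in the construction of a $(2,\Lambda,p_0,p_2,\varepsilon_1,\varepsilon_2,f)$-AND gadget, the output level $p_2$ is determined solely by the inner probability filter gadget (through $h$, $\alpha$, and $\delta$), while the input threshold $p_0$ only affects the scaling-down factor $\beta$ placed in front of the filter. As long as the gap $p_2/p_1$ is large enough (Proposition~\ref{lemmaratio}), one can reuse the \emph{same} filter gadget --- hence the \emph{same} $p_2$ --- for the three different thresholds $p_0$, $p_2-\varepsilon_2$, and $p^\ast(p_2-\varepsilon_2)$, adjusting only $\beta$. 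Lemma~\ref{ANDlemma_p2} then follows by replacing the Type~$A$ two-input AND gadgets in Level~1 of the $I$-input tower with the Type~$A'$ gadget (the one tuned to threshold $p^\ast(p_2-\varepsilon_2)$), which changes the gadget size only by a constant factor. The ``self-consistency'' your fixed-point equation is chasing is therefore trivial: $p_2$ never moves when $p_0$ does.

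Your argument, by contrast, treats the output level $\phi(p_0)$ of Lemma~\ref{ANDlemma} as a black-box function of $p_0$ and tries to locate a fixed point $p_2^\star=\phi(p^\ast p_2^\star)$ via the intermediate value theorem, then controls the discrepancy $\phi(p^\ast p_2^\star)-\phi(p^\ast(p_2^\star-\varepsilon_2))$ by a Lipschitz bound. There are two genuine problems here. First, $\phi$ is not cleanly single-valued: Lemma~\ref{ANDlemma} produces a $p_2$ from a construction that involves discrete choices ($h$, $\delta$, number of layers, lengths of paths in the scaling gadgets), so different choices for the same $p_0$ give different $p_2$; the continuity you invoke for the IVT is not established and does not hold for an arbitrary selection rule. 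Second, you explicitly flag the Lipschitz bound $\phi(p^\ast p_2^\star)-\phi(p^\ast(p_2^\star-\varepsilon_2))\le\varepsilon_2-\varepsilon_2'$ as the main obstacle and do not prove it; in the end, the only clean way I see to get this bound is to make $\phi$ locally constant by freezing the filter --- at which point you have rediscovered the paper's argument and the fixed-point machinery is superfluous. So the proposal, as written, rests on an unverified stability claim for a quantity that is not even well-defined, whereas a short structural observation about which parameters $p_2$ actually depends on makes the lemma immediate.
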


Notice that Lemma~\ref{ANDlemma} does not imply Lemma~\ref{ANDlemma_p2}:
in Lemma~\ref{ANDlemma}, we first fix the third parameter $p_0$, and the existence of the fourth parameter $p_2$ relies on the third; in Lemma~\ref{ANDlemma_p2}, we simultaneously fix the third and the fourth parameters.

The construction of the AND gadget and the proof of Lemma~\ref{ANDlemma} and Lemma~\ref{ANDlemma_p2} are deferred to Section~\ref{sect:ANDgadgets}.
In this section, we aim to prove Theorem~\ref{hardnessproof} for $a_1>0$ with directed graphs and assuming Lemma~\ref{ANDlemma}, while we do not need Lemma~\ref{ANDlemma_p2} at this moment.
We remark that the construction of AND gadget requires no directed edges, although we consider directed graph in this section.

\subsubsection{A Reduction from {\sc SetCover}}
We prove the theorem by a reduction from \textsc{SetCover}.


Without loss of generality, we will assume $K =O(n)$.\footnote{One way to justify this assumption is to consider \textsc{VertexCover}, which can be viewed as a special case of \textsc{SetCover} by viewing vertices as subsets and edges as elements. In a connected graph, the number of vertices $K $ never exceeds $O(n)$, if $n$ is the number of edges.}
We will also assume that each element in $U$ is covered by at least one subset $A_i$ in \textsc{SetCover} (otherwise we know for sure the instance is a {\sf NO} instance).
In addition, we assume the number of elements $n=|U|$ is an integer power of $2$, as we can add elements into $U$ and let these elements be included in all sets $A_i$ in the case $n$ is not an integer power of $2$.

We construct a graph $G$ with $N$ vertices which consists of two parts: the set cover part and the verification part,
where the set cover part simulates the \textsc{SetCover} instance and the verification part verifies if all the elements in the \textsc{SetCover} instance are covered.
The construction is shown in Figure~\ref{fig:highlevel}.

\begin{figure}
  \centerline{\includegraphics[width=\textwidth]{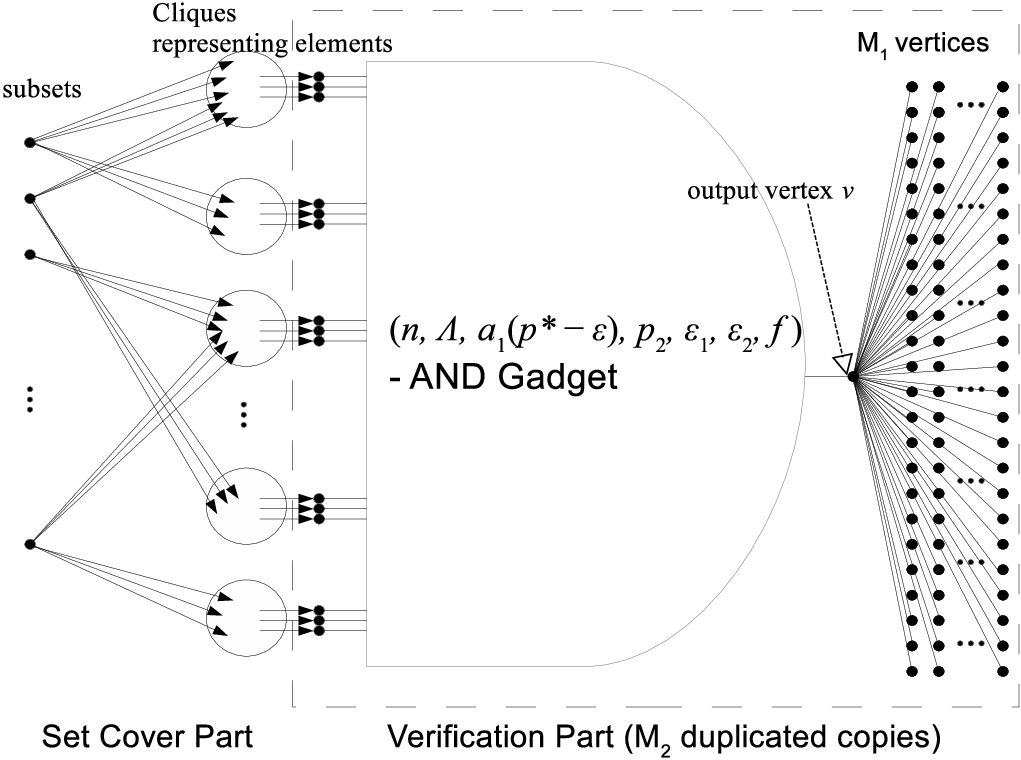}}
  \caption{The high-level structure of the reduction}
  \label{fig:highlevel}
\end{figure}

Define $\varepsilon=2\left(p^\ast-a_{\lfloor a_1n\rfloor}\right)$ which approaches to $0$ as $n\rightarrow\infty$ if $a_1>0$.
According to Lemma~\ref{ANDlemma}, for $p_0=a_1(p^\ast-\varepsilon)$ and $I=n$, there exists a constant $p_2>0$, such that if we set $\varepsilon_1=\frac1n$ and $\varepsilon_2=\frac1{100}p_2$, we can construct an $(n,\Lambda,p^\ast-\varepsilon,p_2,\varepsilon_1,\varepsilon_2,f)$-AND gadget, where $\Lambda=O\left((1/\varepsilon_1)^{c_1}n^{c_2}\right)=O\left(n^{c_1+c_2}\right)$.
We will use this AND gadget later.
Define $M_1=n^{c_1+c_2+10}$, $M_2=n^2$, and $m=M_2\Lambda$.

\paragraph{The Set Cover Part}
Given a \textsc{SetCover} instance, we use a single vertex to represent a subset $A_i$ and a clique of size $m$ to represent each element in $U$.
If an element is in a subset, we create $m$ directed edges from the vertex representing the subset to each the $m$ vertices in the clique representing the element.

\paragraph{The Verification Part}
We construct the $\left(n,\Lambda,a_1(p^\ast-\varepsilon),p_2,\varepsilon_1,\varepsilon_2,f\right)$-AND gadget mentioned.
We associate each of the $n$ cliques to one of the $n$ inputs of this AND gadget, such that a matching is formed between the $n$ cliques and the $n$ inputs.
For each of the $n$ cliques and its associated input, we choose $\Lambda$ vertices from the clique, and connect them to the $\Lambda$ vertices of the associated input by $\Lambda$ directed edges.
We create $M_1$ vertices and let the output vertex $v$ of the AND gadget be connected to these $M_1$ vertices with undirected edges.
Then, we duplicate the AND gadget and the attached $M_1$ vertices to a total of $M_2$ copies such that the vertices at the input ends of the AND gadgets in all these $M_2$ copies are connected from the different vertices in the $n$ cliques as inputs.
This, in particular, justifies our choice of clique size $m=M_2\Lambda$.

\paragraph{The Size of the Construction}
To show that the reduction is in polynomial time, it is enough to show that the number of vertices $N$ in the graph $G$ we constructed is a polynomial of $n$.
According to Lemma~\ref{ANDlemma}, the AND gadget has $O\left((1/\varepsilon_1)^{c_1}n^{c_2+1}\right)=O\left(n^{c_1+c_2+1}\right)$ vertices.
We have
$$N=K +mn+M_2\left(O\left(n^{c_1+c_2+1}\right)+M_1\right)=K +mn+\Theta\left(n^{c_1+c_2+12}\right)=\Theta\left(n^{c_1+c_2+12}\right),$$
where $K +mn$ is the size for the set cover part and $M_2\left(O\left(n^{c_1+c_2+1}\right)+M_1\right)$ is the size for the verification part.

Finally, noticing that $N=\Theta\left(n^{c_1+c_2+12}\right)$ and letting $\tau=\frac1{c_1+c_2+12}$ (which depends on $c_1,c_2$, and $c_1,c_2$ depends only on $f$), the lemma below immediately concludes Theorem~\ref{hardnessproof} for the case $a_1>0$ with directed edges.
\begin{lemma}\label{YESNOcompare}
  If the \textsc{SetCover} instance is a {\sf YES} instance, by choosing $k$ seeds appropriately, we can infect at least $\Theta\left(n^{c_1+c_2+12}\right)$ vertices in expectation in the graph $G$ we have constructed; if it is a {\sf NO} instance, we can infect at most $O\left(n^{c_1+c_2+11}\right)$ vertices in expectation for any choice of $k$ seeds.
\end{lemma}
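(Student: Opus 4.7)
The plan is to trace the cascade through the four layers of the construction (subset-vertices, element-cliques, AND gadgets, and the attached $M_1$-vertex stars) in each direction of the reduction.

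For the YES direction, I would seed the $k$ subset-vertices corresponding to a covering family. Every element clique then has at least one incoming edge from a seed, so a two-round analysis inside each clique of $m\gg n$ vertices gives (i) a first round in which each of the $m$ clique-vertices becomes infected independently with marginal probability $a_1$ from the subset-seed (yielding by a Chernoff bound at least $(1-o(1))a_1 m$ first-round infections with probability $1-\exp(-\Omega(m))$), and (ii) a second round in which every surviving vertex sees at least $\lfloor a_1 n\rfloor$ infected neighbors, so its total infection probability lies in $(p^*-\varepsilon,\,p^*]$. Each of the $\Lambda$ interface vertices in the clique therefore infects its matched AND-input vertex with marginal probability in $(a_1(p^*-\varepsilon),\,a_1 p^*]\subseteq (p_0,\tfrac{11}{10}p_0)$ (using $\varepsilon<p^*/11$ for $n$ large). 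Condition~2 of Definition~\ref{defi:AND} then gives every AND-gadget output vertex infection probability at least $p_2-\varepsilon_2$, and propagation through each star infects at least $a_1(p_2-\varepsilon_2)M_1$ of its attached vertices in expectation. Summing over the $M_2$ copies produces $\Theta(M_1 M_2)=\Theta(n^{c_1+c_2+12})$ infected vertices in expectation.

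For the NO direction, I take an arbitrary seed set $S$ of size $k$ and split it into seeds placed on subset-vertices, clique-vertices, and verification-part vertices. Seeds in the verification part contribute at most $O(M_1)$ expected infections each, for a total of $O(kM_1)=O(n^{c_1+c_2+11})$. For the remaining seeds, call a clique \emph{activated} if its $\Lambda$ interface vertices reach infection probability above $p_0$; a clique can become activated only if a covering subset is seeded or one of its vertices is. Using the NO-promise together with the structure of the source hard instance, I argue that at least one element's clique remains unactivated, so its interface vertices have marginal infection probability $o(1/n)$. Hence in every one of the $M_2$ AND gadgets one input set falls below $\tfrac12 p_0$ while every other input stays below $\tfrac{11}{10}p_0$, and Condition~1 of Definition~\ref{defi:AND} bounds each output's infection probability by $\varepsilon_1=1/n$. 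Propagating through the stars yields at most $M_2\cdot\varepsilon_1\cdot a_1 M_1=O(n^{c_1+c_2+11})$ expected infections there, and adding the $O(mn+K)=O(n^{c_1+c_2+3})$ contribution from the set-cover part finishes the NO bound.

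The main obstacle is the ``unactivated clique'' step of the NO analysis: since a single direct clique-seed already drives a clique to infection level $\approx p^*$, one must rule out the hybrid strategy of seeding $k_s$ subset-vertices together with $k-k_s$ clique-vertices in a way that activates all $n$ cliques even though $k$ subsets alone cannot cover $U$. This is handled by choosing the source of the reduction so that for every $0\leq k_s\leq k$, any $k_s$ subsets leave at least $k-k_s+1$ elements uncovered (a property that holds for the bounded-degree or gap-preserving variants of \VC/\SC customarily used in this regime, and is compatible with the standing assumption $K=O(n)$). A secondary technical task is verifying the concentration and near-independence of the per-vertex infection probabilities at the $\Lambda$ interface vertices of each clique so that Definition~\ref{defi:AND}'s i.i.d.\ input hypothesis applies; this follows routinely from $m=M_2\Lambda\gg\Lambda$ via Chernoff bounds.
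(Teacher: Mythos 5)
Your YES direction matches the paper's argument in all essentials: seed the covering subsets, activate each clique via a Chernoff bound (the paper writes $a_{\lfloor\frac12 a_1 m\rfloor}\geq a_{\lfloor a_1 n\rfloor}>p^\ast-\varepsilon$, essentially your step (ii)), feed $a_1(p^\ast-\varepsilon)=p_0$ into the AND gadgets, and collect $\Theta(M_1 M_2)$ expected infections from the stars.

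Your NO direction has a concrete gap and also an unnecessary detour. The gap: you write that ``in every one of the $M_2$ AND gadgets one input set falls below $\tfrac12 p_0$ while every other input stays below $\tfrac{11}{10}p_0$, and Condition~1 \ldots\ bounds each output's infection probability by $\varepsilon_1$.'' This fails for any AND gadget whose interface vertex in a clique is itself a seed: that seed has infection probability $1$, so the corresponding AND input has probability $a_1$, which can exceed $\tfrac{11}{10}p_0=\tfrac{11}{10}a_1(p^\ast-\varepsilon)$; Definition~\ref{defi:AND} says nothing about such inputs. Your per-seed bound ``seeds in the verification part contribute at most $O(M_1)$ each'' does not cover these, because these seeds sit \emph{in the cliques}, not in the verification part. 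The paper therefore partitions the $M_2$ AND gadgets into up to $k_2+k_3$ gadgets with ``unknown'' output (worst-cased at $O(M_1)$ infections each) and the remaining $M_2-k_2-k_3$ gadgets, which alone satisfy Condition~1. Your final bound happens to survive after adding the missing $(k_2)\cdot O(M_1)$ term, but your proof as written does not justify it.

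The detour: you propose to rule out the hybrid strategy (some seeds on subsets, some directly in cliques) by demanding a hard \SC instance in which any $k_s$ subsets leave at least $k-k_s+1$ elements uncovered. This property is \emph{not} implied by the NO promise (a NO instance may have $k-1$ subsets covering all but one element), so it is a genuine additional assumption, and the paper makes no such assumption. It is also unnecessary: a clique seed delivers probability $a_1$ to the other clique members exactly as a seed on a covering subset-vertex would, and the subset-vertex seed can only help additional cliques. Hence the activated-clique set under any hybrid seed set of size $k_1+k_2\leq k$ is contained in the activated-clique set under the $\leq k$ replacement subsets, and the NO promise directly yields one unactivated clique. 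That domination argument is how the paper closes this step, and it works for arbitrary \SC instances.
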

\begin{proof}
  If the \textsc{SetCover} instance is a {\sf YES} instance, we are able to choose $k$ subsets $\{A_{i_1},\ldots,A_{i_k}\}\subseteq A$ such that $A_{i_1}\cup\cdots\cup A_{i_k}=U$.
  We choose the $k$ vertices corresponding to these $k$ subsets as seeds.

  We say that a clique representing an element is \emph{activated} if all its $m$ vertices are infected with probabilities more than $p^\ast-\varepsilon$.
  If a vertex representing a subset is seeded, for each clique representing the element it covers, each of the $m$ vertices in this clique will be infected with probability $a_1$.
  Thus, $ma_1$ vertices will be infected in expectation.
  According to Chernoff-Hoeffding inequality, with probability at least $1-\exp\left(-\frac18a_1^2m\right)$, there are more than $\frac12a_1m$ infected vertices in the clique.
  If this happens, in the next cascade iteration, each vertex in the clique has more than $\frac12a_1m$ infected neighbors, so it will be infected with probability at least $a_{\lfloor\frac12a_1m\rfloor}\geq a_{\lfloor a_1n\rfloor}>p^\ast-\varepsilon$ (notice that $\frac12m=\Theta(n^{c_1+c_2+2})\gg n$).
  Therefore, if a vertex representing a subset is seeded and a clique representing an element is in this subset, then this clique is activated with probability at least $1-\exp\left(-\frac18a_1^2n\right)$.

  By our choice of $k$ seeds, each of the clique is activated with probability at least $1-\exp\left(-\frac18a_1^2n\right)$.
  By a union bound, all the $n$ cliques will be activated with probability at least
  $$p_{\text{activated}}=1-K n\exp\left(-\frac18a_1^2n\right)=\Theta(1).$$

  In the highly likely case where all the $n$ cliques are activated, all the vertices at the input ends of all the AND gadgets will be infected with probability more than $a_1(p^\ast-\varepsilon)$.
  Since the parameter $p_0=a_1(p^\ast-\varepsilon)$ is set for the AND gadget, the output vertex $v$ falls into case (2) in Definition~\ref{defi:AND}, which means it will be infected with probability more than $p_2-\varepsilon_2$.
  Therefore, all the $M_1$ vertices connected to $v$ in each of the $M_2$ copies will be infected with probability at least $a_1(p_2-\varepsilon_2)$, so the expected total number of infected vertices is at least $p_{\text{activated}}\cdot a_1(p_2-\varepsilon_2)M_1M_2=\Theta\left(n^{c_1+c_2+12}\right)$.

  On the other hand, if the \textsc{SetCover} instance is a {\sf NO} instance, consider any choice of $k$ seeds with $k_1$ of them in the $K$ vertices representing subsets, $k_2$ of them in the $n$ cliques, and the remaining $k_3=k-k_1-k_2$ of them in the verification part.
  We first show that at least one clique will not be activated.

  The $k_3$ vertices in the verification part play no role in activating the cliques, as the $n$ cliques are connected to the verification part by directed edges.
  As for the $k_2$ vertices in the cliques, since we assume each element in $U$ is in at least one subset, infecting any vertex in any clique is at most as good as infecting the vertex representing the subset covering the element that the clique represents.
  Therefore, when analyzing the activation of cliques, we can reason as if these $k_2$ seeds are among the $K$ subsets.
  Since the \textsc{SetCover} instance is a {\sf NO} instance, and we have picked $k_1+k_2\leq k$ subsets, at least one clique will not be activated.

  Among the $M_2$ AND gadgets, at most $k_2$ of them take the input vertices which are connected from the $k_2$ seeds in the cliques.
  Since these $k_2$ seeds are infected with probability $1$ making these input vertices infect with probability $a_1$ which may be larger than $\frac{11}{10}a_1(p^\ast-\varepsilon)$, the outputs of these $k_2$ AND gadgets are unknown as it falls into neither case (1) nor case (2).
  We have also assumed $k_3$ seeds are selected in the verification parts, so we also do not know the outputs of another (at most) $k_3$ AND gadgets.

  For the remaining $M_2-k_2-k_3$ AND gadgets, they fall into case (1)
  by the fact that at least one clique is not activated and our setting $p_0=a_1(p^\ast-\varepsilon)$ for the AND gadget.
  Since we have set the AND gadget parameter $\varepsilon_1=\frac1n$, the output vertex $v$ will be infected with probability less than $\frac1n$, which will infect at most $a_1\frac{M_1}n$ vertices in expectation among the $M_1$ vertices on the right-hand side of Figure~\ref{fig:highlevel}.
  Notice that each AND gadget has $O(n^{c_1+c_2+1})$ vertices by Lemma~\ref{ANDlemma}, and the set cover part has $K+nm$ vertices.
  In this case, even if all the $K+nm+(M_2-k_2-k_3)\cdot O\left(n^{c_1+c_2+1}\right)=O\left(n^{c_1+c_2+3}\right)$ vertices in the set cover part and the $(M_2-k_2-k_3)$ AND gadgets are infected, the total number of infected vertices cannot exceed $O\left(n^{c_1+c_2+3}\right)+M_2\cdot a_1\frac{M_1}n=O\left(n^{c_1+c_2+11}\right)$.

  Finally, for those remaining $k_2+k_3$ AND gadgets whose outputs are unknown,
  even if all vertices in these $k_2+k_3$ copies of AND gadgets and their attached $M_1$ vertices are infected, this total number is still $(k_2+k_3)\cdot \left(O\left(n^{c_1+c_2+1}\right)+M_1\right)=(k_2+k_3)\cdot O\left(n^{c_1+c_2+10}\right)=O\left(n^{c_1+c_2+11}\right)$.
  Therefore, if the \textsc{SetCover} instance is a {\sf NO} instance, we can infect at most $O\left(n^{c_1+c_2+11}\right)$ vertices in $G$.
\end{proof}

\subsection{Proof of Theorem~\ref{hardnessproof} for $a_1>0$ with Undirected Graphs}
\label{sect:proof2}
To prove Theorem~\ref{hardnessproof} for undirected graphs, we will need the following \emph{directed edge gadget} which simulates directed edges, and the construction of this gadget also requires the property $a_2>2a_1$.
This is because the directed edge gadget also uses probability filter gadgets as building blocks.

\begin{definition}\label{defi:directededge}
  A \emph{$(\Upsilon,\epsilon,b,f)$-directed edge gadget} $\langle u,v\rangle$ takes one vertex $u$ as input and output one vertex $v$ such that the following properties hold.
  \begin{enumerate}
    \item\emph{directed property:} If $u$ is connected to each of the $\Upsilon$ vertices $v_1,\ldots,v_{\Upsilon}$ by a directed edge gadget $\langle u,v_i\rangle$, and $v_1,\ldots,v_{\Upsilon}$ are already infected, then $u$ will be infected with probability less than $\epsilon$.
    \item If the input $u$ is infected, then the output $v$ will be infected with probability $b$. Moreover, $b>0$.
  \end{enumerate}
\end{definition}

The size of a directed edge gadget is given by the following lemma.
\begin{lemma}\label{directededgelemma1}
  For any 2-quasi-submodular function $f$ with $a_1>0$, any positive integer $\Upsilon$ and any $\epsilon>0$, there exists $b\in(0,1)$ such that we can construct a $(\Upsilon,\epsilon,b,f)$-directed edge gadget with $\Theta\left(\Upsilon^d(1/\epsilon)^d\right)$ vertices and $\Theta\left(\Upsilon^d(1/\epsilon)^d\right)$ edges, where $d>1$ is a constant depending only on $f$.
\end{lemma}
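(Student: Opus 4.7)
The plan is to build the directed edge gadget as a ``ladder'' of probability filter gadgets whose forward and backward propagation behave very asymmetrically, exploiting the gap $a_2 > 2a_1$ granted by 2-quasi-submodularity. The central idea is that each internal filter along the ladder will see two infected ``drivers'' when traversed in the forward direction (triggering the $a_2$-regime) but only one in the backward direction (staying in the $a_1$-regime), so backward influence is attenuated at every stage while forward influence is not.

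Concretely, for each stage $i$ of the ladder I would place a probability filter gadget together with a small ``helper'' cluster of vertices. The helper cluster is attached on the forward side of the filter in such a way that, when the previous stage's output is infected, the helper vertices become infected alongside it. The filter therefore sees two infected driver inputs and, by its $a_2 > 2a_1$ gap property, outputs a constant infection probability. Chaining $L$ such stages with $u$ at one end and $v$ at the other then gives $\Pr[v \text{ infected} \mid u \text{ infected}] \geq b$ for some constant $b > 0$ depending only on $f$ and $L$.

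For the directed property, consider $\Upsilon$ gadgets $\langle u, v_i\rangle$ with all $v_i$ infected but no vertex outside the gadgets seeded. Walking backward from each $v_i$, the helper clusters are never activated, so each filter stage sees only a single infected input and its output is suppressed by a multiplicative factor $\rho < 1$ depending on $f$. After $L$ stages the per-gadget backward probability reaching $u$ is $O(\rho^L)$, and a union bound over the $\Upsilon$ gadgets gives total backward probability at most $\Upsilon \rho^L$, which falls below $\epsilon$ once $L = \Theta(\log(\Upsilon/\epsilon))$. Multiplying the polynomial-in-$(\Upsilon, 1/\epsilon)$ size of each filter by $L$ and absorbing exponents into a single constant $d$ depending on $f$ yields the claimed $\Theta(\Upsilon^d (1/\epsilon)^d)$ bound on both vertices and edges, while keeping $b$ a strictly positive constant.

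The main obstacle is certifying that no unintended backward path leaks extra probability into $u$---in particular, that helper clusters, filter internals, or incidental undirected cross-links cannot, through unlikely but nonzero chance events, reactivate an earlier stage and rebuild backward influence. I would control this by tuning each filter's precision so that every wrong-direction infection probability per stage stays below $\epsilon/(\Upsilon L)$, and by designing helpers to be incident only to their own filter on the forward side and to nothing else on the ladder spine. A stage-by-stage induction on the backward probability profile, using the filter's one-input suppression as the inductive step, then cleanly bounds the cumulative backward probability at $u$ by $\epsilon$ as required.
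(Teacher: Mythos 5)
Your proposal takes a genuinely different route from the paper, and unfortunately it has a gap that the paper's construction is specifically designed to avoid.

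The paper does not build a chain of stages at all. It takes a \emph{single} probability filter gadget (a tree of probability separation blocks) and simply increases the number of layers to $L = \frac{\log(\Upsilon/\epsilon)}{\log(1/(1-\delta))}+1$. The vertex $u$ is connected to \emph{all} $h^L$ leaves of that tree, and $v$ is the single root/output. Forward, infecting $u$ makes every leaf infected independently with probability $a_1$, and the filter then amplifies $a_1 > p_1$ up into $(p_2-\varepsilon_2, p_2]$, so $b$ is a genuine constant. Backward, the per-layer attenuation $a_1\alpha$ combined with the per-layer fan-out $h$ gives $(h a_1\alpha)^L = (1-\delta)^L$ as the \emph{expected} number of infected leaves when $v$ is infected; summing over $\Upsilon$ gadgets and applying Markov's inequality gives probability $< \epsilon$ that $u$ has any infected neighbor. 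There are no helper clusters and no reliance on an $a_2$-regime at the filter's interface; the asymmetry is entirely a consequence of the tree geometry plus $h a_1 \alpha < 1$.

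Your chain-of-stages-with-helpers design has a real hole exactly where you flag it: in an undirected graph there is no way to make the helper cluster's infection depend on traversal direction. If a helper is wired so that it becomes infected when the previous stage's output is infected, then by symmetry it can also become infected when the vertices it touches in the \emph{next} stage are infected, and then it re-activates the previous stage in the $a_2$-regime --- precisely the backward leakage you need to avoid. Proposing to ``tune each filter's precision so that every wrong-direction infection probability per stage stays below $\epsilon/(\Upsilon L)$'' restates the requirement rather than establishing it; no gadget construction or calculation is given that would actually enforce that bound while simultaneously keeping the forward helper active. Two further issues: (i) the filter gadget's interface takes $\Lambda = h^\ell$ input vertices, so ``the filter sees two infected drivers'' does not describe its actual input contract --- you seem to be conflating the filter with a single 2-input separation block or with the $I=2$ AND gadget; and (ii) you state that $b$ ``depends on $f$ and $L$,'' but $L = \Theta(\log(\Upsilon/\epsilon))$ grows with the problem size, so unless you supply a fixed-point argument showing the per-stage forward map has a stable positive fixed point, $b$ could shrink to $0$ as $\Upsilon$ and $1/\epsilon$ grow, which would break the downstream reduction where $b$ is treated as a constant in $(p_2-\frac12\varepsilon_2, p_2]$.
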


We also need the following lemma.
\begin{lemma}\label{directededgelemma3}
  Given an $(I,\Lambda,p_0,p_2,\varepsilon_1,\varepsilon_2,f)$-AND gadget, for any $\Upsilon$ and $\epsilon$, we can construct a $(\Upsilon,\epsilon,b,f)$-directed edge gadget with $b\in\left(p_2-\frac12\varepsilon_2,p_2\right]$ using $\Theta\left(\Upsilon^d(1/\epsilon)^d\right)$ vertices and $\Theta\left(\Upsilon^d(1/\epsilon)^d\right)$ edges, where $d>1$ is a constant depending only on $f$.
\end{lemma}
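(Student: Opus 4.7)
The plan is to mirror the construction from Lemma~\ref{directededgelemma1} but wire in the given AND gadget as the terminal stage so that the output probability is pinned near $p_2$. Concretely, I would let the final stage be the given $(I,\Lambda,p_0,p_2,\varepsilon_1,\varepsilon_2,f)$-AND gadget, whose output vertex is declared to be $v$, and I would build an input-driving sub-network connecting $u$ to the $I\Lambda$ input vertices of the AND gadget. The sub-network is designed so that conditional on $u$ being infected, each input vertex of the AND gadget is independently infected with probability in the interval $(p_0,\tfrac{11}{10}p_0)$, and in fact calibrated close enough to the upper endpoint so that the AND gadget's output probability lands in the tighter window $(p_2-\tfrac12\varepsilon_2,p_2]$ rather than only $(p_2-\varepsilon_2,p_2]$.

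The input-driving sub-network is essentially the same apparatus used in the proof of Lemma~\ref{directededgelemma1}: for each input vertex $w$ of the AND gadget I attach a small ``probability transformer'' between $u$ and $w$, built from the probability filter gadget of Section~\ref{subsect:pfg} (whose construction relies crucially on $a_2>2a_1$). By choosing the transformer's internal parameters one can realize any target infection probability for $w$ to arbitrary precision when $u$ is the unique external seed, so the required calibration to the upper portion of $(p_0,\tfrac{11}{10}p_0)$ is feasible. Independence across the input vertices holds because the transformers share only the single terminal $u$.

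To enforce the directed property, I would replicate the transformers and add dummy neighbors to $u$ so that $u$'s degree is $\Theta(\Upsilon^d(1/\epsilon)^d)$ for an appropriate constant $d$, while arranging the wiring so that every path from $v$ (or from any $v_i$ neighbor reached through a gadget $\langle u,v_i\rangle$) back to $u$ must traverse a probability filter gadget in the reverse direction, where the influence collapses to near $0$. A count then shows that the expected number of infected back-neighbors of $u$ is far smaller than the threshold required to give infection probability $\epsilon$. The size $\Theta(\Upsilon^d(1/\epsilon)^d)$ comes from making this count small enough uniformly over all $\Upsilon$ external neighbors $v_i$.

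The main obstacle is the quantitative back-propagation analysis: one must verify that even with $v$ and all $v_1,\ldots,v_\Upsilon$ simultaneously infected, no cascade path through the two-stage gadget delivers to $u$ enough infected neighbors to push its infection probability above $\epsilon$. This reduces to showing that each probability filter in the interior blocks reverse signal up to a factor decaying polynomially in the filter's size, and then summing these contributions over all $\Upsilon$ external neighbors. The argument parallels the back-direction analysis in the proof of Lemma~\ref{directededgelemma1}; the only real novelty here is the forward-direction calibration that forces $b\in(p_2-\tfrac12\varepsilon_2,p_2]$, which is a straightforward consequence of driving the AND gadget's inputs close to $\tfrac{11}{10}p_0$ using the monotonicity of the AND gadget's output in its input probabilities.
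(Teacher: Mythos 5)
Your proposal takes a genuinely different route from the paper, and it has a gap that I don't think can be repaired within the strategy you've outlined.

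The paper's proof is considerably simpler than what you propose: it does not wrap the given AND gadget or drive its inputs at all. Instead, it observes that the AND gadget is itself built from 2-set-input AND gadgets, each of which contains a $(\Lambda,p_1,p_2,\varepsilon_1,\varepsilon_2,f)$-probability filter gadget as its core, and that this inner filter gadget has exactly the same fixed point $p_2$ as the AND gadget's declared output parameter (this identification is made explicit in Lemma~\ref{ANDlemmawithtwoinputs} and its remark). The paper then simply \emph{extracts} that filter gadget, adds more layers to it (exactly the move in Lemma~\ref{directededgelemma1}), and uses it alone as the directed edge gadget. Adding layers pushes the output $b$ arbitrarily close to $p_2$ by the fixed-point dynamics of $y(x)$ in Figure~\ref{fig:filtercurve}, so $b\in(p_2-\tfrac12\varepsilon_2,p_2]$ follows by adding a constant number of extra layers, and the size analysis is then identical to Lemma~\ref{directededgelemma1}.

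The central gap in your version is the calibration step. You propose to achieve the tightened window $b\in(p_2-\tfrac12\varepsilon_2,p_2]$ by ``driving the AND gadget's inputs close to $\tfrac{11}{10}p_0$'' and appealing to monotonicity. But the AND gadget you are given is a \emph{fixed} black box; Definition~\ref{defi:AND} only guarantees that for inputs in $(p_0,\tfrac{11}{10}p_0)$ the output lands in $(p_2-\varepsilon_2,p_2]$, and it does not promise anything finer for inputs near the upper endpoint. In fact, within the paper's own construction the filter gadget inside the AND gadget has a \emph{fixed} number of layers, so even with inputs at the very top of the allowed range the output can remain at the bottom of $(p_2-\varepsilon_2,p_2]$; the halved window is only obtained by adding layers to that filter gadget, which means reaching inside the AND gadget rather than treating it as a terminal stage. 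Two secondary issues follow from the same choice: (a) your size accounting would pick up an extra factor of roughly $I\Lambda$ (one transformer per AND-gadget input plus the AND gadget itself), which is not of the form $\Theta(\Upsilon^d(1/\epsilon)^d)$ with $d$ depending only on $f$; and (b) the back-propagation bound through a generic AND gadget is not available from Definition~\ref{defi:AND} alone, whereas for a lone probability filter gadget the reverse decay by a factor $a_1\alpha$ per layer is exactly the quantity computed in Lemma~\ref{directededgelemma1}.
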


The construction of the directed edge gadget and the proofs of Lemma~\ref{directededgelemma1} and Lemma~\ref{directededgelemma3} are deferred to Section~\ref{sect:directededgegadgets}.

\subsubsection{A Reduction from {\sc SetCover}}
According to Lemma~\ref{ANDlemma_p2}, for $I=n$, there exists a constant $p_2>0$, such that if we set $\varepsilon_1=\frac1n$ and $\varepsilon_2=\frac1{100}p_2$, we can construct an $(n,\Lambda,p^\ast(p_2-\varepsilon_2),p_2,\varepsilon_1,\varepsilon_2,f)$-AND gadget, where $\Lambda=O\left((1/\varepsilon_1)^{c_1}n^{c_2}\right)=O\left(n^{c_1+c_2}\right)$.
Define $M_2=n^2$ and $m=M_2\Lambda$ as before, and we will define $M_1$ later.

Applying Lemma~\ref{directededgelemma3}, we can construct a $(mn,m^{-2},b,f)$-directed edge gadget such that $b\in\left(p_2-\frac12\varepsilon_2,p_2\right]$.
The numbers of vertices and edges in this directed edge gadget are both
$$\Theta\left((mn)^d\left(1/m^{-2}\right)^d\right)=\Theta\left(m^{3d}n^d\right)=\Theta\left(n^{(3c_1+3c_2+7)d}\right).$$
We will use this directed edge gadget exclusively in the construction.

Finally, define $M_1=n^{(30c_1+30c_2+70)d}$.

We will construct an undirected graph $G$ similar to the one in the last section, with some modifications.
We make the following two modifications:
\begin{enumerate}
  \item We replace all directed edges in Figure~\ref{fig:highlevel} by $(mn,m^{-2},b,f)$-directed edge gadgets. These consist of 1) the directed edges connecting between the $K$ vertices representing subsets and the $n$ cliques representing elements and 2) the directed edges connecting between the set cover part and the verification part.
  \item We use the $(n,\Lambda,p^\ast(p_2-\varepsilon_2),p_2,\varepsilon_1,\varepsilon_2,f)$-AND gadgets in the verification part instead of the $(n,\Lambda,a_1(p^\ast-\varepsilon),p_2,\varepsilon_1,\varepsilon_2,f)$-AND gadgets.
\end{enumerate}
For the remaining parts of the construction, all the edges, including the ones in the clique, the ones in the AND gadget, and the ones connected to the $M_1$ vertices on the right-hand side of Figure~\ref{fig:highlevel}, are undirected edges.
In particular, we recall that the edges in the AND gadget are undirected.

\paragraph{The Size of the Construction}
We show that the total number of vertices in $G$ is still of polynomial size.
In the set cover part, we have created at most $K mn$ directed edge gadgets between the $K$ vertices and the $mn$ vertices in the $n$ cliques.
The total size of the set cover part is at most $K+mn+K mn\cdot\Theta\left(n^{(3c_1+3c_2+7)d}\right)=O\left(n^{(3c_1+3c_2+7)d+c_1+c_2+4}\right)$.

In the verification part, the AND gadget contains $O\left(n^{c_1+c_2+1}\right)$ vertices by Lemma~\ref{ANDlemma_p2}, the total number of vertices is $M_2(O(n^{c_1+c_2+1})+M_1)=\Theta\left(n^{(30c_1+30c_2+70)d+2}\right)$.

Since $d>1$ by Lemma~\ref{ANDlemma_p2}, $N$ is dominated by the number of vertices in the verification part: $N=\Theta\left(n^{(30c_1+30c_2+70)d+2}\right)$.

Finally, with $\tau=\frac1{(30c_1+30c_2+70)d+2}$, Theorem~\ref{hardnessproof} for undirected graphs follows immediately from the following lemma.
\begin{lemma}\label{YESNOcompare2}
  If the \textsc{SetCover} instance is a {\sf YES} instance, by choosing $k$ seeds appropriately, we can infect $\Theta\left(n^{(30c_1+30c_2+70)d+2}\right)$ vertices in expectation in the graph $G$; if it is a {\sf NO} instance, we can infect at most $O\left(n^{(30c_1+30c_2+70)d+1}\right)$ vertices in expectation for any choice of $k$ seeds.
\end{lemma}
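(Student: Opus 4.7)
The plan is to track the proof of Lemma~\ref{YESNOcompare} step-by-step, substituting each directed edge by a $(mn, m^{-2}, b, f)$-directed edge gadget from Lemma~\ref{directededgelemma3} and the old AND gadget by the $(n, \Lambda, p^\ast(p_2-\varepsilon_2), p_2, \varepsilon_1, \varepsilon_2, f)$-AND gadget from Lemma~\ref{ANDlemma_p2}. The two crucial gadget properties are (i) the directed property, which guarantees upstream leakage of at most $m^{-2}$ per vertex even though all edges are now undirected, and (ii) the designed alignment of the gadget output probability $b\in(p_2-\varepsilon_2/2,p_2]$ with the AND-gadget input threshold $p_0 = p^\ast(p_2-\varepsilon_2)$.

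For the YES direction, I would place the $k$ seeds on the subset vertices of a set cover. Each of the $m$ clique vertices representing a covered element is the output of a directed edge gadget whose input is a seed, so after a constant number of cascade rounds it is infected with probability $b$. A Chernoff argument as in Lemma~\ref{YESNOcompare} (with $b$ replacing $a_1$) then infects at least $\tfrac12 bm$ vertices of each clique w.h.p., and the internal clique edges subsequently amplify every clique vertex's infection probability to $a_{\lfloor bm/2 \rfloor} = p^\ast - o(1)$. A union bound yields simultaneous activation of all $n$ cliques with probability $1-o(1)$. Conditional on activation, each input vertex of any AND gadget---being the output of a directed edge gadget from an activated clique---is infected with probability in $(p^\ast(p_2-\varepsilon_2/2), p^\ast p_2]$; the choice $\varepsilon_2 = p_2/100$ places this interval strictly within $(p_0, \tfrac{11}{10}p_0)$, so case~(2) of Definition~\ref{defi:AND} fires and each AND output is infected with probability at least $p_2-\varepsilon_2$. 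Each attached block of $M_1$ vertices then contributes $\geq a_1(p_2-\varepsilon_2)M_1$ infections in expectation, summing over the $M_2$ copies to $\Theta(M_1 M_2) = \Theta(n^{(30c_1+30c_2+70)d+2})$.

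For the NO direction, I would split the $k$ seeds into $k_1$ on subset vertices, $k_2$ inside cliques, and $k_3$ inside the verification part. By the directed property applied to every directed edge gadget, the cumulative upstream infection caused by each non-subset seed is at most $m^{-2}\cdot\text{poly}(n) = o(1)$, so the clique-activation analysis proceeds as if all $k$ seeds sat on subset vertices. Since the \SC instance is NO, some clique fails to activate; every AND gadget whose $n$ input blocks are fed only by such ``untouched'' cliques then falls into case~(1) of Definition~\ref{defi:AND} and outputs infection probability at most $\varepsilon_1=1/n$. At most $k_2+k_3 = O(n)$ AND gadgets can have inputs directly reached from non-subset seeds; the expected infected total is therefore bounded by (a) the set-cover part and all gadget internals, which are $O(n^{(30c_1+30c_2+70)d+1})$ in total, (b) the $O(n)$ suspect AND gadgets with their $M_1$ attachments, contributing $O(nM_1)$, and (c) the remaining $\approx M_2$ suppressed AND gadgets, each contributing in expectation $\leq a_1 M_1/n$, summing to $M_2 \cdot a_1 M_1/n = O(nM_1) = O(n^{(30c_1+30c_2+70)d+1})$.

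The main obstacle will be the careful probability tracking in the YES direction: one must propagate quantitative bounds through a directed edge gadget, a multi-round cascade inside the clique, and a second directed edge gadget, and verify that the resulting infection probability at each AND-gadget input is both sharply concentrated inside the narrow window $(p_0, \tfrac{11}{10}p_0)$ and independent across the $n\Lambda$ inputs of a single AND gadget. Independence follows because the feeding directed edge gadgets have disjoint internal randomness and pull from disjoint cliques, but making this rigorous inside a simultaneously running cascade requires care. On the NO side, a parallel delicacy is aggregating the individually tiny $m^{-2}$ upstream leakages across all directed edge gadgets in both the set-cover and verification parts, and confirming that the chosen $\Upsilon = mn$ is large enough to cover the worst-case out-degree at any vertex.
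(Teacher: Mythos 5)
Your proposal follows the paper's proof essentially step-for-step: place seeds on subset vertices in the YES case, propagate through directed edge gadgets and cliques, verify the AND-gadget input window $(p_0,\tfrac{11}{10}p_0)$ via the $\varepsilon_2 = p_2/100$ slack; in the NO case, reduce all seed placements to effective subset picks, use the directed property to show negligible upstream leakage, count the ``suspect'' AND gadgets directly touched by seeds, and bound the rest by $\varepsilon_1 M_1$. The one place where your sketch is slightly looser than the paper is the NO-direction leakage step: you invoke an \emph{expected-value} bound ``$m^{-2}\cdot\mathrm{poly}(n) = o(1)$'' to conclude the uncovered clique stays dead, whereas the paper proves a \emph{probability} statement, that with probability $p_{\text{no}} > 1 - O(1/n)$ no vertex in any uncovered clique becomes infected at all (so those AND-gadget input sets are infected with probability exactly zero, cleanly triggering case~(1)), and then pays the remaining $(1-p_{\text{no}})N = O(N/n) = O(n^{(30c_1+30c_2+70)d+1})$ for the unlikely complement event --- a term your sketch omits but which you would need to close the argument.
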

\begin{proof}
  If the \textsc{SetCover} instance is a {\sf YES} instance, we are able to choose $k$ subsets $\{A_{i_1},\ldots,A_{i_k}\}\subseteq A$ such that $A_{i_1}\cup\cdots\cup A_{i_k}=U$.
  We choose the $k$ vertices corresponding to these $k$ subsets as the seeds.
  Since the \textsc{SetCover} instance is a {\sf YES} instance, for each clique, each vertex is connected from a seed by a directed edge gadget, which will be infected with probability $b$.
  In each clique, $bm$ vertices will be infected in expectation, and the remaining vertices in the clique will be infected with probability at least $a_{\lfloor bm\rfloor}$, which has limit $p^\ast$ as $n\rightarrow\infty$.

  By the same analysis in the proof of Lemma~\ref{YESNOcompare}, with a high probability $p_{\text{activated}}=\Theta(1)$, all the $n$ cliques will be activated such that all vertices in the clique will be infected with probability $p^\ast-\varepsilon$ for certain $\varepsilon=o(1)$.
  By our construction and Lemma~\ref{directededgelemma3}, each of the $mn$ vertices that are passed into the input of the AND gadget will be infected with probability $$(p^\ast-\varepsilon)b\in\left((p^\ast-\varepsilon)\left(p_2-\frac12\varepsilon_2\right),(p^\ast-\varepsilon)p_2\right]\subseteq\left(p^\ast(p_2-\varepsilon_2),\frac{11}{10}p^\ast(p_2-\varepsilon_2)\right),$$
  by noticing that $\varepsilon=o(1)$ and $\varepsilon_2=\frac1{100}p_2<\frac1{10}p_2$ is a constant.
  Thus, the AND gadget falls into case (2) of Definition~\ref{defi:AND}, so the output vertex $v$ of the AND gadget will be infected with probability more than $p_2-\varepsilon_2$.
  Therefore, each of the $M_1$ vertices will be infected with probability $p_{\text{activated}}a_1(p_2-\varepsilon_2)$, and the expected total number of infected vertices in those $M_2$ copies of $M_1$ vertices is already $p_{\text{activated}}a_1(p_2-\varepsilon_2)M_1M_2=\Theta\left(n^{(30c_1+30c_2+70)d+2}\right)$.

  If the \textsc{SetCover} instance is a {\sf NO} instance, consider any choice of the $k$ seeds with $k_1$ seeds in the $K$ vertices representing subsets, $k_2$ seeds in the directed edge gadgets connecting the $K$ vertices and $nm$ vertices in the $n$ cliques, $k_3$ seeds in the $n$ cliques, $k_4$ seeds in the directed edge gadgets between the $n$ cliques in the set cover part and the inputs of the AND gadget in the verification part, and the remaining $k_5=k-k_1-k_2-k_3-k_4$ seeds in the verification parts.
  We first aim to show that at least one clique will not be activated with high probability.

  When analyzing cliques' activation, it is easy to see that putting $k_2$ seeds on the directed edge gadgets is at most as good as putting them on the corresponding vertices representing the subsets.
  Similarly, putting $k_4$ seeds on the directed edge gadgets connecting the set cover part and the verification part is at most as good as putting them on the corresponding vertices in the cliques, and having $k_3+k_4$ seeds in the cliques is at most as good as having them in the $K$ vertices representing the subsets covering the elements that those cliques represent.
  Thus, we can reason as if we have selected $k_1+k_2+k_3+k_4$ subsets in the \textsc{SetCover} problem.
  Since the \textsc{SetCover} instance is a {\sf NO} instance, those $k_1+k_2+k_3+k_4\leq k$ seeds cannot cover all the cliques.
  As for the $k_5$ seeds in the verification part, their influences on each vertex in the $n$ cliques is at most $m^{-2}$ based on Definition~\ref{defi:directededge}, which has remote effect to the cliques, and we will discuss it later.

  To show that at least one clique is not activated, it remains to show that the clique not covered by those $k_1+k_2+k_3+k_4$ vertices cannot be activated.
  For each of those vertices representing subsets that are not picked, since it is connected to at most $mn$ vertices ($m$ vertices in each of the $n$ cliques) by the $(mn,m^{-2},b,f)$-directed edge gadgets, it will be infected with probability at most $m^{-2}$ by Definition~\ref{defi:directededge}.
  For each vertex in each uncovered clique, it may only be infected due to 1) the influence from one of the $K$ vertices which is not seeded and which is infected with probability at most $m^{-2}$, or 2) the influence from the $k_5$ seeds from the verification parts.
  In particular, it will be infected due to (1) with probability $bm^{-2}$, and it will be infected due to (2) with probability $m^{-2}$.
  By a union bound, the probability that there exist infected vertices in an uncovered clique is at most
  $m\cdot(bm^{-2}+m^{-2})=O(m^{-1})$.
  Since there can be at most $n$ uncovered cliques, the probability that all uncovered cliques contain no infected vertex is at least
  $p_{\text{no}}=1-n\cdot O\left(m^{-1}\right)>1-O\left(\frac1n\right).$
  Therefore, with the probability above, there exists at least one clique which is not activated.

  In the case that not all cliques are activated, since all the vertices in a not activated clique are infected with probability $0$, the corresponding input vertices to the AND gadget are also infected with probability $0b=0$.
  The output vertices $v$ in the AND gadgets therefore fall into case (1) in at least $M_2-k_3-k_4-k_5$ copies.
  Thus, in each of the corresponding $M_2-k_3-k_4-k_5$ copies of the $M_1$ vertices bundle (on the rightmost of Figure~\ref{fig:highlevel}), the expected number of infected vertices is at most $\varepsilon_1\cdot M_1=O\left(n^{(30c_1+30c_2+70)d-1}\right)$.
  In this case, even if all the vertices in the entire set cover part, the $mn$ directed edge gadgets connecting the two parts, all the $M_2$ AND-gadgets, and the remaining $k_3+k_4+k_5$ copies of the $M_1$ vertices bundles, the total number of infected vertices is at most
  \begin{align}\label{eqn:yesnocompare}
    &K mn\cdot\Theta\left(n^{(3c_1+3c_2+7)d}\right)+mn\cdot\Theta\left(n^{(3c_1+3c_2+7)d}\right)+M_2\cdot O\left(n^{c_1+c_2+1}\right)\nonumber\\
    &\qquad+(k_3+k_4+k_5)\left(O\left(n^{c_1+c_2+1}\right)+M_1\right)+(M_2-k_3-k_4-k_5)O\left(n^{(30c_1+30c_2+70)d-1}\right)\nonumber\\
    =&O\left(n^{(3c_1+3c_2+7)d+c_1+c_2+4}\right)+\Theta\left(n^{(3c_1+3c_2+7)d+3}\right)+O\left(n^{c_1+c_2+3}\right)\nonumber\\
    &\qquad+O\left(n^{(30c_1+30c_2+70)d+1}\right)+O\left(n^{(30c_1+30c_2+70)d+1}\right)\nonumber\\
    =&O\left(n^{(30c_1+30c_2+70)d+1}\right).
  \end{align}
  Finally, even assuming all vertices in $G$ are infected in the case that all cliques are activated (which happens with probability $1-p_{\text{no}}<O\left(\frac1n\right)$), the expected number of infected vertices is at most
  $$p_{\text{no}}\cdot O\left(n^{(30c_1+30c_2+70)d+1}\right)+(1-p_{\text{no}})N=O\left(n^{(30c_1+30c_2+70)d+1}\right),$$
  which concludes the lemma.
\end{proof}

\subsection{Constructions of Some Other Required Gadgets}
\label{sect:othergadgets}
Before constructing the AND gadget and the directed edge gadget, we need some other gadgets.
In this section and the next two sections, graph with undirected edges are considered.

We will construct the \emph{probability scaling down gadget} and the \emph{probability filter gadget}, which are used to construct the AND gadget and the directed edge gadget.
The relation of these gadgets are shown in Figure~\ref{fig:gadgetrelation}.

\begin{figure}
\centerline{\includegraphics[width=\textwidth]{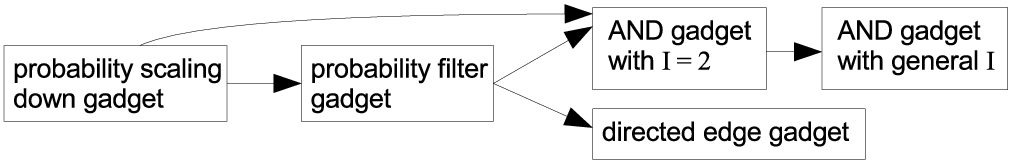}}
\caption{The relation of all the gadgets defined}
\label{fig:gadgetrelation}
\end{figure}

\subsubsection{Probability Scaling Down Gadget}
We first define and construct the following \emph{probability scaling down gadget} which is an essential component of both the AND gadget and the directed edge gadget.
\begin{definition}\label{defi:psdg}
  The \emph{$(\alpha,\varepsilon,f)$-probability scaling down gadget} takes one vertex $u$ as input and output a vertex $v$ such that
  \begin{itemize}
    \item if $u$ is infected with probability $p_u$, $v$ will be infected with probability $p_v\in(\alpha p_u-\varepsilon,\alpha p_u]$.
  \end{itemize}
\end{definition}

\begin{lemma}\label{lem:psdg}
  For any 2-quasi-submodular function $f$ with $a_1>0$, any constant $\varepsilon>0$ and any $\alpha$ with $0<\alpha\leq p^\ast$, there exists an $(\alpha,\varepsilon,f)$-probability scaling down gadget with constant numbers of vertices and edges.
\end{lemma}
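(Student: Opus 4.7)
The plan is to exploit the fact that the only ``boundary'' vertex of the gadget is $u$, so that conditioning on $u$'s infection status factors the output probability as $p_v = q\cdot p_u$, where $q$ depends only on the graph structure of the gadget and the randomness of the internal thresholds. The task then reduces to constructing a constant-size gadget achieving a conditional factor $q\in(\alpha-\varepsilon,\alpha]$.

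First, I would prove the factorization. Conditional on $u$ not being infected (probability $1-p_u$), the cascade never escapes the empty set within the gadget, so $v$ stays uninfected; conditional on $u$ being infected (probability $p_u$), the cascade proceeds according to the internal thresholds drawn uniformly from $[0,1]$, giving $v$ a fixed conditional infection probability $q$ that depends only on the gadget's graph structure. Hence $p_v=q\,p_u$ exactly, and the upper bound $p_v\leq\alpha p_u$ plus the slack $p_v>\alpha p_u-\varepsilon$ (worst at $p_u=1$) is equivalent to the single condition $q\in(\alpha-\varepsilon,\alpha]$.

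Second, I would exhibit basic ``atomic'' gadgets with explicit $q$. A direct edge $u\to v$ gives $q=a_1$; a chain of length $L$ gives $q=a_1^L$; and the fan-in gadget where $u$ connects to $w_1,\dots,w_n$ and each $w_i$ connects to $v$ gives $q=c_n:=\mathbb{E}_{K\sim\mathrm{Bin}(n,a_1)}[a_K]$. The sequence $c_n$ is monotonically increasing in $n$, with $c_n\to p^\ast$ as $n\to\infty$ (because $K\to\infty$ almost surely and $a_K\to p^\ast$). Serially composing two gadgets (identifying the output of the first with the input of the second) multiplies their $q$-values.

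Third, I would combine these atoms to obtain achievable $q$-values dense in $(0,p^\ast]$. Consider the two-parameter family $\{a_1^L\cdot c_n\}_{L\geq0,n\geq1}$. For any target $\alpha\in(0,p^\ast]$, choose the smallest $L\geq0$ with $\alpha/a_1^L\in[a_1^2,p^\ast)$, and then pick the largest $n$ such that $c_n\leq\alpha/a_1^L$. Since $c_{n+1}-c_n = a_1\cdot\mathbb{E}[a_{K+1}-a_K]$ tends to $0$ as $n\to\infty$ (because $a_{i+1}-a_i\to0$), taking $n$ sufficiently large forces the gap $\alpha-a_1^L c_n$ below $\varepsilon$. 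The resulting gadget is the serial composition of a length-$L$ chain with a fan-in gadget of width $n$; both $L$ and $n$ are bounded by constants depending only on $\alpha,\varepsilon,f$, so the gadget has constant size.

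The main obstacle is to certify that this single family really is dense in $(0,p^\ast]$, rather than leaving bands between strands for consecutive $L$ into which no $c_n$ falls with the required precision. If for some pathological $f$ such a band appears, the remedy is to broaden the family by serially composing several fan-in gadgets of different widths, using products like $\prod_j c_{n_j}$, and invoking a Kronecker-style density argument based on the ratios of the logarithms of the $c_{n_j}$. Since $\alpha,\varepsilon,f$ are fixed constants throughout, the eventual chosen gadget remains of constant size, completing the construction and giving the lemma.
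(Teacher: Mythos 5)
Your reduction is set up correctly: the factorization $p_v = q\, p_u$ by conditioning on $u$'s status, and the translation to building a gadget with internal factor $q\in(\alpha-\varepsilon,\alpha]$, are sound, and your analysis of the chain ($q=a_1^L$) and fan-in ($q=c_n=\mathbb{E}_{K\sim\mathrm{Bin}(n,a_1)}[a_K]$) atoms is correct. However, the density claim on which everything hinges --- that $\{a_1^L c_n\}$, or even the broadened family $\{a_1^L\prod_j c_{n_j}\}$, is dense in $(0,p^\ast]$ --- is left unestablished, and it is in fact false for some $f$. Since $c_n\nearrow p^\ast$, any product with two or more $c$-factors is at most $(p^\ast)^2$, and $a_1 c_n\le a_1 p^\ast<(p^\ast)^2$; so for $\alpha\in\left((p^\ast)^2,p^\ast\right)$ the only candidates are single $c_n$'s. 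The sequence $\{c_n\}$ is discrete with its unique cluster point at $p^\ast$, so the gap $c_{n_0+1}-c_{n_0}$ at the index $n_0$ where $c_{n_0}\le(p^\ast)^2<c_{n_0+1}$ is a \emph{fixed} constant depending only on $f$, not on $\varepsilon$. Concretely, for $a_1=0.2$ and $a_i=0.5$ for $i\ge 2$ (which is 2-quasi-submodular since $a_2=0.5>0.4=2a_1$), one has $p^\ast=0.5$, $(p^\ast)^2=0.25$, and computes $c_6\approx 0.251$, $c_7\approx 0.285$; for $\alpha=0.28$, $\varepsilon=0.01$, no member of your family lands in $(0.27,0.28]$. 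A ``Kronecker-style'' argument cannot rescue this: you have only \emph{nonnegative} integer combinations of $\{\log a_1,\log c_1,\log c_2,\dots\}$, and, as just observed, near $\log p^\ast$ only single generators are available, so the semigroup is provably not dense there.

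The paper's construction is structurally different and sidesteps this. It uses \emph{parallel} $u$--$v$ paths of assorted lengths, built greedily. With paths of lengths $\ell_1,\dots,\ell_m$, the factor is $q=\mathbb{E}[a_K]$, where $K$ is the number of paths that deliver, path $i$ delivering independently with probability $a_1^{\ell_i-1}$. The crucial property your serial family lacks is \emph{continuity of increments}: appending one more path of length $\ell$ increases $q$ by exactly $a_1^{\ell-1}\cdot\mathbb{E}[a_{K+1}-a_K]$, which tends to $0$ as $\ell\to\infty$. Hence one may always add the shortest path that keeps $q\le\alpha$. Whenever a length-$2$ path would overshoot, the shortest admissible longer path still contributes at least an $a_1$-fraction of the overshooting increment, hence at least $a_1\varepsilon$, so only finitely many such steps occur; and if length-$2$ paths are always admissible, $q=c_n\to p^\ast\ge\alpha$ eventually enters $(\alpha-\varepsilon,\alpha]$. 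The greedy process therefore terminates in constantly many steps, giving a constant-size gadget. This tunable-increment mechanism, achieved by the parallel rather than serial structure, is the missing ingredient in your proposal.
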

\begin{proof}
To construct this gadget, we iteratively add paths from $u$ to $v$, where a path of length $\ell$ consists of $\ell-1$ vertices $w_1,\ldots,w_{\ell-1}$ and $\ell$ edges $(u,w_1),(w_1,w_2),\ldots,(w_{\ell-1},v)$.
Given $p_u$, by repeatedly adding paths from $u$ to $v$, we are increasing $p_v$.
In each iteration $i$, we add a path of length $\ell_i$ from $u$ to $v$, where $\ell_i$ is the minimum length to maintain $p_v\leq\alpha p_u$.
That is, either it is true that $p_v>\alpha p_u$ if a path of length $\ell_i-1$ was added, or $\ell_i=2$ which is already the minimum length a path can ever be.
The iterative process ends if $p_v\in(\alpha p_u-\varepsilon,\alpha p_u]$, and it is straightforward to check that such process will end as long as $\alpha\in(0,p^\ast]$.
Figure~\ref{fig:scalingdown} illustrates the probability scaling down gadget.
\begin{figure}
\centerline{\includegraphics{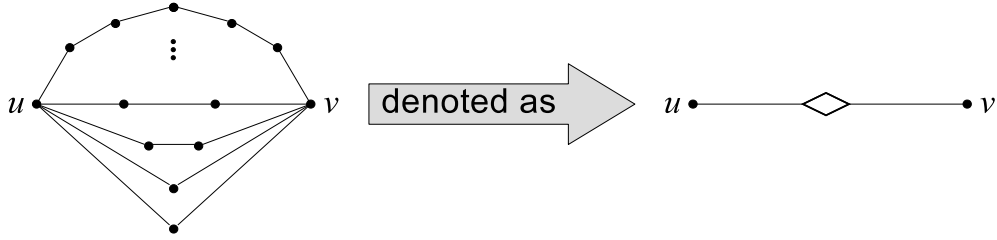}}
\caption{The probability scaling down gadget}
\label{fig:scalingdown}
\end{figure}

The size of the probability scaling down gadget depends on the influence function $f$ and the small constant $\varepsilon$.
Since $f$ is fixed in advance, the size of this gadget is constant.
\end{proof}

\begin{remark}\label{symmetricremark}
The probability scaling down gadget is symmetric.
Given $p_v=\alpha p_u$, then $p_u=\alpha p_v$ if $v$ becomes the input and $u$ becomes the output.
\end{remark}

\subsubsection{Probability Filter Gadget}
\label{subsect:pfg}
Based on the probability scaling down gadget, we can construct the following \emph{probability filter gadget}.
\begin{definition}\label{defi:probabilityfiltergadget}
A \emph{$(\Lambda,p_1,p_2,\varepsilon_1,\varepsilon_2,f)$-probability filter gadget} takes $\Lambda$ vertices as input, and outputs a vertex such that
  \begin{enumerate}
    \item if each vertex in the $\Lambda$ inputs is infected independently with a same probability less than $p_1$, then the vertex on the output end will be infected with a probability less than $\varepsilon_1$;
    \item if each vertex in the $\Lambda$ inputs is infected independently with a same probability in $(p_1,p_2]$, then the vertex on the output end will be infected with a probability in $(p_2-\varepsilon_2,p_2]$.
\end{enumerate}
\end{definition}

We aim to show the following lemma in this subsection.
\begin{lemma}\label{filtersize}
  Given any 2-quasi-submodular influence function $f$ with $a_1>0$, any constant $\varepsilon_2>0$, any $\varepsilon_1>0$, and any ratio $r>0$, we can construct a $(\Lambda,p_1,p_2,\varepsilon_1,\varepsilon_2,f)$-probability filter gadget with $p_2/p_1>r$ and $\Lambda=O((1/\varepsilon_1)^c)$, and this probability filter gadget contains $O((1/\varepsilon_1)^c)$ vertices and $O((1/\varepsilon_1)^c)$ edges, where $c$ is a constant.
\end{lemma}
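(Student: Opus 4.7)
The plan is to build a depth-$L$ balanced binary tree whose $\Lambda=2^L$ leaves are the inputs, and to iterate a single level-wise transformation: at each internal node, attach the two sibling vertices to a fresh ``amplifier'' vertex $v$ by two edges, then run $v$ through a constant-size $(\alpha,\varepsilon,f)$-probability scaling-down gadget supplied by Lemma~\ref{lem:psdg}. If the two siblings are infected independently with probability $p$, then $v$ is infected with probability $g(p):=2a_1 p(1-p)+a_2 p^2 = 2a_1 p + (a_2-2a_1)p^2$, and the downstream output of the scaling-down gadget carries probability $h(p):=\alpha g(p)$. The crucial point is that 2-quasi-submodularity $(a_2>2a_1)$ makes the leading coefficient $(a_2-2a_1)$ of $h$ strictly positive, so $h$ is \emph{strictly convex}; this is precisely what allows us to install a non-trivial unstable fixed point and thus threshold behaviour. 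Submodularity $(a_2\le 2a_1)$ would make $h$ concave (or linear), and iteration would trivially contract to $0$.

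The core analytic step is to choose $\alpha$ so that $h$ has an interior unstable fixed point $p^\dagger\in(0,1)$ with $h'(0)=2\alpha a_1<1$ and $h'(p^\dagger)>1$, while the iterates starting above $p^\dagger$ are driven to a well-defined stable upper value $p^{**}$. For this I plan to use the explicit formula $p^\dagger=(1/\alpha-2a_1)/\bigl(\alpha(a_2-2a_1)\bigr)$, which shows $p^\dagger\to 0$ as $\alpha\to 1/(2a_1)$; when $1/(2a_1)$ does not lie inside the admissible range $(0,p^\ast]$ of scaling factors, the remedy is to pre-boost the per-level derivative at $0$ by replacing the basic two-input amplifier with a constant-fan-in variant (combining a constant number of two-input amplifiers) so that the effective linear rate near $0$ can be pushed above any desired value before scaling. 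Once $h$ has the desired S-shape, a standard geometric-convergence analysis shows that for every $p$ sufficiently below $p^\dagger$ the iterate $h^{(L)}(p)$ shrinks with rate at most $h'(0)<1$, and for every $p$ strictly above $p^\dagger$ it converges geometrically to $p^{**}$. Choosing $L=\Theta(\log(1/\varepsilon_1))$ then forces the below-threshold output below $\varepsilon_1$ and drives the above-threshold output into an $\varepsilon_2$-neighbourhood of $p^{**}$.

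The filter output $p_2$ is finally tuned by appending one last probability scaling-down gadget with factor $\alpha'\in(0,p^\ast]$ to rescale $p^{**}$ to the prescribed target; setting $p_1$ equal to $p^\dagger$ (or any value just below it), the ratio $p_2/p_1 = \alpha' p^{**}/p^\dagger$ can be made to exceed any prescribed $r$ by pushing $p^\dagger$ toward $0$ via the choice of $\alpha$ (and, if needed, the constant-fan-in preprocessing from the previous paragraph). For the size accounting, the tree has $2^L-1$ internal nodes, each contributing one amplifier vertex and a constant-size scaling-down gadget, so the total number of vertices and the total number of edges are both $O(2^L)=O((1/\varepsilon_1)^c)$ where $c$ is a constant determined by $f$ and $r$; the same bound governs $\Lambda$.

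The main obstacle, and where the proof requires the most care, is the global dynamical-system analysis: one must argue that $h$ genuinely admits an unstable fixed point in the interior of $(0,1)$ together with a usable upper attractor for \emph{every} 2-quasi-submodular $f$ with $a_1>0$---in particular when $a_2<1$ or $p^\ast<1$ forces the admissible window of $\alpha$ to be narrow. This is exactly where the strict positivity of $a_2-2a_1$ must be leveraged and balanced against the Lemma~\ref{lem:psdg} constraint $\alpha\le p^\ast$; handling the corner cases (via the constant-fan-in pre-amplification mentioned above) is the only non-routine part of the construction.
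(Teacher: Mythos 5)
Your proposal has a genuine gap at its core: a fan-in-$2$ amplifier cannot produce the required threshold behavior, no matter where you place the scaling-down gadget, so the per-level map never has an interior unstable fixed point and the whole filter contracts to $0$ from every starting probability.

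Concretely, your per-level map is $h(p)=\alpha\,g(p)$ with $g(p)=2a_1p+(a_2-2a_1)p^2$. You are right that $a_2>2a_1$ makes $g$ strictly convex, but convexity together with $g(0)=0$ forces $g(p)\le p\,g(1)=p\,a_2$ for all $p\in[0,1]$ (the graph of a convex function through the origin lies below its chord to $(1,g(1))$). Since Lemma~\ref{lem:psdg} requires $\alpha\le p^\ast\le1$ and the influence function has codomain $[0,1]$ so $a_2\le 1$, you always have $h(p)\le\alpha a_2\,p\le p$, with strict inequality for $p\in(0,1)$ except in the degenerate case $\alpha=a_2=1$ (and even there the only fixed point is $p=1$, not interior). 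In other words $h$ is a contraction toward $0$ on all of $(0,1)$, so the two crossings $p_1<p_2$ of $y=x$ that the probability filter gadget needs simply do not exist. Solving $h(p)=p$ gives $p^\dagger=(1/\alpha-2a_1)/(a_2-2a_1)$, and requiring $p^\dagger<1$ forces $\alpha>1/a_2\ge1\ge p^\ast$; this lower endpoint of the admissible window, not the upper endpoint $1/(2a_1)$ you flag, is what makes the construction infeasible. Your proposed fix---``pre-boosting the linear rate near $0$'' by composing constantly many $2$-input amplifiers---does not help: iterating the contraction $g$ only makes the composite map smaller, and in any case the obstruction is not the derivative at $0$ (which you are free to tune downward) but the fact that the entire range of $h$ lies strictly below the identity.

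The paper's construction avoids this by using a \emph{single} fan-in-$h$ amplifier with $h$ chosen large (not assembled from $2$-input blocks) and by applying the $\alpha$-scaling \emph{before} the amplifier, so the per-level map is $y(x)=\sum_{i=1}^{h}\binom{h}{i}a_i(\alpha x)^i(1-\alpha x)^{h-i}$ with $h\alpha a_1=1-\delta$. This keeps $y'(0)=1-\delta<1$ while letting the expected number of infected inputs $h\alpha x=(1-\delta)x/a_1$ grow without bound as $h$ grows, so $y(x)$ saturates near $p^\ast$ at moderate values of $x$ well inside $(0,1)$. That saturation is exactly what creates the S-shape of Figure~\ref{fig:filtercurve} and the two crossings $p_1<p_2$ (Propositions~\ref{lemmap1} and~\ref{lemmap2}); a $2$-input amplifier followed by a scaling by $\alpha\le p^\ast$ can never generate it. Repairing your argument would essentially mean replacing the binary tree of $2$-input amplifiers with the paper's fan-in-$h$ probability separation block, at which point the two constructions coincide.
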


To construct the probability filter gadget, we first construct the gadget shown in Figure~\ref{fig:filter}, which is the building block of this gadget.
We will call this building block \emph{probability separation block}.
As shown in the figure, this building block takes $h$ vertices as input and outputs one vertex.
Particularly, we apply $h$ probability scaling down gadgets to ``scale down'' the probabilities of all input vertices' infection by a factor of $\alpha$, and then connect those vertices to the output vertex.
\begin{figure}
\centerline{\includegraphics{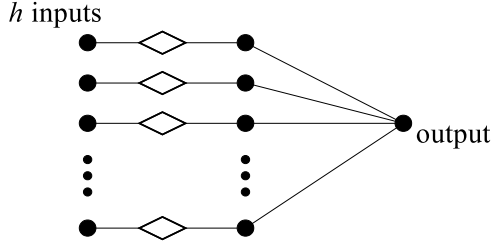}}
\caption{The probability separation block}
\label{fig:filter}
\end{figure}

The probability filter gadget consists of $\ell$ layers such that the $i$-th layer consists of $h^{\ell-i}$ such probability separation blocks,
where the output vertices of every $h$ probability separation blocks in the $i$-th layer are the input of a probability separation block in the $(i+1)$-th layer.
Because there are $h^{\ell-1}$ probability separation blocks in the first layer,
the probability filter gadget takes $\Lambda=h^\ell$ vertices as input.
The probability filter gadget outputs a single vertex after $\ell$ layers.
We will tune the value of $\alpha$, $h$ and $\ell$ such that the two properties in Definition~\ref{defi:probabilityfiltergadget} hold for certain thresholds $p_1$ and $p_2$.

For each probability separation block, suppose each of the $h$ vertices in the input are infected with probability $x$ independently,
and let $y=y(x)$ be the probability that the output vertex is infected.
We aim to tune the value of $\alpha$ and $h$ such that the graph of $y(x)$ looks like Figure~\ref{fig:filtercurve}.

By considering the number of infected neighbors of the output vertex, it is straightforward to see that
\begin{equation}\label{FilterEquation}
y=\sum_{i=1}^h\binom{h}{i}a_i(\alpha x)^i(1-\alpha x)^{h-i}.
\end{equation}
For sufficiently small $x$, we have
$$y=h\alpha a_1x+\frac{h(h-1)}2\alpha^2(a_2-2a_1)x^2+o(x^2).$$
Choosing a sufficiently small constant $\delta>0$ and choosing $\alpha$ ($h$ will be set in the future) to satisfy $h\alpha a_1=1-\delta$, we have
$$y-x=-\delta x+\frac{h(h-1)}2\alpha^2(a_2-2a_1)x^2+o(x^2).$$
Since $y-x=-\delta x+o(x)$, we can see that $y<x$ for small enough $x$. On the other hand, for sufficiently large $h$ and sufficiently small $\delta$ (and adjusting $\alpha$ such that $h\alpha a_1=1-\delta$ still holds\footnote{According to Definition~\ref{defi:psdg} and Lemma~\ref{lem:psdg}, given the scale $\alpha^\ast$ for which we want to adjust to, we can construct a probability scaling down gadget such that the actual scale $\alpha$ is arbitrarily close to $\alpha^\ast$. Although we cannot make the adjustment exact, a close enough approximation would still satisfy our purpose here, as all we want is $\delta$ to be small enough, or $h\alpha a_1$ to be close enough to $1$.}), we have
$$\frac{h(h-1)}2\alpha^2=\frac12h^2\alpha^2-\frac h2\alpha^2=\frac{(1-\delta)^2}{2a_1^2}-\frac{(1-\delta)^2}{2ha_1^2}>\frac1{3a_1^2}.$$
We can see from the following that $y>x$ after a while as $x$ increases.
  \begin{align*}
  x_1=\frac{6a_1^2}{a_2-2a_1}\delta\qquad\Longrightarrow\qquad y(x_1)-x_1&>-\delta x_1+\frac{a_2-2a_1}{3a_1^2}x_1^2+o(x_1^2)\\
   &=\frac{6a_1^2}{a_2-2a_1}\delta^2+o(\delta^2)\\
   &>0.
  \end{align*}
Notice that the 2-quasi-submodularity of $f$ makes sure $a_2>2a_1$ such that $x_1$ is positive.

We have seen that $y<x$ for small enough $x$, and $y>x$ after $x$ increases.
There must be a threshold $p_1$ such that $y=x$ at $x=p_1$ by the Intermediate Value Theorem.
On the other hand, $y$ is upper bounded by $p^\ast$ while $x$ can be as large as $1$, so $y\leq x$ for sufficiently large $x$.
The Intermediate Value Theorem suggests there exists another threshold $x=p_2>p_1$ such that $y=x$.
Consequently, Figure~\ref{fig:filtercurve} indeed represents the graph of $y(x)$ for the proper choices of $\alpha$ and $h$.
\begin{figure}
\centerline{\includegraphics[scale=0.5]{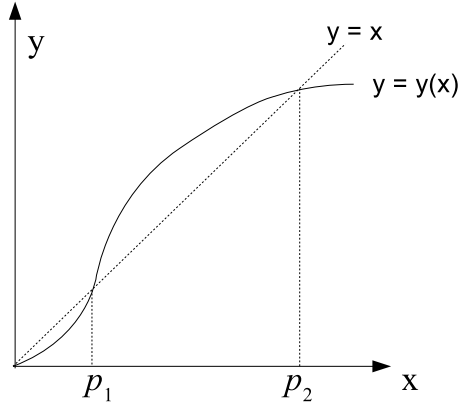}}
\caption{The output probability $y$ versus the input probability $x$}
\label{fig:filtercurve}
\end{figure}

Finally, from the graph in Figure~\ref{fig:filtercurve}, we can see that the infection probability of the output vertices in the $i$-th layer increases as $i$ increases, if all the $\Lambda=h^\ell$ input vertices are infected with an independent probability larger than $p_1$.
In contrast, the infection probability of the output vertices in the $i$-th layer decreases as $i$ increase, if all the $\Lambda=h^\ell$ input vertices are infected with an independent probability less than $p_1$.
By setting $\ell$ large enough, we can make both (1) and (2) in Definition~\ref{defi:probabilityfiltergadget} hold.

Before we move on, we show some properties of the thresholds $p_1$ and $p_2$, and our objective is to show the following proposition which is a part of Lemma~\ref{filtersize}.
\begin{proposition}\label{lemmaratio}
  For any large ratio $r>0$, we can find $h$ and $\alpha$ such that
  $p_2/p_1>r$.
\end{proposition}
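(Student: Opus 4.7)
The plan is to exhibit a two-parameter family of choices $(\delta, h)$, with $\alpha := (1-\delta)/(h a_1)$ so that the constraint $h\alpha a_1 = 1-\delta$ assumed in the excerpt holds, and then to show that $p_1 \to 0$ while $p_2$ stays bounded below by a positive constant as $\delta \to 0$ and $h \to \infty$. Since $p_1$ and $p_2$ are both continuous functions of $(\delta, h)$ through the fixed-point equation $y(x) = x$, it will then suffice to pick $\delta$ small and $h$ large to achieve an arbitrarily large ratio.

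For the upper bound on $p_1$, I would reuse the Taylor expansion already derived in the excerpt: near $x = 0$,
$$y(x) - x = -\delta x + \frac{h(h-1)}{2}\alpha^2 (a_2 - 2a_1) x^2 + o(x^2).$$
Using $h\alpha = (1-\delta)/a_1$, the quadratic coefficient tends to $(a_2 - 2a_1)/(2 a_1^2)$, a strictly positive constant by 2-quasi-submodularity. Setting the leading terms to zero yields $p_1 \sim \frac{2 a_1^2 \delta}{(1-\delta)^2 (a_2 - 2a_1)}$, so $p_1 = \Theta(\delta)$ uniformly for large $h$, once one checks that the $o(x^2)$ remainder is controlled when $x = \Theta(\delta)$ and $h$ is large.

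For the lower bound on $p_2$, I would pass to the Poisson limit. Substituting $t = \alpha x$ turns the defining sum into $y = \E[a_Z]$, where $Z \sim \mathrm{Binomial}(h,t)$, and the fixed-point equation becomes $\E[a_Z] = t/\alpha = h t a_1/(1-\delta)$. Writing $\lambda := ht$ and letting $h\to\infty$, $Z$ converges in total variation to $Z_\lambda \sim \mathrm{Poisson}(\lambda)$; since each $a_i\in[0,1]$, $\E[a_Z] \to g(\lambda) := \E[a_{Z_\lambda}]$ uniformly on compact sets of $\lambda$. The limiting fixed-point equation $g(\lambda) = a_1 \lambda/(1-\delta)$ at $\delta = 0$ has, near $\lambda=0$, $g(\lambda) - a_1\lambda = \tfrac{1}{2}(a_2 - 2a_1)\lambda^2 + O(\lambda^3) > 0$, while $g(\lambda) \le p^\ast < a_1\lambda$ for large $\lambda$, so by the Intermediate Value Theorem there exists $\lambda^\ast > 0$ with $g(\lambda^\ast) = a_1\lambda^\ast$. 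This gives a limiting value $p_2^{(0)} = a_1\lambda^\ast > 0$. By joint continuity in $\delta$ and the uniform Poisson approximation, for all sufficiently small $\delta$ and sufficiently large $h$ one has $p_2 \geq a_1 \lambda^\ast / 2$. Combined with $p_1 = \Theta(\delta)$, the ratio $p_2/p_1$ exceeds any prescribed $r$.

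The main obstacle will be making the two approximations uniformly rigorous: showing that the genuine finite-$h$ fixed point $p_2$ really does track $a_1 \lambda^\ast$, and that the $o(x^2)$ remainder in the $p_1$ expansion does not spoil the estimate. Both should be handled by standard total-variation bounds between $\mathrm{Binomial}(h, \lambda/h)$ and $\mathrm{Poisson}(\lambda)$ together with the fact that $a_i \in [0,1]$ makes $\E[a_Z]$ $1$-Lipschitz in the total-variation metric, yielding error terms that vanish as $h\to\infty$ uniformly on compact $\lambda$-intervals.
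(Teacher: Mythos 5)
Your proposal is correct and follows the same overall strategy as the paper: use the local Taylor expansion near $x=0$ to show $p_1 = \Theta(\delta)$ (this is exactly Proposition~\ref{lemmap1}), and separately show that $p_2$ stays bounded away from $0$ as $\delta\to 0$ and $h\to\infty$, then take $\delta$ small. The only real divergence is the mechanism for the $p_2$ lower bound. The paper's Proposition~\ref{lemmap2} bounds $y$ from below by replacing $a_i$ with $a_2$ for $i\geq 3$, then applies $(1-\alpha x)^h \le e^{-h\alpha x}$ (concavity of $\ln$) to get the closed-form condition $a_2(1-e^{-\gamma}-\gamma e^{-\gamma})-a_1(\gamma-\gamma e^{-\gamma})>0$, whose positivity near $\gamma=0$ is checked via $\Phi(0)=\Phi'(0)=0$, $\Phi''(0)=a_2-2a_1>0$. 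You instead pass to the exact Poisson limit $g(\lambda)=\E[a_{\mathrm{Poisson}(\lambda)}]$, argue $g(\lambda)-a_1\lambda = \tfrac12(a_2-2a_1)\lambda^2 + O(\lambda^3)>0$ near $0$ and $g(\lambda)<a_1\lambda$ for large $\lambda$, and conclude by the IVT plus uniform continuity in $(\delta,1/h)$. The paper's version is essentially the Poisson limit of the truncated $(a_1,a_2)$-bound, so the two estimates are closely related; yours is slightly more conceptual but needs a uniform total-variation convergence argument that the paper avoids by working with explicit inequalities. Both yield the conclusion; the paper's is marginally more self-contained. One small care point in your version that you already flag: you should argue the relevant crossing persists under small perturbations, which follows because you exhibit a $\lambda_0\in(0,\lambda^\ast)$ with $g(\lambda_0)-a_1\lambda_0$ strictly positive, so for $\delta$ small and $h$ large the finite-$h$ curve is still positive at the corresponding $x_0=a_1\lambda_0/(1-\delta)$ and eventually negative, forcing a fixed point $p_2\ge x_0 \ge a_1\lambda_0$, while $p_1 < x_0$ once $\delta$ is small enough.
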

By the calculation above, the proposition below follows immediately.
\begin{proposition}\label{lemmap1}
  $p_1<\frac{6a_1^2}{a_2-2a_1}\delta$.
\end{proposition}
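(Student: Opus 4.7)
The plan is to read the bound on $p_1$ directly off the two sign computations that immediately precede the proposition. Recall that $p_1$ is defined as the first positive fixed point of the map $y(x)$ given in equation~(\ref{FilterEquation}), i.e.\ the smallest $x>0$ at which the graph $y(x)$ crosses the diagonal. Since $y$ is a polynomial in $x$, it is continuous, so it suffices to exhibit values $x<x_1$ where $y(x)<x$ and to verify $y(x_1)>x_1$; the Intermediate Value Theorem then forces $p_1\in(0,x_1)$ with $x_1=\frac{6a_1^2}{a_2-2a_1}\delta$.

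First I would record that, by the Taylor expansion already computed, $y(x)-x=-\delta x+o(x)$ as $x\to0$, so there is a right-neighborhood of $0$ on which $y(x)<x$. Next, at $x=x_1$, the same displayed calculation gives
\[
y(x_1)-x_1 \;>\; -\delta\,x_1 + \frac{a_2-2a_1}{3a_1^2}\,x_1^2 + o(x_1^2) \;=\; \frac{6a_1^2}{a_2-2a_1}\delta^2 + o(\delta^2),
\]
where we use (i) the lower bound $\tfrac{h(h-1)}{2}\alpha^2>\tfrac{1}{3a_1^2}$, valid for sufficiently large $h$ and sufficiently small $\delta$ under the normalization $h\alpha a_1=1-\delta$, and (ii) the 2-quasi-submodularity hypothesis $a_2>2a_1$, which both makes $x_1$ positive and makes the quadratic coefficient positive. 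Choosing $\delta$ small enough that the $o(\delta^2)$ remainder cannot overpower the leading positive $\frac{6a_1^2}{a_2-2a_1}\delta^2$ term yields $y(x_1)-x_1>0$.

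Combining the two sign statements, continuity of $y$ forces a crossing of $y(x)=x$ at some $x^\star\in(0,x_1)$. Because $p_1$ is by construction the smallest such crossing, $p_1\le x^\star<x_1=\frac{6a_1^2}{a_2-2a_1}\delta$, which is the desired inequality.

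The only genuinely delicate point is bookkeeping the implicit constants: one must fix the regime of $h,\alpha,\delta$ (large $h$, small $\delta$, $h\alpha a_1=1-\delta$) before invoking the two one-sided bounds, and confirm that in this regime both the lower bound on $\tfrac{h(h-1)}{2}\alpha^2$ and the smallness of the error terms at $x_1$ hold simultaneously. I do not expect this to be an obstacle, since both conditions are monotone in $\delta\to 0$ and $h\to\infty$; once fixed, the remainder of the argument is a one-line application of IVT.
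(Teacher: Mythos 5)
Your proposal is correct and takes essentially the same approach the paper intends: the paper states that the proposition ``follows immediately'' from the preceding sign computations, and your write-up simply spells out that implicit argument --- $y(x)<x$ near $0$, $y(x_1)>x_1$ at $x_1=\frac{6a_1^2}{a_2-2a_1}\delta$, hence by continuity the first crossing $p_1$ lies in $(0,x_1)$.
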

We also have the following lower bound for $p_2$.
\begin{proposition}\label{lemmap2}
  By choosing $h$ sufficiently large and $\delta$ sufficiently small, we have $p_2>a_1\gamma$ for any $\gamma$ such that
  $$a_2(1-e^{-\gamma}-\gamma e^{-\gamma})-a_1(\gamma-\gamma e^{-\gamma})>0.$$
\end{proposition}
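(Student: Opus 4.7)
The plan is to analyze the fixed-point equation $y(x) = x$ where $y$ is given by equation (\ref{FilterEquation}), and show that $y(a_1\gamma) > a_1\gamma$ provided $h$ is large and $\delta$ is small. From Figure~\ref{fig:filtercurve}, $y(x) > x$ on the interval $(p_1, p_2)$; since Proposition~\ref{lemmap1} already lets us make $p_1 < a_1\gamma$ (any $\gamma$ satisfying the hypothesis must be strictly positive, since at $\gamma=0$ the expression vanishes, so $a_1\gamma$ is a fixed positive quantity, whereas $p_1 \to 0$ with $\delta$), it will follow that $a_1\gamma$ lies in $(p_1, p_2)$, hence $p_2 > a_1\gamma$.

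For the main estimate, I would plug $x = a_1\gamma$ into $y$ and use the substitution $\alpha = (1-\delta)/(ha_1)$, so that $q := \alpha a_1 \gamma = (1-\delta)\gamma/h$. Then
\[
y(a_1\gamma) = \sum_{i=1}^{h} \binom{h}{i}\, a_i\, q^{i}(1-q)^{h-i}.
\]
Using the monotonicity of $(a_i)$ and $a_i \geq a_2$ for all $i \geq 2$, I split off the $i=1$ term to obtain the lower bound
\[
y(a_1\gamma) \;\geq\; a_1\, hq(1-q)^{h-1} + a_2\Bigl(1 - (1-q)^{h} - hq(1-q)^{h-1}\Bigr).
\]

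Next I would invoke the Poisson limit: as $h \to \infty$ with $hq = (1-\delta)\gamma$ held fixed, $(1-q)^{h} \to e^{-(1-\delta)\gamma}$ and $hq(1-q)^{h-1} \to (1-\delta)\gamma\, e^{-(1-\delta)\gamma}$. Taking $\delta \to 0$ as well, the lower bound tends to
\[
a_1\gamma e^{-\gamma} + a_2\bigl(1 - e^{-\gamma} - \gamma e^{-\gamma}\bigr).
\]
The desired inequality $y(a_1\gamma) > a_1\gamma$ therefore reduces, in the limit, to
\[
a_2\bigl(1 - e^{-\gamma} - \gamma e^{-\gamma}\bigr) - a_1\bigl(\gamma - \gamma e^{-\gamma}\bigr) > 0,
\]
which is precisely the hypothesis of the proposition. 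Because this limiting gap is strictly positive, it is preserved by choosing $h$ large enough and $\delta$ small enough (while keeping $h\alpha a_1 = 1 - \delta$ and adjusting the probability scaling down gadgets to realize the corresponding $\alpha$ up to arbitrarily small error, as permitted by Lemma~\ref{lem:psdg}), so a strict inequality holds for the finite construction.

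The main obstacle is purely bookkeeping: one must choose the order of limits so that the Poisson approximation error, the $\delta$ correction (shifting $\gamma$ to $(1-\delta)\gamma$), and the inexact realization of $\alpha$ by the probability scaling down gadget are all absorbed into the strict positivity gap of the hypothesis. None of these introduces a genuine difficulty since the hypothesis gives a fixed positive margin, and each error term can be made arbitrarily small by first fixing the margin, then choosing $\delta$, and then choosing $h$. Once $y(a_1\gamma) > a_1\gamma$ is established, the conclusion $p_2 > a_1\gamma$ is immediate from the shape of $y(x)$ illustrated in Figure~\ref{fig:filtercurve}.
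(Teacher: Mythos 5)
Your proof takes essentially the same route as the paper: both split off the $i=1$ term and lower bound the remaining terms using $a_i \ge a_2$ for $i\ge 2$, then show that the resulting expression tends, as $h\to\infty$ and $\delta\to 0$ with $h\alpha a_1 = 1-\delta$, to $a_1\gamma e^{-\gamma} + a_2\bigl(1-e^{-\gamma}-\gamma e^{-\gamma}\bigr)$, finishing via the strictly positive margin supplied by the hypothesis. The only cosmetic difference is that the paper passes through the intermediate concavity bound $(1-q)^h\le e^{-hq}$ before taking limits, whereas you invoke the Poisson limit directly; the concluding step and the handling of the inexact $\alpha$ from the scaling gadget match the paper's reasoning.
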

\begin{proof}
  By replacing all $a_3,a_4,\ldots,a_h$ to $a_2$ in Equation~(\ref{FilterEquation}), we have
  \begin{align*}
    y&\geq\sum_{i=1}^h\binom{h}{i}a_2(\alpha x)^i(1-\alpha x)^{h-i}-\binom{h}{1}(a_2-a_1)\alpha x(1-\alpha x)^{h-1}\\
    &=a_2\left(\sum_{i=0}^h\binom{h}{i}(\alpha x)^i(1-\alpha x)^{h-i}-(1-\alpha x)^h\right)-h(a_2-a_1)\alpha x(1-\alpha x)^{h-1}\\
    &=a_2-a_2(1-\alpha x)^h-h(a_2-a_1)\alpha x(1-\alpha x)^{h-1}\\
    &=a_2-a_2\exp(h\ln(1-\alpha x))-h(a_2-a_1)\alpha x\exp((h-1)\ln(1-\alpha x))\\
    &\geq a_2-a_2\exp(-h\alpha x)-h(a_2-a_1)\alpha x\exp(-\alpha x(h-1))\tag{concavity of $\ln$ function}.
  \end{align*}
  Letting $x=a_1\gamma$, we have
  \begin{align*}
    y-x&\geq a_2-a_2\exp(-\gamma(1-\delta))-(a_2-a_1)(1-\delta)\gamma\exp\left(\gamma(1-\delta)\left(\frac1h-1\right)\right)-a_1\gamma\tag{since $x=a_1\gamma$ and $h\alpha a_1=1-\delta$}\\
    &>a_2(1-e^{-\gamma}-\gamma e^{-\gamma})-a_1(\gamma-\gamma e^{-\gamma})-\epsilon,
  \end{align*}
  where in the last step, for any $\epsilon>0$, we can find small enough $\delta$ and large enough $h$ to make the inequality holds.
  Rigorously, we have $1-\delta\rightarrow1$ and $\frac1h\rightarrow0$ for $\delta\rightarrow0$ and $h\rightarrow\infty$. The expression in the second last step is a continuous function, which has limit $a_2(1-e^{-\gamma}-\gamma e^{-\gamma})-a_1(\gamma-\gamma e^{-\gamma})$, and the last step is obtained by the definition of limit.

  Therefore, $y-x>0$ for any $x>p_1$ with $x=a_1\gamma$, where $\gamma$ satisfies
  $$a_2(1-e^{-\gamma}-\gamma e^{-\gamma})-a_1(\gamma-\gamma e^{-\gamma})>0,$$
  which implies the proposition.
\end{proof}

We remark that there always exists $\gamma$ satisfying the inequality in Proposition~\ref{lemmap2}.
To see this, we show that $\Phi(\gamma):=a_2(1-e^{-\gamma}-\gamma e^{-\gamma})-a_1(\gamma-\gamma e^{-\gamma})>0$ when $\gamma$ is sufficiently small.
By straightforward calculations, we have $\Phi(0)=\Phi'(0)=0$ and $\Phi''(0)=a_2-2a_1>0$, which means $\Phi(0)=0$ and $\Phi$ is increasing on $[0,\gamma_0)$ for some small $\gamma_0$, which further implies that $\Phi$ is positive on $[0,\gamma_0)$.

Proposition~\ref{lemmap1} implies that we can construct the probability filter gadget with arbitrarily small $p_1$ by setting $\delta$ small.
On the other hand, Proposition~\ref{lemmap2} implies that $p_2$ can be made larger than some number depending only on $a_1$ and $a_2$, which in particular can be considerably larger than $p_1$, which yields Proposition~\ref{lemmaratio}.

Finally, we are ready to show Lemma~\ref{filtersize}.
\begin{proof}[Proof of Lemma~\ref{filtersize}]
  The possibility of this construction is straightforward, as the construction is already made explicit in this section.
  It remains to show that the gadget contains $O((1/\varepsilon_1)^c)$ vertices and $O((1/\varepsilon_1)^c)$ edges, and $\Lambda=O((1/\varepsilon_1)^c)$.

  Since $\varepsilon_2$ is a constant, we only need constantly many layers such that the input probability $x$ increases to more than $p_2-\varepsilon_2$, if $x$ is initially larger than $p_1$.

  To investigate how many layers are needed to make $x$ decreases to less than $\varepsilon_1$ in the case $x$ is initially smaller than $p_1$,
  recall that in each layer of the probability filter gadget, the input probability $x$ is updated to $y$ such that $y-x=-\delta x+o(x)$ for sufficiently small $x$, so each time $x$ is decreased by a factor of $(1-\delta)$.
  After a constant number of layers, $x$ will be sufficiently small such that the term $o(x)$ is negligible, and after another $\frac{\log(1/\varepsilon_1)}{\log(1/(1-\delta))}$ layers, $x$ will decrease by a factor of $(1-\delta)^{\frac{\log(1/\varepsilon_1)}{\log(1/(1-\delta))}}=\varepsilon_1$, which makes the value of $x$ much smaller than $\varepsilon_1$.
  Therefore, we need at most $\ell=O(\log(1/\varepsilon_1))$ layers.
  Let $\chi_v,\chi_e$ be the number of vertices and edges respectively in a probability separation block shown in Figure~\ref{fig:filter}, and they are both constants according to Lemma~\ref{lem:psdg}.
  The total number of vertices in a probability filter gadget is
  $$\sum_{i=1}^\ell\chi_v\cdot h^{\ell-i}=\chi_v\frac{h^\ell-1}{h-1}=\Theta\left(h^\ell\right)=O((1/\varepsilon_1)^c),$$
  and the total number of edges has the same asymptotic bound by the same calculation above, with $\chi_v$ changed to $\chi_e$.
  Thus, we conclude that the gadget contains $O((1/\varepsilon_1)^c)$ vertices and $O((1/\varepsilon_1)^c)$ edges.

  For $\Lambda$, we have $\Lambda=h^\ell=O((1/\varepsilon_1)^c)$ by our construction, which concludes the last part of the lemma.
\end{proof}

\subsection{Construction of the AND Gadget with $I=2$}
\label{sect:ANDgadgetwithtwoinputs}
In this section, we construct the AND gadget with parameter $I=2$.
The AND gadget makes use of a single probability filter gadget with the same choices of parameters $\Lambda$, $p_2$, $\varepsilon_1$, $\varepsilon_2$ and $f$.
The AND gadget takes two sets $I_1,I_2$ of vertices as inputs, and each set has $\Lambda=h^\ell$ vertices.
Let $I_1=\{u_1,u_2,\ldots,u_{\Lambda}\}$ and $I_2=\{v_1,v_2,\ldots,v_{\Lambda}\}$.
We create $\Lambda$ vertices $w_1,w_2,\ldots,w_{\Lambda}$ and create two edges $(u_i,w_i),(v_i,w_i)$ for each $i=1,2,\ldots,\Lambda$.
We apply the probability scaling down gadgets to create another $\Lambda$ vertices $w_1',w_2',\ldots,w_{\Lambda}'$ such that $p(w_i')=\beta p(w_i)$ for each $i=1,2,\ldots,\Lambda$, where $\beta$ is set to the value such that
$$\beta\varphi_T^+(p_0)<p_2,\qquad\beta\varphi_T^-(p_0)>p_1,\qquad\mbox{and}\qquad\beta\varphi_F^+(p_0)<p_1,$$
where
\begin{eqnarray*}
  \varphi_T^+(p_0)&=&(a_2-2a_1)\left(\frac{11}{10}p_0\right)^2+2a_1\left(\frac{11}{10}p_0\right),\\
  \varphi_T^-(p_0)&=&(a_2-2a_1)p_0^2+2a_1p_0,\\
  \varphi_F^+(p_0)&=&\frac{11}{20}(a_2-2a_1)p_0^2+\frac{16}{10}a_1p_0.
\end{eqnarray*}
The construction is shown in Figure~\ref{fig:ANDwithtwoinput}.

\begin{figure}
\centerline{\includegraphics{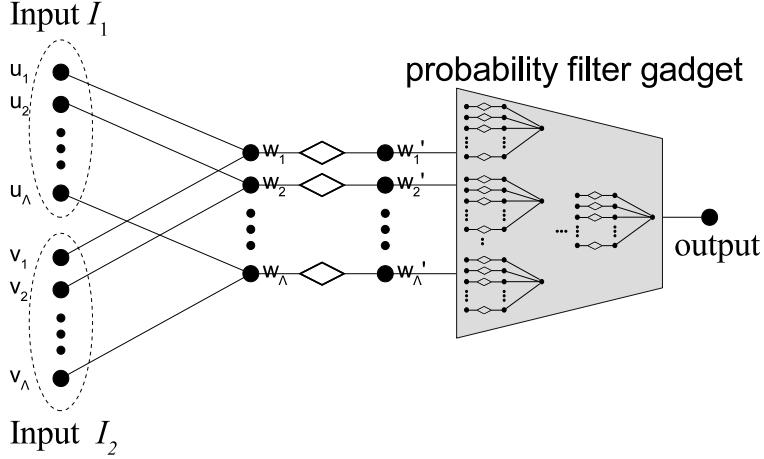}}
\caption{AND gadget with $I=2$}
\label{fig:ANDwithtwoinput}
\end{figure}

Notice that if all $u_i$ and $v_i$ are infected with an independent probability in the interval $(p_0,\frac{11}{10}p_0)$, that is, the inputs $I_1,I_2$ fall into case (2) in Definition~\ref{defi:AND}, $w_i$ will be infected with probability
\begin{align*}
p(w_i)&=a_2p(u_i)p(v_i)+a_1p(u_i)(1-p(v_i))+a_1p(v_i)(1-p(u_i))\\
&=(a_2-2a_1)p(u_i)p(v_i)+a_1p(u_i)+a_1p(v_i),
\end{align*}
which is in the interval $\left(\varphi_T^-(p_0),\varphi_T^+(p_0)\right)$.

On the other hand, if one of $u_i$ and $v_i$ is infected with probability less than $\frac12p_0$ and the other one is infected with probability less than $\frac{11}{10}p_0$, that is, the inputs $I_1,I_2$ fall into case (1) in Definition~\ref{defi:AND}, $w_i$ will be infected with probability
\begin{align*}
p(w_i)&=(a_2-2a_1)p(u_i)p(v_i)+a_1p(u_i)+a_1p(v_i)\\
&<\frac{11}{20}(a_2-2a_1)p_0^2+\frac{16}{10}a_1p_0\\
&=\varphi_F^+(p_0).
\end{align*}
Given that $\left(\beta\varphi_T^-(p_0),\beta\varphi_T^+(p_0)\right)\subseteq(p_1,p_2)$ and $\beta\varphi_F^+(p_0)<p_1$,
it is now straightforward to check that the two properties (1) and (2) in Definition~\ref{defi:AND} hold for $I=2$, since the probability filter gadget will ``filter'' the two probabilities such that one goes to a value less than $\varepsilon_1$ and the other goes into $(p_2-\varepsilon_2,p_2]$.

By our construction of probability scaling down gadget, the factor must satisfy $\beta\leq p^\ast$.
It seems worrying that $\left(\varphi_T^-(p_0),\varphi_T^+(p_0)\right)$ and $\varphi_F^+(p_0)$ will be both scaled down to smaller than $p_1$ even if we take maximum $\beta=p^\ast$.
Indeed, Proposition~\ref{lemmaratio} and Proposition~\ref{lemmap1} ensure that this cannot happen, as we can always make $p_1$ small enough by making $\delta$ small enough.
We remark here that the choice of $\delta$ depends on $p_0$ and $p^\ast$ (it needs to be considerably smaller than some polynomial of $p_0$ such that $\left(\varphi_T^-(p_0),\varphi_T^+(p_0)\right)$ and $\varphi_F^+(p_0)$ can be scaled down to different sides of $p_1$), where $p^\ast$ depends only on $f$.

Now we prove the following lemma, which is a special case of Lemma~\ref{ANDlemma} with $I=2$.
\begin{lemma}\label{ANDlemmawithtwoinputs}
  Given any 2-quasi-submodular function $f$ with $a_1>0$ and any constant threshold $p_0>0$, there exists a constant $p_2>0$ depending on $p_0$ and $f$ such that for any constant $\varepsilon_2>0$ and any $\varepsilon_1>0$, we can construct a $(2, \Lambda, p_0, p_2, \varepsilon_1,\varepsilon_2, f)$-AND gadget with $\Lambda=O\left((1/\varepsilon_1)^{c_1}\right)$,
  and the number of vertices and edges in this AND gadget are both $O\left((1/\varepsilon_1)^{c_1}\right)$,
  where $c_1$ is a constant.
\end{lemma}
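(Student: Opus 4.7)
The plan is to formalize the construction already sketched in Section~\ref{sect:ANDgadgetwithtwoinputs} and verify it against Definition~\ref{defi:AND} with $I=2$. First I would lay out the three-layer architecture. Set $\Lambda = h^\ell$ to match the input size of a suitable probability filter gadget (Lemma~\ref{filtersize}). For each $i\in\{1,\ldots,\Lambda\}$, introduce a conjunction vertex $w_i$ with the two edges $(u_i, w_i)$ and $(v_i, w_i)$. Attach a probability scaling down gadget (Lemma~\ref{lem:psdg}) with factor $\beta$ and a negligible slack to each $w_i$, producing $w_i'$ with $p(w_i') \approx \beta\, p(w_i)$. Finally feed $w_1',\ldots,w_\Lambda'$ into the probability filter gadget with parameters $(\Lambda, p_1, p_2, \varepsilon_1, \varepsilon_2, f)$, whose output vertex will be the output of the AND gadget.

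Next I would compute $p(w_i)$ in both regimes of Definition~\ref{defi:AND}. Since $w_i$ has exactly the two neighbors $u_i$ and $v_i$, a direct expansion gives
$$p(w_i) \;=\; (a_2 - 2a_1)\, p(u_i)\,p(v_i) + a_1\, p(u_i) + a_1\, p(v_i).$$
In case~(2), where each input lies in $(p_0, \tfrac{11}{10} p_0)$, monotonicity places $p(w_i) \in (\varphi_T^-(p_0), \varphi_T^+(p_0))$; in case~(1), where at least one input is below $\tfrac12 p_0$ and all inputs are below $\tfrac{11}{10} p_0$, the same identity yields $p(w_i) < \varphi_F^+(p_0)$. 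The scaling gadget multiplies these bounds by $\beta$, and the events across distinct $i$ remain independent because the subgadgets attached to different $w_i$ are vertex-disjoint. The filter gadget is then responsible for sharply separating the two cases, pushing one regime below $\varepsilon_1$ and the other into $(p_2 - \varepsilon_2, p_2]$, as required.

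The crucial step is choosing $\beta, p_1, p_2$ and the filter parameter $\delta$ so that the filter's lower threshold $p_1$ sits strictly between the two scaled regimes while the upper threshold $p_2$ accommodates the case-(2) bound:
$$\beta\, \varphi_F^+(p_0) \;<\; p_1 \;<\; \beta\, \varphi_T^-(p_0), \qquad \beta\, \varphi_T^+(p_0) \;\leq\; p_2.$$
A direct algebraic check gives $\varphi_T^-(p_0) - \varphi_F^+(p_0) = \tfrac{9}{20}(a_2 - 2a_1)\, p_0^2 + \tfrac{2}{5}\, a_1\, p_0 > 0$, where 2-quasi-submodularity enters through the factor $a_2 - 2a_1 > 0$. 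Proposition~\ref{lemmap1} lets me drive $p_1$ arbitrarily close to $0$ by shrinking $\delta$, while Proposition~\ref{lemmap2} keeps $p_2$ bounded below by a constant depending only on $f$. Thus I would first fix $\beta \leq \min\{p^\ast,\; p_2 / \varphi_T^+(p_0)\}$ small enough to satisfy the right-hand inequality, and then shrink $\delta$ until $p_1$ falls into the open interval $(\beta \varphi_F^+(p_0),\, \beta \varphi_T^-(p_0))$. I expect this simultaneous realization to be the main obstacle of the proof, because $\beta$ is capped at $p^\ast$ (inherent in Lemma~\ref{lem:psdg}) and $p_2$ is not freely adjustable; compatibility rests on the strict quadratic separation above together with the two propositions.

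Finally I would tally the size. Each conjunction vertex with its two incident edges contributes $O(1)$; each of the $\Lambda$ probability scaling down gadgets has $O(1)$ vertices and edges by Lemma~\ref{lem:psdg}; and the probability filter gadget has $O((1/\varepsilon_1)^c)$ vertices and edges with $\Lambda = O((1/\varepsilon_1)^c)$ inputs by Lemma~\ref{filtersize}. Summing yields a total of $O((1/\varepsilon_1)^{c_1})$ vertices and edges with $\Lambda = O((1/\varepsilon_1)^{c_1})$ for some constant $c_1$, matching the statement of the lemma.
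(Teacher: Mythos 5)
Your proof mirrors the paper's own construction and argument step for step: the conjunction vertices $w_i$, the $\beta$-scaling to $w_i'$, the probability filter gadget as the final stage, the formula $p(w_i) = (a_2-2a_1)p(u_i)p(v_i) + a_1p(u_i) + a_1p(v_i)$ bounded by $\varphi_T^\pm(p_0)$ and $\varphi_F^+(p_0)$, the strict separation driven by $a_2 > 2a_1$, the appeal to Propositions~\ref{lemmap1}, \ref{lemmap2} (and implicitly \ref{lemmaratio}) to realize the three scaling constraints within the $\beta\leq p^\ast$ cap, and the size tally via Lemmas~\ref{lem:psdg} and~\ref{filtersize}. The one spot to tighten is the order of parameter selection: $p_2$ is itself a function of $h$ and $\delta$, so rather than fixing $\beta$ in terms of an as-yet-undetermined $p_2$ and then shrinking $\delta$ (which risks driving $p_1$ below $\beta\varphi_F^+(p_0)$), first fix $h,\delta$ so that $p_1$ is small and $p_2/p_1$ is large, then choose $\beta$ in the resulting nonempty interval $\bigl(p_1/\varphi_T^-(p_0),\ \min\{p^\ast,\ p_1/\varphi_F^+(p_0),\ p_2/\varphi_T^+(p_0)\}\bigr)$ --- your own remark that ``$p_2$ is not freely adjustable'' is the right instinct, and this reordering resolves it.
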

\begin{proof}
  The existence of this AND gadget is shown by the explicit construction in this section.

  To show that $p_2$ only depends on $p_0$ and $f$, notice that it depends on $h$, $\delta$ and $f$ (in particular, $a_1$ and $a_2$ only) according to Proposition~\ref{lemmap2}.
  Additionally, $h,\alpha$ are selected such that $\delta=1-h\alpha a_1$ is small enough, and we have remarked just now that $\delta$ depends on $p_0$ and $f$.
  Therefore, $p_2$ only depends on $p_0$ and $f$, as the graph $y=y(x)$ determines the value of $p_2$.

  For the size of this AND gadget and the input size $\Lambda$, the size of this AND gadget is the size of a probability filter gadget plus $3\Lambda$ for those $u_i,v_i,w_i$, and the size of each of both input sets is $\Lambda$.
  Therefore, Lemma~\ref{filtersize} implies the second part of this lemma.
\end{proof}

\begin{remark}[Remark of Lemma~\ref{ANDlemmawithtwoinputs}]
Lemma~\ref{ANDlemmawithtwoinputs} shows that when constructing a $(2,\Lambda,p_0,p_2,\varepsilon_1,\varepsilon_2,f)$-AND gadget, we are free to set up the parameter $p_0$, and the parameter $p_2$ will be determined.
\emph{After $p_2$ is determined}, we are still free to choose $\varepsilon_1,\varepsilon_2$, and $\Lambda$ will be then determined.
In fact, the two parameters $\varepsilon_1,\varepsilon_2$ decides the number of layers needed in the probability filter gadget, and we can achieve (1) and (2) in Definition~\ref{defi:AND} for \emph{any} valid function $y(x)$ with two intersections to the line $y=x$ as it is in Figure~\ref{fig:filtercurve}.
That is the reason why we can choose $\varepsilon_1,\varepsilon_2$ after $p_2$ is determined.
In particular, for the same function $y(x)$ but different $\varepsilon_1,\varepsilon_2$, we just need the AND gadgets with different numbers of layers in their inner probability filter gadgets.
We will make use of this observation to construct AND gadgets with the same parameters $p_0,p_2,f$ but different $\varepsilon_1,\varepsilon_2$ in the next section.
\end{remark}

To conclude this section, we show that we can also construct a $(2,\Lambda,p_2-\varepsilon_2,p_2,\varepsilon_1,\varepsilon_2,f)$-AND gadget and a $(2,\Lambda,p^\ast(p_2-\varepsilon_2),p_2,\varepsilon_1,\varepsilon_2,f)$-AND gadget which will be used in the next section.
Notice that Lemma~\ref{ANDlemmawithtwoinputs} does not imply the possibility of constructing this AND gadget, as $p_2$'s existence is supposed to depend on the third parameter, which now become $p_2-\varepsilon_2$ and $p^\ast(p_2-\varepsilon_2)$, two constants related to $p_2$.

\begin{lemma}\label{ANDlemmap2}
  Given any 2-quasi-submodular influence function $f$ and any constant threshold $p_0>0$, we can construct a $(2,\Lambda,p_0,p_2,\varepsilon_1,\varepsilon_2,f)$-AND gadget, a $(2,\Lambda,p_2-\varepsilon_2,p_2,\varepsilon_1,\varepsilon_2,f)$-AND gadget and a $(2,\Lambda,p^\ast(p_2-\varepsilon_2),p_2,\varepsilon_1,\varepsilon_2,f)$-AND gadget with the same parameters $\Lambda,p_2,\varepsilon_1,\varepsilon_2$.
\end{lemma}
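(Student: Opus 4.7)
The plan is to build all three AND gadgets on top of a single shared probability filter gadget, so that the shared output parameter $p_2$ and the input size $\Lambda$ (together with the filter's $\varepsilon_1,\varepsilon_2$ guarantees) automatically coincide across the three constructions. Only the scaling-down factor $\beta$ applied to each $w_i$ before feeding into the filter will differ among the three gadgets, in accordance with the three different third parameters $p_0$, $p_2-\varepsilon_2$, and $p^\ast(p_2-\varepsilon_2)$.

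My first step will be to pick the internal parameters $(h,\alpha,\delta)$ of the shared probability filter gadget. I will take $\delta$ small enough that, by Proposition~\ref{lemmap1}, the lower fixed point $p_1$ becomes tiny, while Proposition~\ref{lemmap2} keeps $p_2$ bounded below by a positive constant depending only on $f$. Once $p_2$ is pinned down by this choice, the two ``derived'' target values $p_0^{(2)}:=p_2-\varepsilon_2$ and $p_0^{(3)}:=p^\ast(p_2-\varepsilon_2)$ are also fixed positive constants, which resolves the apparent circularity in the lemma statement.

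For each of the three third-parameter values $p_0^{(i)}$ (with $p_0^{(1)}:=p_0$), I then need a scaling factor $\beta_i\in(0,p^\ast]$ that simultaneously satisfies the three inequalities recorded in Section~\ref{sect:ANDgadgetwithtwoinputs}: $\beta_i\,\varphi_T^-(p_0^{(i)}) > p_1$, $\beta_i\,\varphi_F^+(p_0^{(i)}) < p_1$, and $\beta_i\,\varphi_T^+(p_0^{(i)}) < p_2$. The first two define a nonempty interval for $\beta_i$ because $\varphi_T^-(p_0^{(i)}) > \varphi_F^+(p_0^{(i)})$ (an inequality already appearing implicitly in the construction in Section~\ref{sect:ANDgadgetwithtwoinputs}, and directly verified from the definitions of $\varphi_T^-,\varphi_F^+$ together with $a_2>2a_1$ and $a_1>0$); and since each $p_0^{(i)}$ is a fixed positive constant, the interval's lower endpoint $p_1/\varphi_T^-(p_0^{(i)})$ can be made smaller than both $p_2/\varphi_T^+(p_0^{(i)})$ and $p^\ast$ by shrinking $\delta$ further if necessary. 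Each $\beta_i$ is then realized by a probability scaling-down gadget via Lemma~\ref{lem:psdg}.

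The hard part will not be the existence of any single $\beta_i$ but the coordination required to make all three work over the same filter: this is exactly the issue that $p_2$ changes when we change $(h,\alpha,\delta)$, which in turn changes $p_0^{(2)}$ and $p_0^{(3)}$. The resolution is that $p_2$ is bounded away from $0$ uniformly in $\delta$ by Proposition~\ref{lemmap2} while $p_1\to 0$ as $\delta\to 0$ by Proposition~\ref{lemmap1}, so once $\delta$ is small enough all three required feasibility windows on $\beta_i$ are simultaneously open. Finally, $\Lambda=h^\ell$ and the number of layers $\ell$ are dictated by the target $\varepsilon_1,\varepsilon_2$ exactly as in the proof of Lemma~\ref{filtersize}; since this choice depends only on the shared filter, all three AND gadgets end up with the same $\Lambda,p_2,\varepsilon_1,\varepsilon_2$, as required.
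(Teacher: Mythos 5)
Your proof is correct and follows essentially the same route as the paper's: share a single probability filter gadget (so $\Lambda,p_2,\varepsilon_1,\varepsilon_2$ coincide automatically), vary only the scaling factor $\beta_i$ across the three thresholds, and invoke the $p_1$-to-$p_2$ gap (Propositions~\ref{lemmap1} and~\ref{lemmap2}, which together give Proposition~\ref{lemmaratio}) to open all three feasibility windows simultaneously. The paper cites Proposition~\ref{lemmaratio} directly where you unpack it into Propositions~\ref{lemmap1} and~\ref{lemmap2}, and you make the circularity-breaking step (that $p_2$ stays bounded below while $p_1\to 0$) more explicit, but the content is the same.
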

\begin{proof}
  The three AND gadgets are only different at the third parameter, which is the input threshold determining which of the two cases (1) and (2) in Definition~\ref{defi:AND} the inputs fall into.
  By our construction, we can use the same structure for the three AND gadgets, except that we use three different scaling down factors $\beta_1,\beta_2,\beta_3$ for the different thresholds $p_0$, $p_2-\varepsilon_2$ and $p^\ast(p_2-\varepsilon_2)$.
  In particular, the three probability filter gadgets inside the three AND gadgets can be exactly the same, provided that the ``gap'' $p_2/p_1$ is large enough such that
  \begin{itemize}
    \item $\left(\beta_1\varphi_T^-({p}_0),\beta_1\varphi_T^+({p}_0)\right)$ and $\beta_1\varphi_F^+({p}_0)$ are on the different sides of $p_1$,
    \item $\left(\beta_2\varphi_T^-(p_2-\varepsilon_2),\beta_2\varphi_T^+(p_2-\varepsilon_2)\right)$ and $\beta_2\varphi_F^+(p_2-\varepsilon_2)$ are on the different sides of $p_1$, and
    \item $\left(\beta_3\varphi_T^-(p^\ast(p_2-\varepsilon_2)),\beta_3\varphi_T^+(p^\ast(p_2-\varepsilon_2))\right)$ and $\beta_3\varphi_F^+(p^\ast(p_2-\varepsilon_2))$ are on the different sides of $p_1$.
  \end{itemize}
  We know that this is always possible by Proposition~\ref{lemmaratio}.

  As the same probability filter gadget is used in the two AND gadgets, the four parameters $\Lambda,p_2,\varepsilon_1,\varepsilon_2$, which are inherited from the probability filter gadget by our construction, are identical for the three AND gadgets.
\end{proof}

\subsection{Construction of the AND Gadget with General $I$ of an Integer Power of $2$}
\label{sect:ANDgadgets}
In this section, we construct the AND gadget in Definition~\ref{defi:AND} with general $I$ that is an integer power of $2$.

A $(I,\Lambda,p_0,p_2,\varepsilon_1,\varepsilon_2,f)$-AND gadget is a $(\log_2I)$-level AND circuit using 2-set-input AND gadgets constructed in the previous section as building block.
We will use three different types of 2-set-input AND gadgets.
\begin{itemize}
  \item Type $A$: $(2,\Lambda_0,p_0,p_2,\frac13(p_2-\varepsilon_2),\varepsilon_2,f)$-AND gadget.
  \item Type $B$: $(2,\Lambda_0,p_2-\varepsilon_2,p_2,\frac13(p_2-\varepsilon_2),\varepsilon_2,f)$-AND gadget.
  \item Type $C$: $(2,\Lambda_C,p_2-\varepsilon_2,p_2,\varepsilon_1,\varepsilon_2,f)$-AND gadget.
\end{itemize}
Lemma~\ref{ANDlemmap2} indicates that we can construct $A$ and $B$, and by Lemma~\ref{ANDlemmawithtwoinputs} $\Lambda_0$ is a constant since $\frac13(p_2-\varepsilon_2)$ is a constant.
By Lemma~\ref{ANDlemmawithtwoinputs} and its remark, we can construct $C$ based on $B$ by adjusting the number of layers in the inner probability filter gadget, and $\Lambda_C=O\left((1/\varepsilon_1)^{c_1}\right)$ for some constant $c_1$.

Figure~\ref{fig:AND} shows the construction of this AND gadget.
The type and the number of AND gadgets in each of the $\log_2I$ levels are set as follows:
\begin{itemize}
  \item Level $(\log_2I)$: A single AND gadget of Type $C$ is constructed.
  \item Level $(\log_2I-1)$: $2$ groups of $\Lambda_C$ Type $B$ AND gadgets are constructed, and the output vertices in each group are connected to each of the input ends $I_1,I_2$ of the AND gadget in Level~$(\log_2I)$.
  \item Level $(\log_2I-2)$: $2^2$ groups of $\Lambda_0\Lambda_C$ Type $B$ AND gadgets are constructed, and the output vertices in each group are connected to each of the input ends $I_1,I_2$ of the AND gadgets in each of the $2$ groups in Level~$(\log_2I-1)$.
  \item Level $(\log_2I-3)$: $2^3$ groups of $\Lambda_0^2\Lambda_C$ Type $B$ AND gadgets are constructed, and the output vertices in each group are connected to each of the input ends $I_1,I_2$ of the AND gadgets in each of the $2^2$ groups in Level~$(\log_2I-2)$.
  \item $\cdots$
  \item Level $2$: $2^{\log_2I-2}$ groups of $\Lambda_0^{\log_2I-3}\Lambda_C$ Type $B$ AND gadgets are constructed, and the output vertices in each group are connected to each of the input ends $I_1,I_2$ of the AND gadgets in each of the $2^{\log_2I-3}$ groups in Level~$3$.
  \item Level $1$: $2^{\log_2I-1}$ groups of $\Lambda_0^{\log_2I-2}\Lambda_C$ Type $A$ AND gadgets are constructed, and the output vertices in each group are connected to each of the input ends $I_1,I_2$ of the AND gadgets in each of the $2^{\log_2I-2}$ groups in Level~$2$.
\end{itemize}
Finally, the two input sets $I_1,I_2$ in each of the $2^{\log_2I-1}=\frac I2$ AND gadget groups in Level~$1$ form two of the $I$ input sets for the $(I,\Lambda,p_0,p_2,\varepsilon_1,\varepsilon_2,f)$-AND gadget we are constructing,
and the output vertex of the Type~$C$ AND gadget in Level~$(\log_2I)$ is the output of the $(I,\Lambda,p_0,p_2,\varepsilon_1,\varepsilon_2,f)$-AND gadget.

\begin{figure}
\centerline{\includegraphics[scale=0.6]{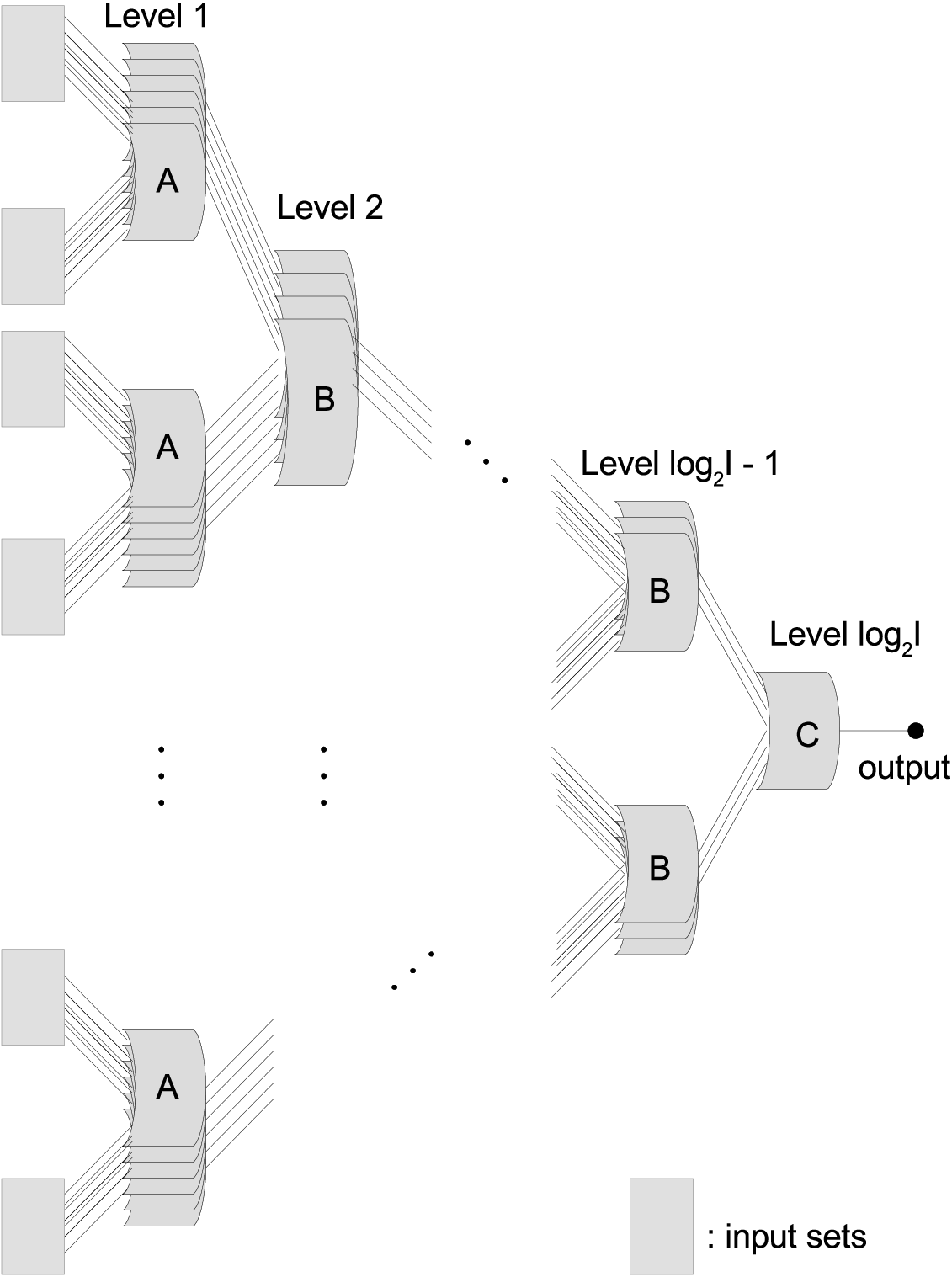}}
\caption{The $(I,\Lambda,p_0,p_2,\varepsilon_1,\varepsilon_2,f)$-AND gadget}
\label{fig:AND}
\end{figure}

We now show that (1) and (2) in Definition~\ref{defi:AND} hold.
\begin{enumerate}
  \item If all the vertices in all $I$ input sets are infected with independent probabilities less than $\frac{11}{10}p_0$, and the infection probabilities of the vertices in at least one set are less than $\frac12p_0$, then the Type~$A$ AND gadgets in at least one group in Level~$1$ will output vertices with infection probabilities less than $\frac13(p_2-\varepsilon_2)$.
      Since the threshold (the third parameter) of Type~$B$ AND gadgets is set to $(p_2-\varepsilon_2)$ and $\frac13(p_2-\varepsilon_2)<\frac12(p_2-\varepsilon_2)$, the Type~$B$ AND gadgets in at least one group in each of Level $2,3,\ldots,\log_2I-1$ will output vertices with infection probabilities less than $\frac13(p_2-\varepsilon_2)$.
      Finally, at least one of the two input sets for the Type~$C$ AND gadget in Level~$(\log_2I)$ will be infected with probabilities less than $\frac13(p_2-\varepsilon_2)$, which is less than $\frac12(p_2-\varepsilon_2)$.
      Thus, the output of the entire $(I,\Lambda,p_0,p_2,\varepsilon_1,\varepsilon_2,f)$-AND gadget is a vertex with infection probabilities less than $\varepsilon_1$, which implies (1) in Definition~\ref{defi:AND}.
  \item If all the vertices in all $I$ input sets are infected with independent probabilities in $(p_0,\frac{11}{10}p_0)$, all the Type~$A$ AND gadgets in Level~$1$ will output vertices with infection probabilities in $(p_2-\varepsilon_2,p_2]$.
      Since $(p_2-\varepsilon_2,p_2]\subseteq\left(p_2-\varepsilon_2,\frac{11}{10}(p_2-\varepsilon_2)\right)$ for small enough $\varepsilon_2$,\footnote{If the parameter $\varepsilon_2$ in the $(I,\Lambda,p_0,p_2,\varepsilon_1,\varepsilon_2,f)$-AND gadget we are constructing is not small enough to satisfy this, we can replace $\varepsilon_2$ with another smaller $\varepsilon_2'$ and instead construct a $(I,\Lambda,p_0,p_2,\varepsilon_1,\varepsilon_2',f)$-AND gadget. Notice that the description 2 of Definition~\ref{defi:AND} implies that a $(I,\Lambda,p_0,p_2,\varepsilon_1,\varepsilon_2',f)$-AND gadget is also a valid $(I,\Lambda,p_0,p_2,\varepsilon_1,\varepsilon_2,f)$-AND gadget for $\varepsilon_2'<\varepsilon_2$.} all the Type~$B$ AND gadgets in each of Level $2,3,\ldots,\log_2I-1$ will output vertices with infection probabilities in $(p_2-\varepsilon_2,p_2]$.
      Finally, the Type~$C$ AND gadget in Level~$(\log_2I)$ will output a vertex with infection probability in $(p_2-\varepsilon_2,p_2]$.
\end{enumerate}

Finally, we prove Lemma~\ref{ANDlemma} and Lemma~\ref{ANDlemma_p2} in Section~\ref{sect:proof1}.
\begin{proof}[Proof of Lemma~\ref{ANDlemma}]
  The existence of the $(I,\Lambda,p_0,p_2,\varepsilon_1,\varepsilon_2,f)$-AND gadget is proved by the explicit construction above.
  It remains to show that the number of vertices and edges in this AND gadget is $O\left((1/\varepsilon_1)^{c_1}I^{c_2+1}\right)$, and the input size is $\Lambda=O\left((1/\varepsilon_1)^{c_1}I^{c_2}\right)$.

  By Lemma~\ref{ANDlemmawithtwoinputs}, the number of vertices and edges in the Type $A$ and $B$ AND gadgets are constants, since the parameter $\frac13(p_2-\varepsilon_2)$ is a constant.
  Let $\chi$ be a constant upper bound for these.
  As for Type~$C$ AND gadget, it has $O\left((1/\varepsilon)^{c_1}\right)$ vertices and edges by Lemma~\ref{ANDlemmawithtwoinputs}.
  Since there are $2^{\log_2I-i}\Lambda_0^{\log_2I-i-1}\Lambda_C$ AND gadgets in Level~$i$ and $\Lambda_C=O\left((1/\varepsilon)^{c_1}\right)$ as mentioned, the total number of vertices and edges have the following bound.
  $$O\left((1/\varepsilon)^{c_1}\right)+\sum_{i=1}^{\log_2I-1}\chi\cdot2^{\log_2I-i}\Lambda_0^{\log_2I-i-1}\Lambda_C<\chi\Lambda_C\cdot(2\Lambda_0)^{\log_2I}=O\left((1/\varepsilon_1)^{c_1}I^{c_2+1}\right),$$
  where $c_2=\log_2\Lambda_0$ is a constant.

  As for $\Lambda$, there are $\Lambda_0^{\log_2I-2}\Lambda_C$ AND gadgets in each of the $\frac I2$ groups in Level~$1$, and each of these AND gadgets takes $\Lambda_0$ vertices as one of the two inputs.
  Therefore, we have
  $$\Lambda=\Lambda_0\cdot\Lambda_0^{\log_2I-2}\Lambda_C=O\left((1/\varepsilon_1)^{c_1}I^{c_2}\right),$$
  which concludes the last part of the lemma.
\end{proof}

\begin{proof}[Proof of Lemma~\ref{ANDlemma_p2}]
  Based on Lemma~\ref{ANDlemmap2}, by changing all the Type $A$ $(2,\Lambda_0,p_0,p_2,\frac13(p_2-\varepsilon_2),\varepsilon_2,f)$-AND gadgets in Level 1 to the Type $A'$ $(2,\Lambda_0,p^\ast(p_2-\varepsilon_2),p_2,\frac13(p_2-\varepsilon_2),\varepsilon_2,f)$-AND gadgets, we obtain an $(I,\Lambda,p^\ast(p_2-\varepsilon_2),p_2,\varepsilon_1,\varepsilon_2,f)$-AND gadget.

  The size of the AND gadget only changes by a constant, as the only difference between the two AND gadgets are the different probability scaling down gadgets used for different $\beta$ for $A$ and $A'$.
  Since the probability scaling down gadget has a constant size, we conclude the second half of the lemma.
\end{proof}

\subsection{Construction of Directed Edge Gadget}
\label{sect:directededgegadgets}
The $(\Upsilon,\epsilon,b,f)$-directed edge gadget in Definition~\ref{defi:directededge} can be constructed by modifying the number of layers in the $(\Lambda,p_1,p_2,\varepsilon_1,\varepsilon_2,f)$-probability filter gadget in Definition~\ref{defi:probabilityfiltergadget}.
While still keeping the parameter $h$ and $\alpha$ such that $a_1h\alpha=1-\delta$ in the probability separation block of the probability filter gadget, we modify the number of layers in the circuit to $L=\frac{\log(\Upsilon/\epsilon)}{\log(1/(1-\delta))}+1$.

To construct a directed edge gadget $\langle u,v\rangle$, we connect $u$ to all the $h^L$ inputs to the circuit, and let $v$ be the output.
The construction of directed edge gadget is shown in Figure~\ref{fig:directededgegadget}.

\begin{figure}
\centerline{\includegraphics{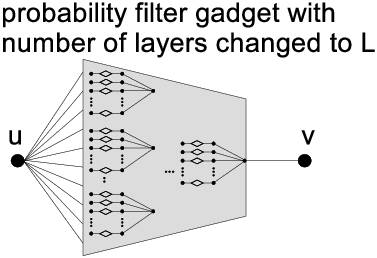}}
\caption{The directed edge gadget $\langle u,v\rangle$}
\label{fig:directededgegadget}
\end{figure}

To show property (1) in Definition~\ref{defi:directededge}, suppose $u$ is connected to $\Upsilon$ infected vertices $v_1,v_2,\ldots,v_{\Upsilon}$ by the directed edge gadgets.
If the vertices in the $i$-th layer are infected with probability $x_i$, then the vertices in the $(i-1)$-th layer will be infected with probability $x_{i-1}=a_1\alpha x_i$, which can be easily seen from Figure~\ref{fig:filter} and by noticing the symmetric property of probability scaling down gadgets mentioned in Remark~\ref{symmetricremark}.
Therefore, each vertices in the first level that are adjacent to $u$ will be infected with probability $(a_1\alpha)^L$.
Since there are $h^L$ vertices in the first level and $u$ is assumed to be connected to $\Upsilon$ vertices by the directed edge gadgets, the expected number of $u$'s infected neighbors is
$$\mathbb{E}[\text{num of infected neighbors}]=\Upsilon h^L(a_1\alpha)^L=\Upsilon(1-\delta)^L=\epsilon(1-\delta)<\epsilon,$$
where recall that we have set
$$L=\frac{\log(\frac{\Upsilon}\epsilon)}{\log\frac1{1-\delta}}+1.$$
Therefore, by Markov's inequality, the probability that $u$ has infected neighbor(s) is less than $\epsilon$, which means $u$ will be infected with probability less than $\epsilon$.

For (2), suppose $u$ is connected to $v$ by a directed edge gadget $\langle u,v\rangle$ and $u$ is already infected.
Then all the $h^L$ inputs of the inner probability filter gadget will be infected with probability $a_1$ independently, and $v$ will be infected with probability in $(p_2-\varepsilon_2,p_2]$ if $\delta$ is set small enough such that $a_1$ passes the threshold $p_1$.
In particular, $b>0$.

Lastly, we prove Lemma~\ref{directededgelemma1} and Lemma~\ref{directededgelemma3}.
\begin{proof}[Proof of Lemma~\ref{directededgelemma1}]
  The possibility of the construction is already made explicit.

  Let $\lambda$ be the upper bound of the number of vertices and edges in a probability separation block in the probability filter gadget (which is a constant), the total number of vertices in a directed edge gadget is
  $$\sum_{i=0}^{L-1}\lambda h^i=\lambda\frac{h^L-1}{h-1}=\Theta\left(h^L\right)=\Theta\left(h^{\frac{\log\Upsilon}{\log\frac1{1-\delta}}+\frac{\log(\frac1\epsilon)}{\log\frac1{1-\delta}}+1}\right)=\Theta\left(\Upsilon^d(1/\epsilon)^d\right),$$
  and the total number of edges is
  $$\underbrace{h^L}_{\text{number of edges from }u\text{ to the probability scaling down gadget}}+\sum_{i=0}^{L-1}\lambda h^i=\Theta\left(h^L\right)=\Theta\left(\Upsilon^d(1/\epsilon)^d\right).$$
  where $d=\frac{\log h}{\log\frac1{1-\delta}}$.

  To show that $d$ depends only on $f$, it is enough to notice that we only need to set up the values of $h$ and $\delta$ such that $p_1<a_1$ as mentioned.
\end{proof}

\begin{proof}[Proof of Lemma~\ref{directededgelemma3}]
  Given an $(I,\Lambda,p_0,p_2,\varepsilon_1,\varepsilon_2,f)$-AND gadget which consists of many 2-set-input AND gadgets (see Figure~\ref{fig:AND}),
  we can obtain a $(\Lambda,p_1,p_2,\varepsilon_1,\varepsilon_2,f)$-probability filter gadget which is the core of an arbitrary 2-set-input AND gadget.
  We construct the $(\Upsilon,\epsilon,b,f)$-directed edge gadget by increasing the number of layers in this probability filter gadget, just as what we did earlier.
  By our analysis above, we already have $b\in(p_2-\varepsilon_2,p_2]$.
  Moreover, by Figure~\ref{fig:filtercurve}, increasing the number of layers makes $b$ closer to $p_2$.
  Therefore, we can have $b\in\left(p_2-\frac12\varepsilon_2,p_2\right]$ by just increasing the number of layers, which proves the possibility of the construction.

  By our discussion in Section~\ref{subsect:pfg}, we only need a constant number of layers to have $b\in\left(p_2-\frac12\varepsilon_2,p_2\right]$, as $\frac12\varepsilon_2$ is a constant.
  Thus, requiring $b\in\left(p_2-\frac12\varepsilon_2,p_2\right]$ does not change the number of layers asymptotically.
  Following the proof of Lemma~\ref{directededgelemma1}, we conclude the second half of the lemma.
\end{proof}

\subsection{Proof of Theorem~\ref{hardnessproof} for $a_1=0$}
\label{sect:proof3}
In the case $a_1=0$, the constructions of both the AND gadget and the directed edge gadget fail.
Modifications of the structure in Figure~\ref{fig:highlevel} as well as the structure of the AND gadget are required.
We will discuss these modifications in this section, and the remaining details are left to the readers.

\paragraph{Modification to the AND Gadget}
The AND gadget for the case $a_1=0$ is much simpler.
The input $\varepsilon_1,\varepsilon_2$ is no longer needed, and both $p_0,p_2$ in the original AND gadget are set to $\frac12a_2$.
The definition of the modified AND gadget is shown below.
\begin{definition}\label{modifiedAND}
  A \emph{$(I,\Lambda,f)$-AND gadget} takes $I$ sets of $\Lambda$ vertices each as inputs, and output a vertex such that
\begin{enumerate}
  \item if the vertices in at least one input set are infected with probability $0$, then the output vertex will be infected with probability $0$;
  \item if the vertices in all input sets are infected with independent probability at least $\frac12a_2$, then the output vertex will be infected with probability at least $\frac12a_2$,
\end{enumerate}
\end{definition}
The construction of a $(2,\Lambda,f)$-AND gadget is shown in Figure~\ref{fig:modificationAND}.
It is easy to see that the infection of the output vertex will not affect any other vertices in this circuit due to $a_1=0$.
Due to the same reason, property (1) above is trivial for the case $I=2$ here.
Let $x$ be the probability that each vertex in the two input sets is infected, and let $y$ be the probability the output is infected.
Then,
$$y=\sum_{i=2}^\Lambda\binom{\Lambda}{i}a_i(a_2x)^i(1-a_2x)^{\Lambda-i}.$$
To satisfy (2), we only need to choose $\Lambda$ large enough such that $y(\frac12a_2)\geq\frac12a_2$.
This is always possible, as we have $y(\frac12a_2)\rightarrow p^\ast>\frac12a_2$ as $\Lambda\rightarrow\infty$ (the expected number of infected neighbors of the output vertex is $\frac12a_2 \Lambda$ which goes to infinity).
\begin{figure}
\centerline{\includegraphics{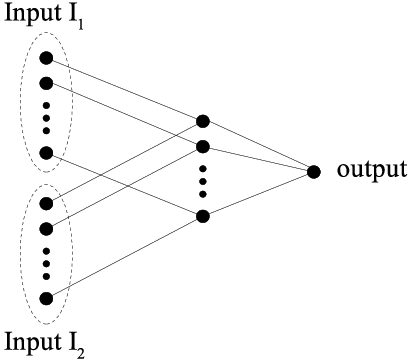}}
\caption{The modified AND gadget with parameter $(2,\Lambda,f)$}
\label{fig:modificationAND}
\end{figure}

\begin{lemma}\label{ANDlemma2inputwithzeroa1}
  For any $f$ with $a_2>a_1=0$, we can construct a $(2,\Lambda_0,f)$-AND gadget with constant size, and $\Lambda_0$ is a constant depending on $f$.
\end{lemma}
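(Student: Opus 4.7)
The plan is to verify that the construction already sketched in Figure~\ref{fig:modificationAND} satisfies Definition~\ref{modifiedAND} for a suitable constant $\Lambda_0$. That construction introduces, for each index $i=1,\ldots,\Lambda$, an intermediate vertex $w_i$ with edges to $u_i$ (the $i$-th vertex of the first input set) and to $v_i$ (the $i$-th vertex of the second input set), and then adds a single output vertex $z$ adjacent to every $w_i$.

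For property~(1), suppose without loss of generality that every vertex in the first input set is infected with probability $0$. Then each $w_i$ has at most one infected neighbor, so its infection probability is at most $a_1=0$. Consequently $z$ has no infected neighbors, and is infected with probability $a_0=0$.

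For property~(2), assume that each input vertex is infected independently with probability at least $\tfrac{1}{2}a_2$. By monotonicity of the cascade it suffices to treat the boundary case where every input vertex has probability exactly $x=\tfrac{1}{2}a_2$. Because the pairs $\{u_i,v_i\}$ are disjoint, the events ``$w_i$ is infected'' are mutually independent, and each has probability $q := a_2 x^2 = \tfrac{1}{4}a_2^3>0$ (configurations with fewer than two infected neighbors contribute $a_0=a_1=0$). Let $Y$ denote the number of infected $w_i$'s; then $Y\sim\mathrm{Binomial}(\Lambda,q)$, so $\E[Y]=\Lambda q\to\infty$ as $\Lambda\to\infty$, and a Chernoff bound gives $\Pr[Y\leq \Lambda q/2]\to 0$. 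Since $(a_i)$ is monotone with limit $p^\ast$, the infection probability of $z$, which equals $\E[a_Y]$, tends to $p^\ast$ as $\Lambda\to\infty$. Because $p^\ast\geq a_2>\tfrac{1}{2}a_2$, for all sufficiently large $\Lambda$ we obtain $\E[a_Y]\geq\tfrac{1}{2}a_2$; define $\Lambda_0$ to be the smallest such $\Lambda$. This $\Lambda_0$ depends only on $f$, and the resulting gadget has $\Lambda_0+1$ internal vertices and $3\Lambda_0$ edges, all constants.

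The construction and its verification are essentially mechanical. The one place where the assumption $a_1=0$ is actually used is property~(1), where it kills the influence of any $w_i$ that has a single infected neighbor. The only substantive observation is that $p^\ast\geq a_2>\tfrac{1}{2}a_2$, which is what guarantees that a finite $\Lambda_0$ exists at all; once this is noted, the remainder reduces to a routine binomial concentration argument and no serious obstacle is expected.
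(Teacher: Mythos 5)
Your proof is correct and follows the same construction (Figure~\ref{fig:modificationAND}) and the same convergence argument the paper uses, just spelled out in more detail. You even implicitly fix a slip in the paper's displayed formula, where the per-$w_i$ infection probability appears as $a_2x$ rather than the correct $a_2x^2$ (this does not affect the conclusion, since either way $\Lambda q\to\infty$). The paper's own proof is a two-line appeal to the preceding discussion, so no genuinely different route is being taken here.
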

\begin{proof}
  The construction above shows the existence of the gadget, and $\Lambda_0$ is a constant that is large enough to make $y(\frac12a_2)\geq\frac12a_2$ true, which depends only on $f$.

  From Figure~\ref{fig:modificationAND}, it is clear that the gadget has $3\Lambda_0+1$ vertices and $3\Lambda_0$ edges, which are both constants.
\end{proof}

To construct a $(I,\Lambda,f)$-AND gadget with $I$ being an integer power of $2$, we use the same ``tower structure'' in Figure~\ref{fig:AND}.
Specifically, all the AND gadgets in all $\log_2I$ levels are identically the $(2,\Lambda_0,f)$-AND gadget in Figure~\ref{fig:modificationAND},
and the output vertices of $2^{\log_2I-i}$ groups of $\Lambda_0^{\log_2I-i}$ $(2,\Lambda_0,f)$-AND gadgets in Level~$i$ are connected to the input ends of $2^{\log_2I-i-1}$ groups of $\Lambda_0^{\log_2I-i-1}$ $(2,\Lambda_0,f)$-AND gadgets in Level~$(i+1)$.
It is straightforward to check that (1) and (2) in Definition~\ref{modifiedAND} hold for this construction.

\begin{lemma}
  For any $f$ with $a_2>a_1=0$ and any $I$ that is an integer power of $2$, we can construct a $(I,\Lambda,f)$-AND gadget with $O\left(I^{c+1}\right)$ vertices and $O\left(I^{c+1}\right)$ edges, and $\Lambda=I^c$, where $c$ is a constant depending on $f$.
\end{lemma}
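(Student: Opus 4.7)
The proof follows the explicit tower construction described in the paragraph preceding the lemma: it mimics the structure of Figure~\ref{fig:AND}, but every level is populated by copies of the single 2-input building block $(2,\Lambda_0,f)$-AND gadget furnished by Lemma~\ref{ANDlemma2inputwithzeroa1}. The plan is (i)~verify the two requirements of Definition~\ref{modifiedAND} by induction climbing the tower, and (ii)~count vertices, edges, and input-size through a geometric sum.

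For property~(1), suppose one of the $I$ input sets, say $\mathcal{I}_{i^\ast}$, is entirely infected with probability~$0$. This set is wired into one input end of every AND gadget in some Level~$1$ group $G_1$. By property~(1) of the base gadget applied to each such gadget, every gadget in $G_1$ outputs a vertex with infection probability~$0$. These all-zero outputs then form one input end of every gadget in some Level~$2$ group, and iterating this observation produces an all-zero input end at every subsequent level, forcing the final Level~$\log_2 I$ gadget's output to~$0$.

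For property~(2), assume each input vertex is independently infected with probability $\geq a_2/2$. Each Level~$1$ gadget then satisfies the hypothesis of property~(2) of the base, so its output has probability $\geq a_2/2$. To apply the same argument at Level~$2$ I need \emph{independence} among these outputs; this holds because the $\Lambda_0^{\log_2 I - 1}$ base gadgets inside a single Level~$1$ group are wired to pairwise \emph{disjoint} sub-blocks of $\Lambda_0$ vertices drawn from each of the two input sets feeding the group, so disjoint gadgets receive mutually independent input collections. The same disjointness pattern repeats at every higher level, so independence is preserved all the way up the tower and the top gadget outputs with probability $\geq a_2/2$.

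For the size analysis, Lemma~\ref{ANDlemma2inputwithzeroa1} says each base gadget has constant size, so it suffices to count base gadgets. Level~$i$ (for $1 \leq i \leq \log_2 I - 1$) contains $2^{\log_2 I - i}\cdot\Lambda_0^{\log_2 I - i}$ base gadgets, plus one more at Level~$\log_2 I$. Summing the resulting geometric series gives $\Theta((2\Lambda_0)^{\log_2 I}) = \Theta(I^{1+\log_2 \Lambda_0})$; setting $c := \log_2 \Lambda_0$ (a constant depending only on $f$ through $\Lambda_0$) yields the stated $O(I^{c+1})$ bound on both vertices and edges. For the input size, each of the two input sets feeding a Level~$1$ group must supply $\Lambda_0\cdot\Lambda_0^{\log_2 I - 1} = \Lambda_0^{\log_2 I} = I^c$ vertices, so $\Lambda = I^c$. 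The main obstacle in the argument is really just the independence bookkeeping in property~(2)---this is precisely why the group sizes grow by a factor of $\Lambda_0$ each level, ensuring that enough input vertices are available to give every base gadget its own disjoint sub-blocks.
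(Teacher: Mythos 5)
Your proof is correct and follows essentially the same tower construction and geometric-sum counting as the paper; the paper's own proof simply cites the explicit construction and says ``it is straightforward to check'' that properties (1) and (2) of Definition~\ref{modifiedAND} hold, whereas you usefully spell out the independence bookkeeping (disjoint sub-blocks of size $\Lambda_0$ at each level) that makes the induction in property~(2) go through.
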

\begin{proof}
The existence of this AND gadget is shown by the explicit construction.

The numbers of vertices and edges are both
$$\sum_{i=1}^{\log_2I}3\Lambda_0\cdot2^{\log_2I-i}\Lambda_0^{\log_2I-i}<3\Lambda_0\cdot(2\Lambda_0)^{\log_2I}=O\left(I^{c+1}\right),$$
where $c=\log_2\Lambda_0$ is a constant, and it depends only on $f$ as $\Lambda_0$ depends only on $f$ according to Lemma~\ref{ANDlemma2inputwithzeroa1}.
Notice that the number of vertices in a $(2,\Lambda_0,f)$-AND gadget is counted as $3\Lambda_0$ other than $3\Lambda_0+1$ in Lemma~\ref{ANDlemma2inputwithzeroa1}, because the output vertex of each $(2,\Lambda_0,f)$-AND gadget is counted as one of the input vertices in one of the $(2,\Lambda_0,f)$-AND gadgets in the next level.

Finally, since there are $\Lambda_0^{\log_2I-1}$ $(2,\Lambda_0,f)$-AND gadgets in each group in Level~$1$, we have
$$\Lambda=\Lambda_0\cdot\Lambda_0^{\log_2I-1}=I^c,$$
which concludes the lemma.
\end{proof}

\paragraph{Modification to the Set Cover Part}
We will use a pair of vertices to represent a subset in the \textsc{SetCover} problem and use a pair of cliques to represent an element in $U$.
The pair of vertices are connected to each vertex of the two cliques by a specially designed gadget shown in the bottom of Figure~\ref{fig:modificationsetcover}.

\begin{figure}
\centerline{\includegraphics[width=\textwidth]{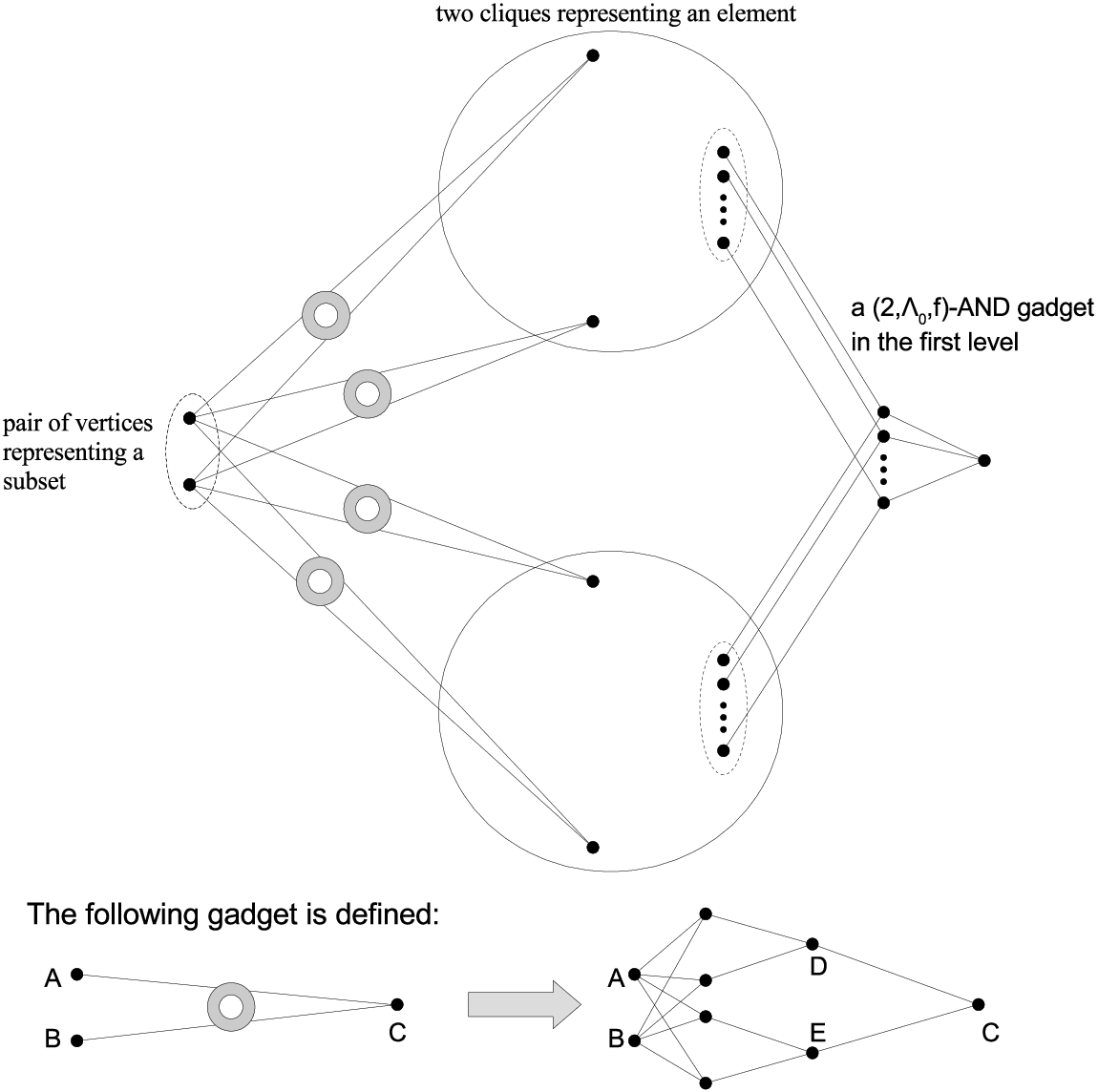}}
\caption{Connection between a pair of vertices representing a subset and vertices in the two cliques representing an element, and a $(2,\Lambda_0,f)$-AND gadget in the first level of the $\left(2n,(2n)^c,f\right)$-AND gadget.}
\label{fig:modificationsetcover}
\end{figure}

If the two vertices representing a subset are both infected, it is straightforward to check that each vertex at the output end of the gadget at the bottom of Figure~\ref{fig:modificationsetcover} will be infected with probability $a_2^7$.
Given there are $m$ vertices in a clique, the expected number of infected vertices in a clique is $a_2^7m$.
By choosing $m$ large enough (but still a constant) such that $a_{\lfloor a_2^7m\rfloor}>p^\ast-\varepsilon$, each vertex in the clique will be infected with probability at least $p^\ast-\varepsilon$.
Therefore, if a subset is picked such that the two vertices representing it are chosen as seeds, all pairs of cliques representing its elements will be activated.
Naturally, given the \textsc{SetCover} instance in which we are choosing $k$ subsets, we are asked to choose $2k$ seeds in the \textsc{InfluenceMaximization} instance.

On the other hand, since $a_1=0$, an activated clique will not be able to infect the pair of vertices representing a subset, so the connection between the pair of vertices to each vertex in the clique is like a directed edge.
Moreover, it is easy to see that we still need two seeds to pick a subset even if some cliques representing elements in this subset are activated.
Although we have the option to choose the two seeds ``on the gadget'', we still need to pick at least two seeds to ``choose a subset''.
Thus, it does not matter if any of these seeds is not exactly in the pair of vertices representing the subset.

The $M_2$ $(n,\Lambda,p^\ast-\varepsilon,p_2,1/n,\varepsilon_2,f)$-AND gadgets in Figure~\ref{fig:highlevel} is changed to $M_2$ $\left(2n,(2n)^c,f\right)$-AND gadgets here.
Moreover, each of the $n$ groups of the $(2,\Lambda_0,f)$-AND gadgets in Level~$1$ of the $\left(2n,(2n)^c,f\right)$-AND gadget corresponds to the vertices in \emph{the two cliques representing the same element} in $U$.
A single $(2,\Lambda_0,f)$-AND gadget is illustrated on the right-hand side of Figure~\ref{fig:modificationsetcover}.

\paragraph{Modification to the Connection to the $M_1$ Vertices}
In Figure~\ref{fig:highlevel}, the output vertex $v$ is connected to the $M_1$ vertices by $M_1$ edges.
Since $a_1=0$, such construction will fail to satisfy our purpose here.
To fix this, we can use $2M_2$ $\left(2n,(2n)^c,f\right)$-AND gadgets such that the outputs of every two AND gadgets are connected to each of the $M_1$ vertices.\footnote{Another way to fix this is to reduce the number of levels by $1$ in the $\left(2n,(2n)^c,f\right)$-AND gadget, such that we have two output vertices of the AND gadget instead of only one output in Definition~\ref{modifiedAND}.}

In addition, we also update the value of $M_1$ to $M_1=n^{c+10}$.

\paragraph{Modification to the Clique Size $m$}
Since there are $(2n)^c$ vertices in each of the $n$ inputs for each of the $2M_2$ $\left(2n,(2n)^c,f\right)$-AND gadgets,
to furnish enough inputs, we update the clique size to $m=2M_2\cdot(2n)^c=2^{1+c}n^{2+c}=O\left(n^{c+2}\right)$.

\paragraph{Modification to Lemma~\ref{YESNOcompare}}
To conclude this section, we have the following lemma corresponding to Lemma~\ref{YESNOcompare} in Section~\ref{sect:proof1}.
\begin{lemma}\label{YESNOcompare3}
  If the \textsc{SetCover} instance is a {\sf YES} instance, by choosing $2k$ seeds appropriately, we can infect at least $\frac14a_2^3\cdot n^{c+12}$ vertices in expectation in the graph $G$ we have constructed; if it is a {\sf NO} instance, we can infect at most $O\left(kn^{c+10}\right)$ vertices in expectation for any choice of $2k$ seeds.
\end{lemma}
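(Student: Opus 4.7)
The plan is to mirror the structure of the proof of Lemma~\ref{YESNOcompare}, but account for the three ways the $a_1=0$ construction differs from the $a_1>0$ one: each ``pick'' of a subset now costs two seeds (hence the $2k$ budget), each element is represented by a \emph{pair} of cliques feeding the first level of the $(2n,(2n)^c,f)$-AND gadget, and each $M_1$-vertex now requires \emph{both} of its two AND-gadget neighbors to be infected in order to fire.

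For the YES direction, fix a cover $\{A_{i_1},\dots,A_{i_k}\}$ and seed the $2k$ vertices representing these subsets. For every element $e$ covered by some chosen $A_{i_j}$, both subset-vertices of $A_{i_j}$ are infected, and by the construction of the connection gadget at the bottom of Figure~\ref{fig:modificationsetcover}, each vertex of the two cliques representing $e$ is infected independently with probability $a_2^7$. A Chernoff bound gives at least $\tfrac12 a_2^7 m$ infected vertices per clique with probability $1-e^{-\Omega(n)}$, and because $m=\Theta(n^{c+2})$ was chosen so that $a_{\lfloor a_2^7 m/2\rfloor}\geq p^\ast-\varepsilon$, the cascade then raises every clique vertex to infection probability at least $p^\ast-\varepsilon$. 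Union-bounding over the $2n$ cliques, all of them are simultaneously activated with probability $p_{\text{act}}=1-o(1)$. Conditional on full activation, every input set of each of the $2M_2$ modified AND gadgets has its vertices infected independently with probability at least $\tfrac12 a_2$, so by clause~2 of Definition~\ref{modifiedAND} each AND gadget output is infected with probability at least $\tfrac12 a_2$. Every $M_1$-vertex has exactly the two outputs of such a pair as its only neighbors, so both are infected with probability at least $\tfrac14 a_2^2$, and given both infected the vertex fires with probability $a_2$. Summing over the $M_1 M_2=n^{c+12}$ vertices of the $M_1$ bundles yields the lower bound $p_{\text{act}}\cdot\tfrac14 a_2^3\cdot n^{c+12}\geq \tfrac14 a_2^3 n^{c+12}$ for large $n$.

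For the NO direction, I bucket any placement of $2k$ seeds as $(k_1,k_2,k_3,k_4,k_5)$ across the subset vertices, connection gadgets, clique vertices, AND gadgets, and $M_1$ bundles. The key structural claim is that, because $a_1=0$, no single isolated infected vertex can propagate; in particular, activating either clique of an element $e$ requires paying at least two seeds along any activation chain reaching $e$, so the seeds in $k_1+k_2+k_3$ can effectively activate at most $\lfloor(k_1+k_2+k_3)/2\rfloor\leq k$ elements. Since the SetCover instance is NO, some element $e^\ast$ has both of its cliques unactivated, meaning every vertex there has infection probability $0$. In each of the $2M_2$ modified AND gadgets, the level-$1$ $(2,\Lambda_0,f)$-AND gadget attached to the $e^\ast$-cliques therefore has an input set of probability $0$ and outputs $0$ by clause~1 of Definition~\ref{modifiedAND}; since the tower is a binary tree and $e^\ast$ lies in exactly one subtree at each level, the ``$0$'' signal propagates all the way to the root. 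Thus every AND gadget that does not itself contain a seed outputs probability $0$, contributing nothing to its $M_1$-neighbor (which needs \emph{both} inputs infected since $a_1=0$). Charging the worst case on the at most $k_4$ ``corrupted'' AND gadgets and the at most $\lfloor k_4/2\rfloor$ pairs both of whose AND gadgets are corrupted gives $k_4\cdot O(n^{c+1})+(k_4/2)\cdot M_1 = O(k\cdot n^{c+10})$ infections; the $k_5$ direct seeds in $M_1$ bundles contribute at most $k_5$ more (they cannot spread with $a_1=0$), and the set-cover part plus the clean AND-gadget internal vertices contribute only $O(n^{c+4})$, which is absorbed.

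The main obstacle is the NO-side accounting, namely making rigorous the informal claim that ``activating a clique costs two seeds'' in the presence of adversarial mixed placements---for example, one seed on a subset vertex together with seeds inside the connection gadget or inside one of the two cliques. The clean version of this budget uses $a_1=0$ in an essential way: any directed propagation path that starts from a single seed vanishes at the first vertex with only one infected neighbor, so every chain from a seed to a clique activation must be witnessed by a second seed somewhere along it, which is exactly the two-seeds-per-element charge. Once this case analysis is formalized in the style of the $(k_2,k_3,k_4)$ argument in the proof of Lemma~\ref{YESNOcompare}, the rest of the NO-case bound is a bookkeeping step, and the YES-case bound is essentially a Chernoff-plus-probability calculation.
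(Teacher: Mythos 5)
Your overall plan matches the paper's own proof closely in both directions — seeding subset pairs, activating both cliques per covered element, pushing $\geq \tfrac12 a_2$ through the tower of modified AND gadgets, and for the NO side bucketing the $2k$ seeds and concluding that some element's clique pair is never activated so that the AND-gadget tower outputs $0$. Two issues, one minor and one that you yourself flagged.

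First, a small slip in the YES bound: you write $p_{\text{act}}\cdot\tfrac14 a_2^3\cdot n^{c+12}\geq \tfrac14 a_2^3 n^{c+12}$, but $p_{\text{act}}<1$, so the inequality points the wrong way. (The paper's own proof simply drops the $p_{\text{act}}$ factor; the honest fix is to state the bound as $(1-o(1))\tfrac14 a_2^3 n^{c+12}$, which is still order $n^{c+12}$ and preserves the required $\Omega(n)$ gap against the NO bound.)

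Second, the gap you identified in the NO-side accounting is real and is exactly where the paper's proof does extra work that your proposal does not. You claim that only the $k_1+k_2+k_3$ seeds can activate cliques and so at most $\lfloor(k_1+k_2+k_3)/2\rfloor\leq k$ elements are activated. But the $k_4$ seeds sitting inside the Level-1 $(2,\Lambda_0,f)$-AND gadgets are connected to clique vertices by \emph{undirected} edges, and a seed on a middle vertex $w_i$ of such a gadget, combined with a seed inside the clique, gives some clique vertex two infected neighbors — so $k_4$ seeds can participate in clique activation and your budget argument is not self-contained. The paper closes this by an exchange argument: activating both cliques of an element through the Level-1 AND gadget requires at least $3$ seeds (one on the gadget plus one in each clique), which is dominated by spending $2$ seeds on the subset pair directly; therefore one may assume WLOG that all of $k_1,k_2,k_3,k_4$ are placed on subset pairs, and the total budget $k_1+k_2+k_3+k_4\leq 2k$ then yields at most $k$ chosen subsets, hence an uncovered element. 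Your concluding paragraph gestures at this (``every chain from a seed to a clique activation must be witnessed by a second seed''), but without the explicit $3$-vs-$2$ comparison the claim ``$k_4$ is orthogonal to clique activation'' is unsupported; once you add that exchange step, your charging of $k_4\cdot O(n^{c+1})+\lfloor k_4/2\rfloor\cdot M_1$ for corrupted gadgets is actually a tighter form of the paper's $(k_4+k_5)(O(n^{c+1})+M_1)$ bound and the rest of the NO case goes through.
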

\begin{proof}
  If the \textsc{SetCover} instance is a {\sf YES} instance, we choose the $2k$ seeds representing the $k$ subsets, and all the $2n$ cliques will be activated such that each vertex in all these clique will be infected with probability $p^\ast-\varepsilon$.
  Since $p^\ast\geq a_2$, we have $p^\ast-\varepsilon>\frac12a_2$ as $\varepsilon$ is sufficiently small due to large size of $m$.
  All the $2M_2$ $\left(2n,(2n)^c,f\right)$-AND gadgets fall into case (2), so that the output vertices are infected with probabilities at least $\frac12a_2$.
  The $M_1$ vertices in each of the $2M_2$ copies of the verification part are connected to two vertices with infection probabilities at least $\frac12a_2$, so the total expected number of infected vertices in $G$ is at least
  $$M_2\times \underbrace{\left(\frac12a_2\right)^2}_{\text{probability both output vertices are infected}}\times\underbrace{(a_2M_1)}_{\text{expected num of infections in }M_1\text{ vertices}}=\frac14a_2^3\cdot n^{c+12}.$$

  If the \textsc{SetCover} instance is a {\sf NO} instance, consider any choice of $2k$ seeds with $k_1$ seeds in the vertices representing subsets, $k_2$ seeds in the connection gadgets between vertices representing subsets and vertices in the cliques, $k_3$ seeds in the $2n$ cliques, $k_4$ seeds in those $(2,\Lambda_0,f)$-AND gadgets at Level~$1$ of the $\left(2n,(2n)^c,f\right)$-AND gadgets, and $k_5=2k-k_1-k_2-k_3-k_4$ seeds in the remaining part of the verification parts (the $(2,\Lambda_0,f)$-AND gadgets at the remaining levels and the $M_1$ vertices connecting to the $\left(2n,(2n)^c,f\right)$-AND gadgets).
  Again, we first prove that at least one clique will not be activated such that all its vertices are infected with probability $0$.

  First of all, those $k_5$ seeds cannot have effect in activating cliques.
  This is because their influence cannot pass through the $(2,\Lambda_0,f)$-AND gadgets in the first level, as the infection of the output vertex in each $(2,\Lambda_0,f)$-AND gadget cannot further infect the input vertices due to $a_1=0$.

  Secondly, for those $k_1$ and $k_2$ seeds, they are at the vertex-pairs representing the subsets and the gadgets connected to those pairs respectively.
  We call the vertices in those gadgets connecting to a pair \emph{the vertices around the pair}.
  It is easy to see that we need to choose at least $2$ seeds in or around a pair to pick a subset.
  To see this, even if vertex $C$ in the gadget (at the bottom of Figure~\ref{fig:modificationsetcover}) is already infected (which is possible as $C$ belongs to a clique which may have been activated already) such that $D$ and $E$ already have one infected neighbor, we still cannot make both $A$ and $B$ infected by picking only $1$ seed in or around the pair $(A,B)$.
  Thus, we assume without loss of generality that all $k_2$ seeds are on the pairs representing the subsets, as we need at least $2$ seeds in or around a pair $(A,B)$ in which case we can assume the seeds are just at $A$ and $B$.

  For those $k_3$ seeds on the cliques and $k_4$ seeds on the $(2,\Lambda_0,f)$-AND gadgets in the first level, since each AND gadget in the first level takes two sets of vertices from two cliques \emph{representing the same element in $U$}, we need at least $3$ seeds to activate two cliques representing the same element in $U$: one in the middle of the AND gadget, and one in each of the two cliques (such that the two vertices connecting to the seed in the middle of the AND gadget have two infected neighbors, and stand a chance to activate the two cliques).
  In contrast, we only need $2$ seeds to activate these two cliques, by choosing the pair of vertices representing the subset covering the element that these two cliques represent.
  Therefore, we can assume that those $k_3$ and $k_4$ seeds are also on those pairs representing subsets.

  Since $k_1+k_2+k_3+k_4\leq2k$ and the \textsc{SetCover} instance is a {\sf NO} instance, by the fact that we need $2$ seeds to pick a subset, we conclude that at least one clique will not be activated, and the vertices in this clique are infected with probability $0$.

  By the effect of the $\left(2n,(2n)^c,f\right)$-AND gadget, except for those (at most) $k_4+k_5$ AND gadgets containing seeds, the output vertices of the remaining $2M_2-k_4-k_5$ AND gadgets will be infected with probability $0$, which have no effect on those $M_1$ vertices.
  Therefore, even if all the vertices in the set cover part, the $k_4+k_5$ copies of the verification parts, and the $2M_2$ $\left(2n,(2n)^c,f\right)$-AND gadgets are infected, the total number of infected vertices cannot exceed
  $$\underbrace{2K +6K (2n)m+2nm}_{\text{size of the set cover part}}+2M_2\underbrace{(2n)^{c+1}}_{\text{size of an AND gadget}}+(k_4+k_5)\underbrace{\left((2n)^{c+1}+M_1\right)}_{\text{size of a verification part}}=O\left(kn^{c+10}\right),$$
  which concludes the lemma.
\end{proof}

Noticing that the total number of vertices in $G$ is
$$N=2K +6K (2n)m+2nm+M_2\left((2n)^{c+1}+M_1\right)=\Theta\left(n^{c+12}\right),$$
and $kn^{c+10}=O(n^{c+11})$.
we conclude Theorem~\ref{hardnessproof} in the case $a_1=0$ by setting $\tau=\frac1{c+12}$.

\section{Conclusion}
We show the hardness of approximating \infmax in several settings restricting the network structure or the cascade model.
Before our results there was some hope that the hardness of nonsubmodular influence maximization was only caused by the hardness of detecting community structure within the network.
However, our results show that even for very plain community structures, \infmax can remain hard.
Moreover, in our construction, even if the algorithm is told the community structure, the problem remains hard.
We show that it is the bidirectional nature of contagions which renders the problem hard.

We also show the inapproximability of \infmax even in the restrictive universal local influence model (Definition~\ref{defi:ulim}) with any 2-quasi-submodular local influence function $f$, even if $f$ is almost submodular.
Since it turns out assumptions on either the graph topology or the cascade model do not really make \infmax easy, a natural question is that, what if we make assumptions on both?

We conclude with the following open problem: considering the universal local influence model in Definition~\ref{defi:ulim} with 2-quasi-submodular $f$ on the stochastic hierarchical blockmodel\footnote{It does not make much sense to consider universal local influence model $I_f^G$ on the hierarchical blockmodel, as $f$ depends only on the \emph{number} of infected neighbors and ignores the weights of edges connecting to the neighbors, while the hierarchical blockmodel considers weighted graphs.},
does there exist a 2-quasi-submodular $f$, such that \infmax is NP-hard to approximate to within a constant factor?
Or is it the case that for any 2-quasi-submodular $f$, there exists a constant factor approximation to \infmax?

\section*{Acknowledgement}
We would like to thank the anonymous reviewers for their helpful and constructive comments.

\bibliographystyle{plainnat}
\bibliography{reference}

\newpage
\appendix
\section{A Variant of Theorem~\ref{hardnessproof}}
\label{sect:hardness_2submod_compare}
\citet{li2017influence} considered a model where there is only a sublinear fraction of vertices admitting nonsubmodular local influence functions that are almost submodular.
They showed that, even though this appears to make the cascade model globally closer to submodularity, \infmax is still NP-hard to approximate to within $N^\tau$ for certain constant $\tau$.
In this section, we adapt Theorem~\ref{hardnessproof} to a variant that is of a similar style of this.

\begin{theorem}\label{hardnessproof_adapted}
  Consider the \infmax problem $(G,F,\D,k)$.
  For any fixed 2-quasi-submodular $f$, any fixed submodular function $g:\mathbb{Z}_{\geq0}\to[0,1]$ with $g(1)>g(0)=0$, and any $\gamma\in(0,1)$, there exists a constant $\tau$ depending on $f$ and $\gamma$ such that, even if $\D$ is the uniform distribution on $[0,1]$, $f_v\in F$ is symmetric with either $f_v=f$ or $f_v=g$, and $|\{v\in V:f_v=f\}|\leq N^\gamma$, it is NP-hard to distinguish between the following two cases:
  \begin{itemize}
      \item \yes: there exists a seed set $S$ with $|S|=k$ such that $\sigma_{F,\D}^G(S)=\Theta(N)$;
      \item \no: for any seed set $S$ with $|S|=k$, we have $\sigma_{F,\D}^G(S)=O(N^{1-\tau})$.
  \end{itemize}
\end{theorem}
\begin{proof}
We again discuss two different cases: $a_1=f(1)>0$ and $a_1=f(1)=0$.

For the first case, the reduction in Sect.~\ref{sect:proof2} can be modified to prove this theorem (if we only need to prove this theorem for directed graphs, the much simpler reduction in Sect.~\ref{sect:proof1} can be used), with the following modifications.
\begin{itemize}
    \item Except for those $M_1$ vertices on the right-hand side of Fig.~\ref{fig:highlevel} in each of the $M_2$ copies of the verification part, the remaining vertices are equipped with $f$. Those $M_1$ vertices in each of the $M_2$ copies are equipped with $g$.
    \item Change $M_1=n^{(30c_1+30c_2+70)d}$ (as it is in Sect.~\ref{sect:proof2}) to $M_1=n^{\frac1\gamma(30c_1+30c_2+70)d}$, where $n$ is the number of elements in the \SC instance and $c_1,c_2,d$ are the constants in Lemma~\ref{ANDlemma} and Lemma~\ref{directededgelemma1}.
\end{itemize}

Recall from Sect.~\ref{sect:proof2} that the set cover part has $O(n^{(3c_1+3c_2+7)d+c_1+c_2+4})$ vertices and the AND gadget has $O(n^{c_1+c_2+1})$ vertices, the total number of vertices in $G$ is
$$N=O\left(n^{(3c_1+3c_2+7)d+c_1+c_2+4}\right)+M_2\left(O\left(n^{c_1+c_2+1}\right)+M_1\right)=\Theta\left(n^{\frac1\gamma(30c_1+30c_2+70)d+2}\right),$$
which is of polynomial size.
Moreover, the total number of vertices that are equipped with $f$ is
$$O\left(n^{(3c_1+3c_2+7)d+c_1+c_2+4}\right)+M_2O\left(n^{c_1+c_2+1}\right)=o\left(n^{(30c_1+30c_2+70)d}\right)\ll N^\gamma.$$

The remaining part of the proof is almost identical to the proof of Lemma~\ref{YESNOcompare2}.
The only difference is that, if the output of the AND gadget, the vertex $v$ in Fig.~\ref{fig:highlevel}, is infected, then each of those $M_1$ vertices is now infected with probability $g(1)$, instead of $a_1=f(1)$ before.
With this change, when the \SC instance is a \yes instance, the total number of infected vertices (for properly choosing seeds corresponding to the subsets) become
$$p_{\text{activated}}g(1)(p_2-\varepsilon_2)M_1M_2=\Theta\left(n^{\frac1\gamma(30c_1+30c_2+70)d+2}\right)=\Theta(N),$$
where $p_{\text{activated}}$ is the same as it is in the proof of Lemma~\ref{YESNOcompare2}, $p_2,\varepsilon_2$ are the parameters for the AND gadget which are the same as defined in Sect.~\ref{sect:proof2}.
When the \SC instance is a \no instance, following the same analysis, the upper bound for the total number of infected vertices can also be computed by Equation~(\ref{eqn:yesnocompare}), with $M_1$ replaced by the modified value $n^{\frac1\gamma(30c_1+30c_2+70)d}$ here.
In particular, the first three terms in (\ref{eqn:yesnocompare}) are dominated, the fourth and the fifth terms are both at most $O(n^{\frac1\gamma(30c_1+30c_2+70)d+1})$.
Therefore, we conclude the theorem for the case $a_1>0$ by setting $\tau=\frac{1}{\frac1\gamma(30c_1+30c_2+70)d+2}$.

For the second case $a_1=f(1)=0$, the reduction is almost the same as it is in Sect.~\ref{sect:proof3}, except for the following changes.
\begin{itemize}
    \item Those $M_1$ vertices in each of the $M_2$ copies are equipped with $g$, while the remaining vertices are equipped with $f$.
    \item Change the value of $M_1$ from $n^{c+10}$ (as it is in Sect.~\ref{sect:proof3}) to $n^{\frac1\gamma(c+10)}$.
\end{itemize}
Following the same analysis in Sect.~\ref{sect:proof3}, we can see that the graph has $N=n^{\frac1\gamma(c+10)+2}$ vertices, and there are only $O(n^{c+4})\ll N^\gamma$ vertices that have $f$ as their local influence functions.
Corresponding to Lemma~\ref{YESNOcompare3}, we can show that the expected number of infections is at least $\frac14a_2^2g(2)n^{\frac1\gamma(c+10)+2}=\Theta(N)$ when appropriately choosing $2k$ seeds for a \yes instance, and the expected number of infections can be at most $O(kn^{\frac1\gamma(c+10)})$ for a \no instance.
By noticing $kn^{\frac1\gamma(c+10)}=O(n^{\frac1\gamma(c+10)+1})$ and taking $\tau=\frac{1}{\frac1\gamma(c+10)+2}$, we conclude the theorem for the case $a_1=0$.
\end{proof}

We remark that Theorem~\ref{hardnessproof_adapted} can be viewed as a generalization of the inapproximability result in~\cite{li2017influence} in the following two directions.
\begin{enumerate}
    \item Our result holds for any $f$ that is fixed in advance, while $f$ is set to $f(1)=\frac{1-\varepsilon}2$ and $f(2)=1$ in~\cite{li2017influence} (where $\varepsilon$ is an arbitrary constant fixed in advance).
    \item Our result holds for undirected graphs, while it is unknown if the proof in~\cite{li2017influence} can be adapted to show the same inapproximability result for undirected graphs (notice that an undirected graph can be viewed as a special case of a directed graph with anti-parallel edges, so an inapproximability result for a more special case is stronger).
\end{enumerate}

\end{document}